\documentclass[preprint,12pt]{elsarticle}

\usepackage[utf8]{inputenc}
\usepackage[english]{babel}
\usepackage{amsmath,amssymb,amsfonts,bm}
\usepackage{graphicx}
\usepackage{booktabs}
\usepackage{stmaryrd}
\usepackage{tabularx}
\usepackage{array}
\usepackage{tikz}
\usepackage{subcaption}
\usepackage{epstopdf}
\usepackage{verbatim}
\usepackage{placeins}
\usepackage{amsthm}
\usepackage{enumitem}
\usepackage[toc,page]{appendix}
\usetikzlibrary{arrows.meta,positioning}

\usepackage{algorithm}
\usepackage{algpseudocode}
\usepackage{caption}
\usepackage{siunitx}
\usepackage{xcolor}  

\newtheorem{theorem}{Theorem}[section]
\newtheorem{definition}{Definition}[section]
\newtheorem{proposition}{Proposition}
\theoremstyle{remark}
\newtheorem*{remark}{Remark}
\theoremstyle{plain}
\newtheorem{corollary}{Corollary}

\begin{document}
	
	\begin{frontmatter}
		
		\title{Hybrid Reconstruction of Admissible Spline Spaces from Locally Modified Unclamped Patches for Isogeometric Analysis}
		
		\author[1]{Christopher Provatidis}
		\author[2]{Ioannis Dimitriou}
		
		\affiliation[1]{Professor of Mechanical Engineering, National Technical University of Athens}
		\affiliation[2]{PhD Candidate, National Technical University of Athens}
		
		\date{\today}
		
		\begin{abstract}
			The inherent global structure of NURBS knot vectors restricts localized spline modifications without propagating effects to adjacent regions, limiting the flexibility of adaptive Isogeometric workflows. To overcome this, we propose a decoupled reconstruction framework that temporarily decomposes a NURBS representation into independent local "Active Sections", enabling arbitrary local knot insertion, degree elevation, and basis modifications while preserving exact CAD geometry. However, these independent modifications inevitably violate inter-patch continuity, requiring a robust algebraic recovery of global admissibility. We introduce a novel reconstruction methodology that constructs a positive Hybrid Reconstruction Operator: anchor degrees of freedom are identified via QR pivoting, after which a sequence of linear programming problems with distance-based regularization generates strictly non-negative nullspace vectors, while a subsequent non-negative least-squares solve enforces a precise partition of unity. Crucially, to ensure computational efficiency and preserve locality, we implement a hierarchical pairwise condensation strategy that freezes columns already satisfying new interface constraints, confining the optimization exclusively to the active degrees of freedom at each merge step. The resulting hybrid basis is, by construction, strictly non-negative, spans the exact constrained nullspace, preserves partition of unity to machine precision, and reproduces the original geometry exactly. Numerical benchmarks, including a nonlinear diffusion problem on multi-patch curves with heterogeneous local polynomial degrees, confirm optimal convergence rates and demonstrate the framework's ability to seamlessly combine local geometric flexibility with globally consistent approximation spaces. By entirely decoupling local spline editing from the enforcement of continuity, this methodology provides a highly efficient, mathematically principled, and universally applicable tool for adaptive Isogeometric Analysis.
		\end{abstract}
		
		\begin{graphicalabstract}
		\end{graphicalabstract}
		
		\begin{highlights}
			\item A local hybrid B-spline framework enables exact geometry-preserving decomposition
			\item Null-space reconstruction guarantees exact recovery of the original geometry
			\item Active–frozen coordinate decomposition reduces reconstruction to local interface updates
			\item A minimal local knot vector of 2p+2 knots is proved for each active section
			\item The formulation supports efficient local refinement while preserving spline properties
		\end{highlights}
		
		\begin{keyword}
			Hybrid spline spaces \sep
			NURBS decomposition \sep
			Active Sections \sep
			Null-space reconstruction \sep
			Local knot vectors \sep
			Isogeometric analysis
		\end{keyword}
		
	\end{frontmatter}
	
	
\section{Introduction}
\label{sec:1}

Isogeometric Analysis (IGA)~\cite{hughes2005isogeometric} revolutionized computational mechanics by unifying Computer-Aided Design (CAD) geometry and numerical simulation through the use of NURBS basis functions. This unification enables exact geometric representation, high-order continuity, and superior per-degree-of-freedom accuracy compared to classical finite elements~\cite{cottrell2009isogeometric, zienkiewicz2005finite}. However, the same global tensor-product knot structure that provides these advantages also imposes a fundamental limitation: local spline modifications, such as knot insertion or degree elevation, inevitably propagate beyond the region of interest, affecting adjacent control variables and knot spans. Consequently, adaptive refinement or localized geometry editing in classical NURBS remains inherently non-local, restricting the flexibility required for efficient engineering analysis.


Over the past two decades, numerous spline technologies have been developed
to overcome the limitations of tensor-product NURBS. These include
hierarchical and truncated hierarchical B-splines (THB-splines),
T-splines, Locally Refined (LR) splines, U-splines, extraction-based
frameworks, and multi-patch coupling methodologies
\cite{vuong2011hierarchical,giannelli2012thb,
sederberg2003tsplines,bazilevs2006isogeometric,
dokken2013locally,herrema2018usplines,
borden2011isogeometric,popp2012dual}.
Collectively, these developments have significantly expanded the local
adaptivity and flexibility of Isogeometric Analysis. A detailed discussion
of these approaches is provided in Section~\ref{sec:3}.

Despite their diversity, all these approaches share a common paradigm: admissibility of the approximation space is maintained continuously throughout the construction or refinement process. Whether through hierarchical levels, T-junction constraints, or patch coupling, the continuity and compatibility of the spline space are enforced at every step. This paradigm, while mathematically sound, inherently couples local geometric manipulation with global admissibility enforcement. The present work originates from a fundamentally different question: Can we temporarily abandon global admissibility, perform arbitrary local spline manipulations independently, and only afterwards reconstruct a globally admissible space?

This question is motivated by a key observation: the exact geometry of a NURBS representation does not depend on global connectivity, but on the collection of local basis functions and their control data. If a local spline region can be isolated without altering the geometric mapping, then that region may be modified independently while preserving exact CAD geometry. However, such independent modifications inevitably destroy inter-patch continuity, rendering the resulting collection of local entities unsuitable for numerical analysis. We therefore propose a reconstruction-oriented framework that separates local spline manipulation from admissible-space construction. The methodology comprises three distinct stages: (i) decomposition of a NURBS representation into independent local entities called Active Sections---compact spline descriptions associated with a local knot support of size \(2p+2\), each retaining its own local knot vector, control points, and weights; (ii) independent local modification of each Active Section through arbitrary knot insertion, degree elevation, or basis modification, entirely independent of neighboring sections and preserving exact CAD geometry; and (iii) algebraic reconstruction of a globally admissible approximation space from the collection of locally modified Active Sections, through the assembly of continuity constraints into a global constraint operator \(\mathbf{C}_{tot}\) and the construction of a Hybrid Reconstruction Operator \(\mathbf{T}_{hybrid}\) that generates admissible basis functions while enforcing strict non-negativity and partition of unity.

The principal contribution of this work is not the introduction of a new spline technology, but rather a generic algebraic framework for reconstructing admissible spaces after arbitrary local spline manipulations. The novelty lies in: (i) the decoupling of local geometry editing from continuity enforcement, performed at different stages; (ii) an optimization-driven basis construction, where each column of the Hybrid Reconstruction Operator is obtained by solving a linear programming problem with distance-based regularization, promoting locality and strict non-negativity; (iii) an efficient hierarchical assembly strategy, where a pairwise condensation procedure identifies active and frozen columns at each merge step, confining optimization exclusively to degrees of freedom affected by new constraints, thereby ensuring computational scalability; and (iv) mathematical guarantees that the resulting basis is strictly non-negative, forms a partition of unity, spans the exact constrained nullspace, and preserves geometric exactness to machine precision.

The proposed framework complements, rather than replaces, existing spline technologies. Hierarchical methods, T-splines, and U-splines can be integrated within individual Active Sections prior to reconstruction, making the framework orthogonal to existing refinement strategies. In this sense, the methodology provides an additional layer of flexibility that can be superimposed on any spline representation capable of expressing continuity constraints algebraically. The present work focuses on the one-dimensional setting, where the reconstruction process can be examined with full clarity and controlled numerical experiments. Extension to multi-dimensional configurations introduces additional interface topology challenges, which are deferred to future work.

The remainder of the paper is organized as follows. Section 2 summarizes the necessary mathematical background. Section 3 reviews related spline technologies and coupling methodologies. Section 4 introduces the decomposition framework and the Active Section concept. Section 5 presents the reconstruction methodology and the construction of the Hybrid Reconstruction Operator. Section 6 presents numerical examples, including geometry preservation, continuity recovery, and a nonlinear diffusion benchmark. Section 7 discusses the properties, implications, limitations of the framework and outlines future research directions.

\section{Mathematical Background}
\label{sec:2}

We briefly summarize the essential concepts of B-spline and NURBS representations, local knot support, and spline spaces for analysis, which form the foundation of the proposed decomposition and reconstruction framework.

Let \(\Xi = \{\xi_1, \xi_2, \dots, \xi_{n+p+1}\}\) be a non-decreasing knot vector, where \(p\) denotes the polynomial degree and \(n\) the number of basis functions. The zeroth-order B-spline basis functions are defined by
\[
N_i^0(\xi) = \begin{cases}
1, & \xi_i \leq \xi < \xi_{i+1},\\
0, & \text{otherwise},
\end{cases}
\]
and higher-order basis functions are obtained recursively through the Cox--de Boor relation~\cite{piegl1997nurbs}:
\begin{equation}
N_i^p(\xi) = \frac{\xi - \xi_i}{\xi_{i+p} - \xi_i} N_i^{p-1}(\xi) + \frac{\xi_{i+p+1} - \xi}{\xi_{i+p+1} - \xi_{i+1}} N_{i+1}^{p-1}(\xi) \label{eq1} \,.
\end{equation}
B-spline basis functions, given by Eq.~\eqref{eq1}, possess several properties that are central to the present work: local support (each basis function vanishes outside a compact interval), non-negativity (\(N_i^p(\xi) \ge 0\)), partition of unity (\(\sum_i N_i^p(\xi) = 1\)), and controllable continuity through knot multiplicities. Specifically, a knot of multiplicity \(m\) yields continuity \(C^{p-m}\) at that location.

Non-Uniform Rational B-Splines (NURBS) extend B-splines through the introduction of positive weights \(w_i\). The rational basis functions are defined as
\begin{equation}
R_i(\xi) = \frac{N_i^p(\xi) w_i}{\sum_{j=1}^n N_j^p(\xi) w_j} \label{eq2} \,,
\end{equation}
and the geometric mapping is given by
\begin{equation}
\mathbf{x}(\xi) = \sum_{i=1}^n R_i(\xi) \mathbf{P}_i \label{eq3} \,,
\end{equation}
where \(\mathbf{P}_i\) are the control points. NURBS representations preserve the exact geometry of many engineering objects, including circles, conics, and free-form CAD models. This exactness plays a central role in the decomposition strategy proposed later in this work.




An important observation for the present framework concerns the local
support of B-spline basis functions. Each basis function of degree \(p\)
has support extending over at most \(p+1\) consecutive knot spans, and
exactly \(p+1\) basis functions are non-zero on the interior of any
nonzero knot span.

In the present framework, each individual nonzero knot span is associated
with an Active Section. The Active Section is completely characterized by
an inherited local knot vector containing
\[
2p+2
\]
knot entries: \(p\) knots to the left of the span, the two knots defining
the span itself, and \(p\) knots to the right. This local vector contains
all knot data required to evaluate the \(p+1\) basis functions that are
non-zero on the corresponding span.

Consequently, each Active Section provides a compact and self-contained
local spline description. Interactions between neighboring Active
Sections are introduced subsequently through algebraic continuity
constraints at their common interfaces. This compact local support forms
the basis of the decomposition and reconstruction procedures developed
in Sections~4 and~5.

In Isogeometric Analysis, the same basis functions employed for geometry representation are also used for field approximation~\cite{hughes2005isogeometric, cottrell2009isogeometric}. A generic approximation is written as
\begin{equation}
    u_h(\xi) = \sum_{A=1}^n R_A(\xi) a_A \label{eq4} \,,
\end{equation}
where \(a_A\) denote the control variables associated with the basis functions. The admissibility of the approximation space depends on continuity requirements imposed between neighboring spline regions. In classical IGA, these continuity conditions are embedded in the global spline construction. The present work focuses on reconstructing such admissible spaces after local spline entities have been manipulated independently, a process that temporarily destroys the original continuity relations.

\section{Related Work}
\label{sec:3}

The development of Isogeometric Analysis has been accompanied by extensive research on spline technologies for local refinement, adaptive discretization, and the construction of admissible approximation spaces. These developments have significantly extended the capabilities of classical tensor-product NURBS while preserving the geometric exactness that characterizes the isogeometric paradigm. Existing approaches may be broadly classified into local refinement technologies, extraction-based formulations, and multipatch coupling methods.

Hierarchical B-splines and their truncated variant (THB-splines) constitute one of the most widely adopted approaches for adaptive local refinement in Isogeometric Analysis~\cite{vuong2011hierarchical,giannelli2012thb}. Their central idea is the hierarchical activation of basis functions over selected regions of the computational domain, providing local refinement while preserving partition of unity, local support, and linear independence. T-splines~\cite{sederberg2003tsplines,bazilevs2006isogeometric,scott2011isogeometric} and Locally Refined (LR) splines~\cite{dokken2013locally} address the limitations of tensor-product refinement by introducing local topological modifications through T-junctions or locally refined mesh structures. Collectively, these spline technologies have considerably improved the flexibility of local refinement while maintaining admissible spline spaces throughout the refinement process.

A complementary line of research has focused on extraction-based formulations and spline constructions over unstructured topologies. Bézier extraction~\cite{borden2011isogeometric,scott2012analysis} provides an efficient algebraic framework for expressing spline basis functions in terms of Bernstein polynomials, thereby facilitating integration with finite element infrastructures. U-splines~\cite{herrema2018usplines,herrema2017adaptive} further extend this philosophy by enabling analysis on unstructured spline meshes while preserving compatibility with extraction-based implementations. More recent developments, including analysis-suitable spline constructions and multi-resolution formulations~\cite{takacs2025multiresolution,collin2016analysis,hughessangallitakacstoshniwal2021}, have broadened the range of admissible spline spaces available for geometric design and numerical analysis.

Another important research direction concerns the treatment of multipatch geometries. Mortar methods~\cite{wohlmuth2001mortar,popp2012dual}, Nitsche-type formulations~\cite{evans2009nitsche,apostolatos2014nitsche,hansbo2005nitsche}, weak coupling techniques~\cite{ruess2014weak,coox2017robust}, and domain decomposition methods~\cite{bernardi1993domain,quarteroni1999domain} provide powerful frameworks for enforcing continuity between independently parameterized spline patches. In addition, substantial effort has been devoted to the construction of smooth spline spaces over multipatch domains, including geometrically continuous parameterizations and analysis-suitable $C^1$ spline spaces~\cite{kapl2017isogeometric,collin2016analysis}. These methodologies have greatly expanded the applicability of Isogeometric Analysis to complex geometries while preserving high-order continuity.

Although the above technologies differ substantially in their mathematical construction and intended applications, they share a common principle: the admissibility of the approximation space is maintained throughout refinement, enrichment, or patch coupling. Local modifications are therefore performed within a spline space that remains globally admissible at every stage of the construction process.

The methodology proposed in the present work adopts a different perspective. Instead of constructing or refining a globally admissible spline space incrementally, it deliberately separates local spline manipulation from admissibility recovery. Independent local spline entities, referred to as Active Sections, are first modified without enforcing inter-section continuity. Global admissibility is subsequently recovered through an algebraic reconstruction procedure based on the assembly of continuity constraints and the construction of a positive Hybrid Reconstruction Operator. Consequently, the proposed framework is complementary to existing spline technologies rather than an alternative spline basis. Any spline representation capable of expressing continuity constraints algebraically may be employed within individual Active Sections prior to reconstruction. The principal contribution of the present work is therefore a generic algebraic methodology for reconstructing a globally admissible approximation space from independently modified local spline entities while preserving positivity, partition of unity, and exact geometric representation.

\section{Proposed Methodology}
\label{sec:4}

\subsection{Motivation}
\label{sec:4.1}

The central observation underlying the present work is that the exact geometry of a NURBS representation does not depend on the existence of a single globally connected spline description. Instead, the geometry is determined by the collection of local spline functions and their associated control data. Consequently, if a local spline region can be isolated without altering the geometric mapping, then that region may be manipulated independently while preserving the exact CAD representation. This observation motivates the introduction of the Active Section concept and the subsequent decomposition and reconstruction framework.

\subsection{Local Support and Definition of Active Sections}
\label{sec:4.2}

A fundamental property of B-spline basis functions is their compact support. Each basis function of degree \(p\) is non-zero only over \(p+1\) consecutive knot spans. Consequently, on any given knot span \([\xi_i, \xi_{i+1}]\), exactly \(p+1\) basis functions are non-zero. To evaluate these \(p+1\) basis functions at any point \(\xi \in [\xi_i, \xi_{i+1}]\), the Cox--de Boor recurrence requires a local knot vector that extends \(p\) knots to the left of \(\xi_i\) and \(p\) knots to the right of \(\xi_{i+1}\). Thus, the minimal local knot sequence that completely determines the basis functions on the span is
\begin{equation}
\{\xi_{i-p}, \xi_{i-p+1}, \dots, \xi_i, \xi_{i+1}, \dots, \xi_{i+p+1}\} \,,
\end{equation}
which contains exactly \(2p+2\) knots.

This observation is illustrated in Figure~1 for \(p=2\). The figure shows a clamped global knot vector \(\Xi = \{0,0,0,0.25,0.5,0.75,1,1,1\}\), with dashed vertical lines across the full interval. The chosen knot span \([\xi_i, \xi_{i+1}] = [0.25, 0.5]\) is highlighted in red. On this span, exactly \(p+1=3\) basis functions are non-zero (coloured curves). The local knot vector required for their evaluation consists only of the \(2p+2=6\) knots immediately surrounding the span, which are marked in blue at the bottom of the figure: \(\{0,0,0.25,0.5,0.75,1\}\) (i.e., \(p=2\) knots to the left, the two span endpoints, and \(p=2\) knots to the right). All other knots of the global vector are irrelevant for the local shape of these active functions.

\begin{figure}[htbp]
    \centering
    
     \includegraphics[width=1.0\linewidth]{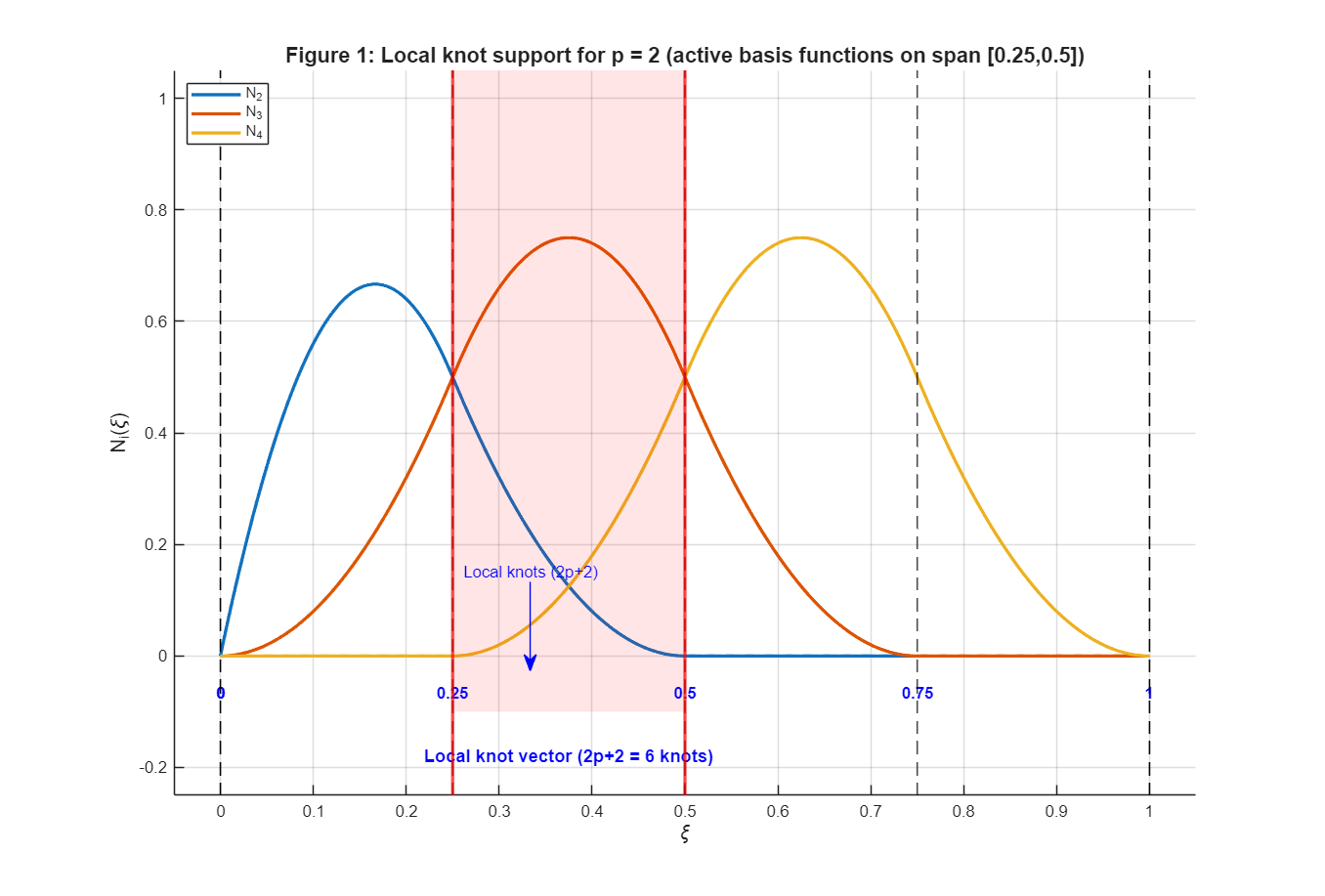} 
    \caption{
    Local knot support for B-spline basis functions of degree $p=2$.
    The global clamped knot vector is $\Xi = \{0,0,0,0.25,0.5,0.75,1,1,1\}$, indicated by dashed vertical lines across the entire interval.
    The chosen knot span $[\xi_i, \xi_{i+1}] = [0.25, 0.5]$ is highlighted in red.
    On this span, exactly $p+1 = 3$ basis functions (coloured curves) are non-zero.
    Their evaluation requires only the $2p+2 = 6$ local knots immediately surrounding the span, which are marked in blue at the bottom of the figure: $\{0,0,0.25,0.5,0.75,1\}$ (i.e., $p=2$ knots to the left, the two span endpoints, and $p=2$ knots to the right).
    All other knots of the global vector are irrelevant for the local shape of the active basis functions on this span.
    This compact local neighborhood directly motivates the Active Section concept introduced below.
    } 
    \label{fig0}
\end{figure}

The figure clearly demonstrates that the evaluation on a given span depends only on this local sub‑sequence. This local window has a crucial implication: any modification confined to the interior of this window—provided the outermost \(p\) knots on each side remain unchanged—does not affect the basis functions outside the window. The first \(p\) and last \(p\) knots therefore act as buffers that isolate the internal spline data from the surrounding geometry. We exploit this property to define the Active Section.

\begin{definition}
For a spline space of degree \(p\), the \emph{Active Section} associated
with a nonzero knot span \([\xi_i,\xi_{i+1}]\) is the smallest local
spline entity that completely determines all basis functions non-zero
on that span. Specifically, the \(p+1\) basis functions that are non-zero
on \([\xi_i,\xi_{i+1}]\) can be evaluated using only the local knot vector
\begin{equation}
\Xi_{\mathrm{AS}}^{(i)}
=
\{\xi_{i-p},\xi_{i-p+1},\dots,\xi_{i+p+1}\},
\end{equation}
which contains \(2p+2\) knot entries. The first \(p\) and the last \(p\)
entries form the left and right buffer regions, respectively, whereas
the two central entries \(\xi_i\) and \(\xi_{i+1}\) define the knot span
associated with the Active Section. Local spline modifications may
subsequently be introduced within this span while the two outer buffer
regions are retained.
\end{definition}

This is illustrated in Figure~\ref{fig0}, where the six blue knots fully determine the three active basis functions on the highlighted span.

The extraction of these Active Sections from a parent NURBS representation, and their interpretation as independent unclamped NURBS elements, is described in the following subsection.

The definition of the Active Section introduced above is not an
algorithmic assumption but a direct consequence of the local support
properties of B-spline basis functions. In particular, the local knot
vector containing exactly \(2p+2\) knot entries is the unique minimal
knot sequence required to evaluate all basis functions that are non-zero
over the associated knot span. This fundamental property constitutes the
mathematical basis of the proposed decomposition framework. A complete
proof of the minimality of the \(2p+2\) local knot vector is provided in
Appendix~\ref{app:B}.

\subsection{Extraction of Local Unclamped NURBS Elements}
\label{sec:4.3}

The decomposition procedure starts from an initial NURBS representation that has already undergone the required global \(h\)- or \(p\)-refinement, if necessary (to start the proposed method, the minimum number of elements should be two). The refined parent geometry is assumed to contain at least two knot spans, so that local interface regions can be identified. For a parent spline of degree \(p\), each local entity is extracted using a compact knot support containing \(2p+2\) knot entries. These local entities are referred to as \emph{unclamped NURBS elements}. Each extracted element contains its own local knot vector, local weights, and associated control points inherited from the parent NURBS representation.

The extraction process may be written schematically as
\begin{equation}
\mathcal{N}_{\text{parent}} \longrightarrow \{\mathcal{N}_1^{\text{loc}}, \mathcal{N}_2^{\text{loc}}, \dots, \mathcal{N}_{n_e}^{\text{loc}}\},
\end{equation}
where \(\mathcal{N}_{\text{parent}}\) denotes the refined parent NURBS representation and \(\mathcal{N}_e^{\text{loc}}\) denotes the local unclamped NURBS representation associated with element \(e\). In the present implementation, each local entity is rebuilt as an independent NURBS object using its extracted knot vector, control points, and weights. The result is a structured collection of local NURBS objects,
\begin{equation}
\text{geom} = \{\text{geom}(1), \text{geom}(2), \dots, \text{geom}(n_e)\},
\end{equation}
where each entry represents one unclamped local NURBS element.

This step changes only the representation of the geometry. The local NURBS elements are extracted from the parent representation together with their corresponding control data; therefore the original geometric description is preserved. Subsequent local operations, such as knot insertion or degree elevation, are then performed directly on the individual entries of this local NURBS structure.

\subsection{Geometry Preservation}
\label{sec:4.4}

The decomposition process modifies neither the geometric mapping nor the physical location of any point on the curve. Only the representation changes. Consequently,
\begin{equation}
\mathbf{x}_{\text{original}}(\xi) = \mathbf{x}_{\text{decomposed}}(\xi) \qquad \forall \xi.
\end{equation}
The decomposition therefore constitutes an exact reparameterization of the spline description rather than a geometric approximation. This property is fundamental because it permits local spline operations to be performed without introducing geometric errors.

\subsection{Local Spline Manipulations}
\label{sec:4.5}

Once the NURBS representation has been decomposed into independent Active Sections, local spline operations may be performed directly on the associated local knot vectors. Contrary to classical NURBS refinement procedures, which are typically applied within the context of a global knot vector, the proposed framework treats each Active Section as an autonomous spline entity. Consequently, refinement operations are confined to the selected local region and do not require modification of neighboring sections.

Let
\begin{equation}
\Xi_e =
\left\{
\xi_1^{(e)},\xi_2^{(e)},\ldots,\xi_{n_e}^{(e)}
\right\},
\label{eq:local_knot_vector}
\end{equation}

denote the current local knot vector associated with Active Section
$e$, where initially $n_e=2p+2$.

\subsubsection{Local Knot Insertion}

To preserve the information required for subsequent continuity
reconstruction, the local knot vector is partitioned according to
its positional indices as
\begin{equation}
\Xi_e =
\left\{
\underbrace{
\xi_1^{(e)},\ldots,\xi_p^{(e)}
}_{\text{left buffer}},
\;
\underbrace{
\xi_{p+1}^{(e)},\ldots,\xi_{n_e-p}^{(e)}
}_{\text{locally modifiable region}},
\;
\underbrace{
\xi_{n_e-p+1}^{(e)},\ldots,\xi_{n_e}^{(e)}
}_{\text{right buffer}}
\right\}.
\label{eq:local_knot_partition}
\end{equation}

Thus, the first $p$ and the last $p$ knot entries are retained
unchanged, whereas local spline modifications are restricted to the
central block. In positional form, this partition is
\[
1:p
\;\big|\;
p+1:\mathrm{end}-p
\;\big|\;
\mathrm{end}-p+1:\mathrm{end}.
\]

For the initially extracted Active Section, $n_e=2p+2$, and hence
the central block contains exactly the two endpoints of the original
active knot span. After local knot insertion, $n_e$ may increase,
while the first $p$ and last $p$ entries remain unchanged.

The admissible interval for local knot insertion is therefore
\begin{equation}
I^{(e)}
=
\left[
\xi_{p+1}^{(e)},
\xi_{n_e-p}^{(e)}
\right],
\label{eq:local_interval}
\end{equation}
which defines the locally modifiable region within the Active Section.
Local knot insertion is performed only for knots satisfying
\begin{equation}
\tilde{\xi}
\in
\left(
\xi_{p+1}^{(e)},
\xi_{n_e-p}^{(e)}
\right).
\label{eq:local_knot_insertion}
\end{equation}

The insertion is carried out using the standard NURBS knot insertion
algorithm~\cite{piegl1997nurbs}. Since knot insertion is an exact
geometric operation,
\begin{equation}
\mathbf{x}_{\mathrm{before}}(\xi)
=
\mathbf{x}_{\mathrm{after}}(\xi),
\qquad \forall \xi .
\label{eq:knot_insertion_geometry}
\end{equation}

Unlike clamped spline representations, the local knot vectors employed
in the present work are generally unclamped. Consequently, refinement
cannot be interpreted in terms of preserving endpoint multiplicities
of order $p+1$. Instead, the outer $p$ knot entries on each side are
retained as buffer regions, since they contain the spline information
required for subsequent admissibility reconstruction.

\subsubsection{Local Degree Modification}
\label{sec:4.5.2}

Direct degree elevation of the extracted unclamped NURBS elements has not been implemented in the present work. Instead, local degree modification is achieved through a re-extraction procedure based on globally elevated NURBS representations. Consider an initial NURBS geometry of degree \(p\) together with its corresponding decomposition into local unclamped elements. If a higher local degree is required, the parent NURBS representation is first elevated globally to degree
\begin{equation}
p^* = p + r,
\end{equation}
where \(r\) denotes the degree increment. The elevated geometry is subsequently decomposed using the same Active Section extraction procedure. Since degree elevation preserves the exact geometry~\cite{piegl1997nurbs},
\begin{equation}
\mathbf{x}_p(\xi) = \mathbf{x}_{p^*}(\xi), \qquad \forall \xi.
\end{equation}
Local degree modification is then achieved by selecting the desired elements from the higher-degree decomposition and inserting them into the original collection of Active Sections. In this manner, neighboring sections may possess different polynomial degrees while continuing to represent the same underlying geometry.

Note that p-refinement does not increase the number of elements. For example, the global knot vector $\Xi_{1,glob}=[0,0,0,0.5,1,1,1]$ (with $p=2$) has the same elements with $\Xi_{2,glob}=[0,0,0,0,0.5,0.5,1,1,1,1]$ (with $p=3$ plus an inserted knot at $\xi=0.5$). If the first element (with $p=2$) is fully described by the local knot vector $\Xi_{1,loc}=[0,0,0,0.5,1,1]$ (of size $2p+2$), then the first element (with $p=3$) will be fully described by $\Xi_{i,loc}'=[0,0,0,0,0.5,0.5,1,1]$ in the p-refined state. Therefore, we substitute with the $p+1$ element and have the same geometry, where the first element will be  $p=3$ and the second $p=2$.

\subsection{Consequences of Local Manipulation}
\label{sec:4.6}

The decomposition and refinement procedures preserve the exact geometry of the original NURBS representation. However, they do not preserve the continuity relationships that existed in the original spline space. After independent modifications have been performed, neighboring Active Sections generally possess:
\begin{itemize}
    \item different knot vectors,
    \item different polynomial degrees,
    \item different basis representations,
    \item different continuity characteristics.
\end{itemize}
Consequently, the collection of modified Active Sections no longer defines a globally admissible spline space. The geometry remains exact, but the approximation space required for numerical analysis has been lost.

This distinction is fundamental. From a geometric perspective, the decomposition process is complete. The exact CAD representation has been preserved throughout all local operations. From an analysis perspective, however, additional work is required. Continuity conditions must be re-established and a new admissible approximation space must be constructed.

The remainder of this work is devoted to this reconstruction problem, which is addressed in Section 5 through the construction of a Hybrid Reconstruction Operator based on interface continuity constraints and nullspace optimization.

\subsection{From Geometry to Analysis}
\label{sec:4.7}

The decomposition framework separates two traditionally coupled objectives: (i) geometric manipulation, and (ii) admissible-space construction. The first objective is achieved through the use of independent Active Sections and local spline operations. The second objective requires the construction of an admissible approximation space satisfying the continuity requirements of the target analysis problem. Several reconstruction strategies were investigated during the development of the present work. Direct coupling approaches were found to be restrictive when local knot vectors and polynomial degrees differed significantly between neighboring Active Sections. This observation motivated the investigation of algebraic reconstruction procedures based on continuity constraints and null-space operators, which are presented in the following section.


\section{Local‑to‑Global Reconstruction}
\label{sec:5}

\subsection{The Reconstruction Problem}
\label{sec:5.1}

The decomposition strategy introduced in the previous section allows a NURBS representation to be expressed as a collection of independent Active Sections. Each section possesses its own local knot vector and may undergo local spline operations such as knot insertion, degree elevation, or basis modification without affecting neighboring regions. From a geometric perspective, these operations preserve the exact CAD representation. However, the local modifications performed within individual Active Sections generally destroy the continuity relationships originally present in the global spline space. Neighboring sections may possess different polynomial degrees, different local knot vectors, or different basis representations. As a consequence, the collection of locally modified sections no longer forms an admissible approximation space suitable for numerical analysis.

The objective of the reconstruction stage is therefore to generate a new approximation space that satisfies the desired continuity requirements while retaining all local modifications introduced during the decomposition phase. Let
\begin{equation}
\mathbf{N}_{\text{loc}} = \begin{bmatrix} N_1 & N_2 & \cdots & N_{n_{\text{loc}}} \end{bmatrix} \,,
\end{equation}
denote the collection of basis functions associated with all locally modified Active Sections. The reconstruction problem may then be stated as follows:

\begin{quote}
\textit{Given a collection of locally modified spline entities, construct an admissible approximation space that satisfies prescribed continuity requirements while preserving the geometric and approximation properties introduced during local refinement.}
\end{quote}

The remainder of this section develops a reconstruction framework that addresses this problem through the assembly of continuity constraints and the construction of a Hybrid Reconstruction Operator.

\subsection{Interface Continuity Constraints}
\label{sec:5.2}

The reconstruction procedure is based on the observation that admissibility is governed entirely by the continuity relationships between neighboring Active Sections. Once the local spline entities have been modified independently, these continuity relations are generally no longer satisfied and must be re-established before numerical analysis can be performed.

Consider two neighboring Active Sections, denoted by \(\Omega_A\) and \(\Omega_B\), sharing a common interface. Let
\begin{equation}
u_A(\xi) = \sum_{i=1}^{n_A} N_i^A(\xi) d_i^A
\qquad\text{and}\qquad
u_B(\xi) = \sum_{j=1}^{n_B} N_j^B(\xi) d_j^B \,,
\end{equation}
represent the local approximations on the two sections. To recover an admissible approximation space, continuity conditions must be enforced at the interface. The specific form of these conditions depends on the continuity requirements of the target problem. For \(C^0\) continuity, the reconstructed field must satisfy
\begin{equation}
u_A(\xi_I) = u_B(\xi_I) \,,
\end{equation}
where \(\xi_I\) denotes the interface location. For \(C^1\) continuity, the first derivatives must additionally satisfy
\begin{equation}
\frac{du_A}{d\xi}(\xi_I) = \frac{du_B}{d\xi}(\xi_I) \,.
\end{equation}
More generally, for a prescribed continuity order \(C^k\), the following conditions are imposed:
\begin{equation}
\frac{d^m u_A}{d\xi^m}(\xi_I) = \frac{d^m u_B}{d\xi^m}(\xi_I), \qquad m = 0, 1, \dots, k \,.
\end{equation}

Substitution of the local spline approximations into these relations produces a collection of linear equations involving the local control variables associated with the neighboring Active Sections. These equations define the admissibility conditions that must be satisfied by the reconstructed approximation space. Each interface therefore contributes a set of algebraic constraints linking the local degrees of freedom of adjacent sections. An important characteristic of these constraints is their locality: each continuity condition involves only the basis functions participating in the interaction across the corresponding interface.

\subsection{Pairwise Admissible Coupling}
\label{sec:5.3}

The reconstruction framework is built upon a sequence of pairwise
coupling operations between neighboring Active Sections. Rather than
constructing the admissible approximation space globally, continuity is
introduced progressively through local interface interactions.

Consider two neighboring Active Sections with local basis
representations
\begin{equation}
\mathbf{N}_A=
\begin{bmatrix}
N_1^A&\cdots&N_{n_A}^A
\end{bmatrix},
\qquad
\mathbf{N}_B=
\begin{bmatrix}
N_1^B&\cdots&N_{n_B}^B
\end{bmatrix}.
\end{equation}

The continuity conditions described in the previous subsection generate
a local constraint operator
\(\mathbf C_{AB}\)
relating the local control variables of the two neighboring Active
Sections.

The admissible coefficient space associated with the pair is obtained by
enforcing

\begin{equation}
\mathbf C_{AB}\mathbf d_{AB}=\mathbf0,
\end{equation}

where
\(\mathbf d_{AB}\)
collects the local degrees of freedom of the two Active Sections.

The resulting admissible approximation is represented by a locally
reconstructed block satisfying the prescribed continuity conditions
across the interface. This reconstructed block is subsequently coupled
with the next neighboring Active Section.

The reconstruction therefore proceeds recursively through a sequence of
pairwise admissible couplings. For a collection of Active Sections

\begin{equation}
\{\mathcal N_1,\mathcal N_2,\ldots,\mathcal N_m\},
\end{equation}

the recursive reconstruction is expressed as

\begin{equation}
\mathcal B_{12}
=
\mathcal R(\mathcal N_1,\mathcal N_2),
\qquad
\mathcal B_{123}
=
\mathcal R(\mathcal B_{12},\mathcal N_3),
\qquad
\dots,
\qquad
\mathcal B_{1\ldots m}
=
\mathcal R(\mathcal B_{1\ldots m-1},\mathcal N_m).
\end{equation}

where \(m\) denotes the number of Active Sections
(\(m-1\) interfaces and \(m-1\) pairwise reconstruction steps).

The final Hybrid Spline Space is therefore obtained through a sequence
of local admissible reconstructions rather than by solving a single
global reconstruction problem.

Algorithmic details together with the mathematical proof of the
recursive construction are presented in
Appendix~\ref{app:A}.

To illustrate the recursive procedure, we first consider the simplest
case of \(C^0\) continuity between two neighboring Active Sections. In
this setting a single interface constraint is imposed, allowing the
complete pairwise reconstruction algorithm to be presented before its
extension to higher-order continuity (see Example~1).

\subsection{Global Constraint Representation}
\label{sec:5.4}

Although the reconstruction process is naturally described through successive pairwise admissible couplings, it is often convenient to represent all continuity relations within a single algebraic framework. Each pairwise coupling operation contributes a set of local continuity constraints involving only the degrees of freedom associated with the corresponding interface. Let
\[
\mathbf{C}_{AB}, \quad
\mathbf{C}_{BC}, \quad
\mathbf{C}_{CD}, \quad
\dots
\]
denote the local constraint operators generated during the reconstruction process. The complete set of admissibility conditions is assembled into a global constraint operator
\(\mathbf{C}_{\mathrm{tot}}\), which collects all interface continuity constraints associated with the reconstructed spline space.

Let
\[
m_{\mathrm{loc}}
=
\sum_{e=1}^{N_{\mathrm{AS}}} n_e
\]
denote the total number of local degrees of freedom after decomposition
and local refinement, where \(N_{\mathrm{AS}}\) is the number of Active
Sections and \(n_e\) is the number of local basis functions associated
with Active Section \(e\).

Define the vector of all local control variables as
\begin{equation}
\mathbf d
=
\begin{bmatrix}
d_1 &
d_2 &
\cdots &
d_{m_{\mathrm{loc}}}
\end{bmatrix}^{T}.
\end{equation}

The admissibility conditions may then be written compactly as
\begin{equation}
\mathbf C_{\mathrm{tot}}\mathbf d=\mathbf0.
\end{equation}

Each row of \(\mathbf C_{\mathrm{tot}}\) represents one continuity
constraint, whereas each column corresponds to one local degree of
freedom. Because every pairwise coupling involves only neighboring
Active Sections, each continuity condition affects only a small subset
of the local variables. Consequently, the global constraint operator
remains sparse even for large reconstructed spline spaces.

The admissible approximation space is therefore defined as the null space
of the global constraint operator,
\begin{equation}
\mathcal V_{\mathrm{adm}}
=
\left\{
\mathbf d\in\mathbb R^{m_{\mathrm{loc}}}
:
\mathbf C_{\mathrm{tot}}\mathbf d=\mathbf0
\right\}.
\end{equation}

The following subsections introduce the Hybrid Reconstruction Operator,
which constructs a positive basis spanning the admissible approximation
space while preserving partition of unity and the exact null space of
\(\mathbf C_{\mathrm{tot}}\).

\subsection{Construction of the Hybrid Reconstruction Operator}
\label{sec:5.5}

Let $m_{\mathrm{loc}}$ denote the total number of local degrees of
freedom after decomposition and local refinement (associated with $m$ active sections), and let

\begin{equation}
    r=\operatorname{rank}(C_{\mathrm{tot}}) \,.
\end{equation}
The number of independent degrees of freedom of the reconstructed space is
\begin{equation}
    n_{\mathrm{hyb}} = m_{\mathrm{loc}}-r \,.
\end{equation}
Although any basis of ker(Ctot), for example the orthonormal basis returned by the MATLAB \texttt{null} function, spans the admissible space, such bases generally do not preserve positivity, partition of unity, or local support. The objective of the proposed reconstruction operator is therefore to construct an admissible basis possessing these geometric and numerical properties.

\subsubsection{Anchor Selection}
\label{sec:5.5.1}

The first step consists of identifying a set of independent degrees of freedom capable of generating the admissible space. A pivoted QR factorization of the global constraint matrix is computed,

\begin{equation}
    C_{\mathrm{tot}}\Pi
    =
    Q
    \left[
    R_1\;
    R_2
    \right] \,,
\end{equation}
where the permutation matrix $\Pi$ reorders the columns so that the first
$r=\operatorname{rank}(C_{\mathrm{tot}})$ columns are the pivot columns
associated with the dependent variables, whereas the remaining
$n_{\mathrm{hyb}}$ columns correspond to the free variables.

\subsubsection{Positive Admissible Generators}
\label{sec:5.5.2}

For each anchor location $a_j$, we seek a vector
$\mathbf{t}_j \in \mathbb{R}^{m_{\mathrm{loc}}}$
representing a positive admissible generator associated with that anchor.
The generator is obtained by solving the following linear programming problem:
\begin{equation}
    \begin{aligned}
    \min_{\mathbf{t}_j} \quad & \mathbf{f}_j^T \mathbf{t}_j, \\
    \text{s.t.} \quad & \mathbf{C}_{\text{tot}} \mathbf{t}_j = \mathbf{0}, \\
    & (\mathbf{t}_j)_{a_j} = 1, \\
    & 0 \leq \mathbf{t}_j \leq 1 \,.
    \end{aligned}
\end{equation}
The first constraint guarantees admissibility, the second fixes the anchor location, while the bound constraints enforce positivity. The objective vector \(\mathbf{f}_j\) is selected such that coefficients located farther from the anchor are mildly penalized. In the present work, we employ
\begin{equation}
    (\mathbf{f}_j)_i = 1 + \alpha (i - a_j)^2,
\end{equation}
where \(\alpha\) is a small positive constant (e.g., \(\alpha = 0.01\)). Consequently, the resulting generators remain as localized as possible while satisfying the admissibility requirements.

The admissible generators are assembled into the matrix
\begin{equation}
    \mathbf{C}_{\text{pos}} = \begin{bmatrix} \mathbf{t}_1 & \mathbf{t}_2 & \cdots & \mathbf{t}_{n_{\text{hyb}}} \end{bmatrix}.
\end{equation}
By construction,
\begin{equation}
    \mathbf{C}_{\text{tot}} \mathbf{C}_{\text{pos}} = \mathbf{0}.
\end{equation}

\subsubsection{Partition of Unity Recovery}

Although the admissible generators satisfy the continuity constraints, they do not necessarily satisfy partition of unity. To recover this property, a non-negative scaling vector
\begin{equation}
    \boldsymbol{\gamma} = \begin{bmatrix} \gamma_1 & \gamma_2 & \cdots & \gamma_{n_{\text{hyb}}} \end{bmatrix}^T
\end{equation}
is determined from the constrained least-squares problem
\begin{equation}
    \mathbf{C}_{\text{pos}} \boldsymbol{\gamma} \approx \mathbf{1},
\end{equation}
where \(\mathbf{1} \in \mathbb{R}^{m}\) is the vector of all ones. Specifically, we solve the non-negative least squares (NNLS) problem
\begin{equation}
    \min_{\boldsymbol{\gamma} \ge 0} \left\| \mathbf{C}_{\text{pos}} \boldsymbol{\gamma} - \mathbf{1} \right\|_2.
\end{equation}
The final Hybrid Reconstruction Operator is then defined as
\begin{equation}
    \mathbf{T}_{\text{hybrid}} = \mathbf{C}_{\text{pos}} \operatorname{diag}(\boldsymbol{\gamma}).
\end{equation}
Additional balancing operations may be introduced to improve the uniformity of the reconstructed basis while preserving positivity and admissibility. In the present implementation, we apply a gentle column scaling to balance the peak heights of the basis functions, followed by a second NNLS solve to restore partition of unity.

\subsubsection{Properties of the Reconstruction Operator}
\label{sec:5.5.4}

The resulting operator satisfies three fundamental properties:
\begin{align}
\mathbf{C}_{\text{tot}} \mathbf{T}_{\text{hybrid}} &= \mathbf{0}, &&\text{(admissibility)}, \\
\mathbf{T}_{\text{hybrid}} &\ge 0, &&\text{(non-negativity)}, \\
\mathbf{T}_{\text{hybrid}} \mathbf{1} &\approx \mathbf{1}, &&\text{(partition of unity)}.
\end{align}
The operator therefore combines admissibility, positivity, and partition-of-unity preservation within a single reconstruction framework. For this reason, the resulting basis is referred to as a \emph{Hybrid Spline Space}.

\subsection{Efficient Hierarchical Assembly via Active/Frozen Columns}
\label{sec:5.6}

A critical aspect of the proposed framework is computational efficiency. Although the global constraint operator \(\mathbf{C}_{\text{tot}}\) can be assembled directly, solving the full linear programming problem for all anchors simultaneously becomes expensive for problems with many Active Sections. To address this, we implement a hierarchical condensation strategy that processes patches sequentially, freezing columns that already satisfy new interface constraints and confining optimization exclusively to the active degrees of freedom at each merge step.

Suppose we have already constructed a valid basis \(\mathbf{T}_{\text{cur}} \in \mathbb{R}^{N_{\text{cur}} \times d_{\text{cur}}}\) for the first \(k-1\) patches. To incorporate patch \(k\), we proceed as follows:

\paragraph{Augmentation.}
Extend the current basis with the identity of the new patch:
\begin{equation}
    \mathbf{T}_{\text{aug}} = \begin{bmatrix}
    \mathbf{T}_{\text{cur}} & \mathbf{0} \\
    \mathbf{0} & \mathbf{I}_{n_k}
    \end{bmatrix}
    \in \mathbb{R}^{(N_{\text{cur}} + n_k) \times (d_{\text{cur}} + n_k)} \,.
\end{equation}

\paragraph{Constraint Projection.}
Let \(\mathbf{C}_{\text{int}} \in \mathbb{R}^{c \times (N_{\text{cur}} + n_k)}\) be the matrix representing the continuity conditions across the interface \((k-1 | k)\). We project these constraints onto the augmented basis:
\begin{equation}
    \mathbf{C}_{\text{red}} = \mathbf{C}_{\text{int}} \, \mathbf{T}_{\text{aug}} \,.
\end{equation}

\paragraph{Active/Frozen Split.}
A column \(i\) of \(\mathbf{T}_{\text{aug}}\) is considered \emph{frozen} if the corresponding column of \(\mathbf{C}_{\text{red}}\) is zero (within a tolerance), meaning it already satisfies the new interface condition. Otherwise, it is \emph{active}. Let \(\mathcal{F}\) and \(\mathcal{A}\) denote the sets of frozen and active column indices, respectively. We have
\begin{equation}
    \|\mathbf{C}_{\text{red}}(:, i)\| = 0 \quad \forall i \in \mathcal{F},
    \qquad
    \|\mathbf{C}_{\text{red}}(:, i)\| > 0 \quad \forall i \in \mathcal{A} \,.
\end{equation}

\paragraph{Local Reduction.}
We apply the positive basis construction (Sections 5.5.2–5.5.3) exclusively to the active subsystem \(\mathbf{C}_{\text{red}}(:, \mathcal{A})\), yielding \(\mathbf{T}_{\text{active}}\). We then build the global reduction matrix:
\begin{equation}
    \mathbf{T}_{\text{red}} = 
    \begin{bmatrix}
    \mathbf{I}_{|\mathcal{F}|} & \mathbf{0} \\
    \mathbf{0} & \mathbf{T}_{\text{active}}
    \end{bmatrix} \,,
\end{equation}
where the frozen columns are left unchanged.

\paragraph{Update.}
The basis for the extended patch system becomes
\begin{equation}
    \mathbf{T}_{\text{cur}} \leftarrow \mathbf{T}_{\text{aug}} \, \mathbf{T}_{\text{red}} \,.
\end{equation}
We repeat this process sequentially for patches \(3, 4, \dots, M\). The final matrix \(\mathbf{T}_{\text{cur}}\) is the desired global hybrid basis \(\mathbf{T}_{\text{hybrid}}\). This hierarchical strategy ensures that the optimization burden is confined to the active degrees of freedom at each merge step, making the framework computationally scalable for problems with many patches.

\subsection{Summary of the Reconstruction Algorithm}
\label{sec:5.7}

The complete construction of the Hybrid Reconstruction Operator is
summarized in Algorithm~\ref{alg:hybrid_reconstruction}. The procedure
follows the sequential pairwise reconstruction described above, while
the corresponding mathematical justification is provided in
Appendix~\ref{app:A}.

\begin{center}
\captionsetup{type=algorithm}
\captionof{algorithm}{Sequential active--frozen construction of the
Hybrid Reconstruction Operator}
\label{alg:hybrid_reconstruction}
\end{center}

\begin{algorithmic}[1]

\Require Locally modified Active Sections
$\{\mathcal{A}_1,\ldots,\mathcal{A}_m\}$,
their local basis functions, prescribed interface continuity conditions,
and numerical tolerance $\varepsilon$

\Ensure Hybrid Reconstruction Operator $T_{\mathrm{hybrid}}$

\State Initialize
$T_1 \gets I_{n_1}$,
where $n_1$ is the number of local basis functions of $\mathcal{A}_1$

\For{$k=2,\ldots,m$}

    \State Augment the current transformation with Active Section
    $\mathcal{A}_k$:
    \Statex \hspace{\algorithmicindent}
    $T_{\mathrm{aug},k}
    \gets
    \operatorname{blkdiag}(T_{k-1},I_{n_k})$

    \State Assemble the interface constraint matrix $C_k$
    associated with the prescribed continuity conditions between
    $\mathcal{A}_1,\ldots,\mathcal{A}_{k-1}$ and $\mathcal{A}_k$

    \State Project the new interface constraints:
    \Statex \hspace{\algorithmicindent}
    $\bar{C}_k
    \gets
    C_kT_{\mathrm{aug},k}$

    \State Identify the frozen coordinates:
    \Statex \hspace{\algorithmicindent}
    $\mathcal{F}_k
    \gets
    \left\{
    j:
    \|\bar{C}_k(:,j)\|_2
    \leq\varepsilon
    \right\}$

    \State Let $\mathcal{A}_k^{\mathrm{act}}$
    denote the complementary set of active coordinates

    \If{$\mathcal{A}_k^{\mathrm{act}}\neq\varnothing$}

        \State Form the active constraint matrix
        $C_{\mathrm{act},k}$
        from the active columns of $\bar{C}_k$

        \State Determine independent anchor coordinates using
        column-pivoted QR factorization of $C_{\mathrm{act},k}$

        \For{each selected anchor $a_j$}

            \State Compute a non-negative admissible generator $t_j$
            by solving
            \Statex \hspace{\algorithmicindent}
            $\displaystyle
            \min_{t_j}\; f_j^Tt_j
            $
            \Statex \hspace{\algorithmicindent}
            subject to
            $\;
            C_{\mathrm{act},k}t_j=0,\quad
            (t_j)_{a_j}=1,\quad
            0\leq t_j\leq1
            $

        \EndFor

        \State Collect the positive generators:
        \Statex \hspace{\algorithmicindent}
        $T_{\mathrm{pos},k}
        \gets
        [\,t_1\;t_2\;\cdots\;t_q\,]$

        \State Determine the non-negative scaling vector:
        \Statex \hspace{\algorithmicindent}
        $\displaystyle
        \gamma_k
        \gets
        \arg\min_{\gamma\geq0}
        \|T_{\mathrm{pos},k}\gamma-\mathbf{1}\|_2
        $

        \State Define the active reconstruction matrix:
        \Statex \hspace{\algorithmicindent}
        $T_{\mathrm{act},k}
        \gets
        T_{\mathrm{pos},k}\operatorname{diag}(\gamma_k)$

        \State Construct $T_{\mathrm{red},k}$ by leaving the frozen
        coordinates unchanged and replacing the active coordinates
        by $T_{\mathrm{act},k}$

    \Else

        \State $T_{\mathrm{red},k}\gets I$

    \EndIf

    \State Update the reconstruction operator:
    \Statex \hspace{\algorithmicindent}
    $T_k
    \gets
    T_{\mathrm{aug},k}T_{\mathrm{red},k}$

\EndFor

\State Set
$T_{\mathrm{hybrid}}\gets T_m$

\State Verify
\Statex \hspace{\algorithmicindent}
$\|C_{\mathrm{tot}}T_{\mathrm{hybrid}}\|_F
\leq\varepsilon$

\Statex \hspace{\algorithmicindent}
$T_{\mathrm{hybrid}}\geq-\varepsilon$

\Statex \hspace{\algorithmicindent}
$\|T_{\mathrm{hybrid}}\mathbf{1}-\mathbf{1}\|_2
\leq\varepsilon$

\State \Return $T_{\mathrm{hybrid}}$

\end{algorithmic}

By construction, the resulting operator satisfies the prescribed
interface admissibility conditions and is assembled from 
non-negative reconstruction generators. Partition of unity 
is recovered through the non-negative scaling step. 
Exact geometry preservation follows from the
null-space and range relations established in Appendix~\ref{app:A}.

\section{Validation and Numerical Examples}
\label{sec:validation} 

Before applying the proposed reconstruction framework
to complex multi-patch configurations, 
we first validate its algebraic core on two simple,
well understood problems. 
Example~1 demonstrates the null‑space concept for piecewise‑linear 
elements, while Example~2 shows that, for the clamped two‑element 
quadratic case, the proposed construction reproduces exactly the 
classical B\'ezier extraction operator. 
Subsequently, Example~3 presents a more realistic multi‑patch, 
multi‑degree NURBS curve with local h‑refinements. 
Example~4 demonstrates the capability of the proposed hybrid basis 
to solve a linear boundary-value problem, whereas Example~5 solves
a nonlinear one applying degree elevation.



\textbf{EXAMPLE~1: Piecewise-linear basis functions}

The most trivial case is to consider an assemblage of piecewise-linear elements and their desired conversion into a smaller number of global functions. For example, let us assume $n_{ele}=5$ linear elements, as shown in Fig.~\ref{fig1}. If each element is initially considered as a separate entity, there are two coefficients per element ($U(x)=ax+b$) and thus the total number of unknowns in the five elements is ten  (i.e., $m=5,n_{loc} = 10$). In other words, there are 10 local shape functions (two per element) in the whole domain. 

Although the reduction from local 10 to global 6 DOFs is obvious (by intuition), it is instructive to follow the mathematical route to obtain this fact. Actually, the transformation from 10 to 6 may be produced by imposing four conditions (i.e., $r = 4$) for $C^0$-continuity at the four internal points (interfaces) at the locations $\xi=\frac{1}{5}, \frac{2}{5}, \frac{3}{5}, \frac{4}{5}$. Then, by subtracting the four continuity conditions from the total ten shape functions, the final number of global basis functions becomes six ($n_{hyb} = n_{loc} - r = 10 - 4=6$). 

More precisely, the imposition of $C^0$ continuity at internal points 2 to 5 leads to the following equations system (subscripts $L$ and $R$ stand for left and right, respectively):
\begin{subequations}
\renewcommand{\theequation}{\theparentequation.\alph{equation}}
\begin{align}
    U_{2L} - U_{2R}& = 0 \label{eq:U2} \\
    U_{3L} - U_{3R}& = 0 \label{eq:U3} \\
    U_{4L} - U_{4R}& = 0 \label{eq:U4} \\
    U_{5L} - U_{5R}& = 0 \label{eq:U5}
\end{align}
\end{subequations}

Obviously, the  equations system Eqs.~\eqref{eq:U2}-~\eqref{eq:U5} can be written in the following matrix form ($\mathbf{C}_{tot} \mathbf{U} = \mathbf{0}$):
\begin{equation}
    \begin{bmatrix}
        1 & -1 & 0 &  0 & 0 &  0 & 0 & 0 \\
        0 &  0 & 1 & -1 & 0 &  0 & 0 & 0 \\
        0 &  0 & 0 &  0 & 1 & -1 & 0 & 0 \\
        0 &  0 & 0 &  0 & 0 &  0 & 1 & -1
    \end{bmatrix}
    \begin{bmatrix}
        U_{2L} \\
        U_{2R} \\
        U_{3L} \\
        U_{3R} \\
        U_{4L} \\
        U_{4R} \\
        U_{5L} \\
        U_{5R}
    \end{bmatrix}
    =
    \begin{bmatrix}
        0 \\
        0 \\
        0 \\
        0
    \end{bmatrix} \,.
\end{equation}
The next stage is to choose the active DOFs for the interior of the domain, among the total eight $(U_{2L},U_{2R},U_{3L},U_{3R},U_{4L},U_{4R},U_{5L},U_{5R})$. While in the current case this task is obvious (for example, those DOFs having the subscript $L$ on the left side $(U_{2L},U_{3L},U_{4L},U_{5L})$), it is instructive to show that this is done using the $\mathrm{MATLAB}$ command \texttt{Z=null(C\_{tot},'r')}. In the current case, this command provides the matrix $\mathbf{T}_{hybrid}=Z$ (of size $8 \times 4$):
\begin{equation}
  \mathbf{T}_{hybrid}
    =
    \begin{bmatrix}
        1 &  0 & 0 & 0 \\
        1 &  0 & 0 & 0 \\
        0 &  1 & 0 & 0 \\
        0 &  1 & 0 & 0 \\
        0 &  0 & 1 & 0 \\
        0 &  0 & 1 & 0 \\
        0 &  0 & 0 & 1 \\
        0 &  0 & 0 & 1 
    \end{bmatrix} \label{eq6} \,.
\end{equation}
which satisfies the condition:
\begin{equation}
 \mathbf{C}_{tot} \mathbf{T}_{hybrid} 
 =
    \begin{bmatrix}
     0  &  0  &  0  &  0 \\
     0  &  0  &  0  &  0 \\
     0  &  0  &  0  &  0 \\
     0  &  0  &  0  &  0
    \end{bmatrix}     
\end{equation}
Note that, in this example, the rows of matrix \texttt{Z=null(C\_{tot},'r')}
in Eq.~\eqref{eq6} follow the Partition of Unity property, i.e. each of them 
sums to 1.

Then, the final basis functions associated with the four internal nodes (column-vector $\mathbf{N}_{global}=[N_2,\ldots,N_5]^{T}$ of size $4 \times 1$) are given in terms of the 
global vector (column-vector $\mathbf{N}_{local}$ of size $8 \times 1$ including 
piecewise-linear shape functions) by:
\begin{equation}
    \mathbf{N}_{global} =  \mathbf{T}_{hybrid}^{T} \mathbf{N}_{local}
    \label{eq8}
\end{equation}
By further considering the two extra DOFs associated with the ends ($\xi=0,1$), that is by encountering the two extreme local shape functions $N_1$ and $N_6$, the overall produced six global basis functions are illustrated in Fig.~\ref{fig1}.
\begin{figure}
    \centering
    
    \includegraphics[width=0.7\linewidth]{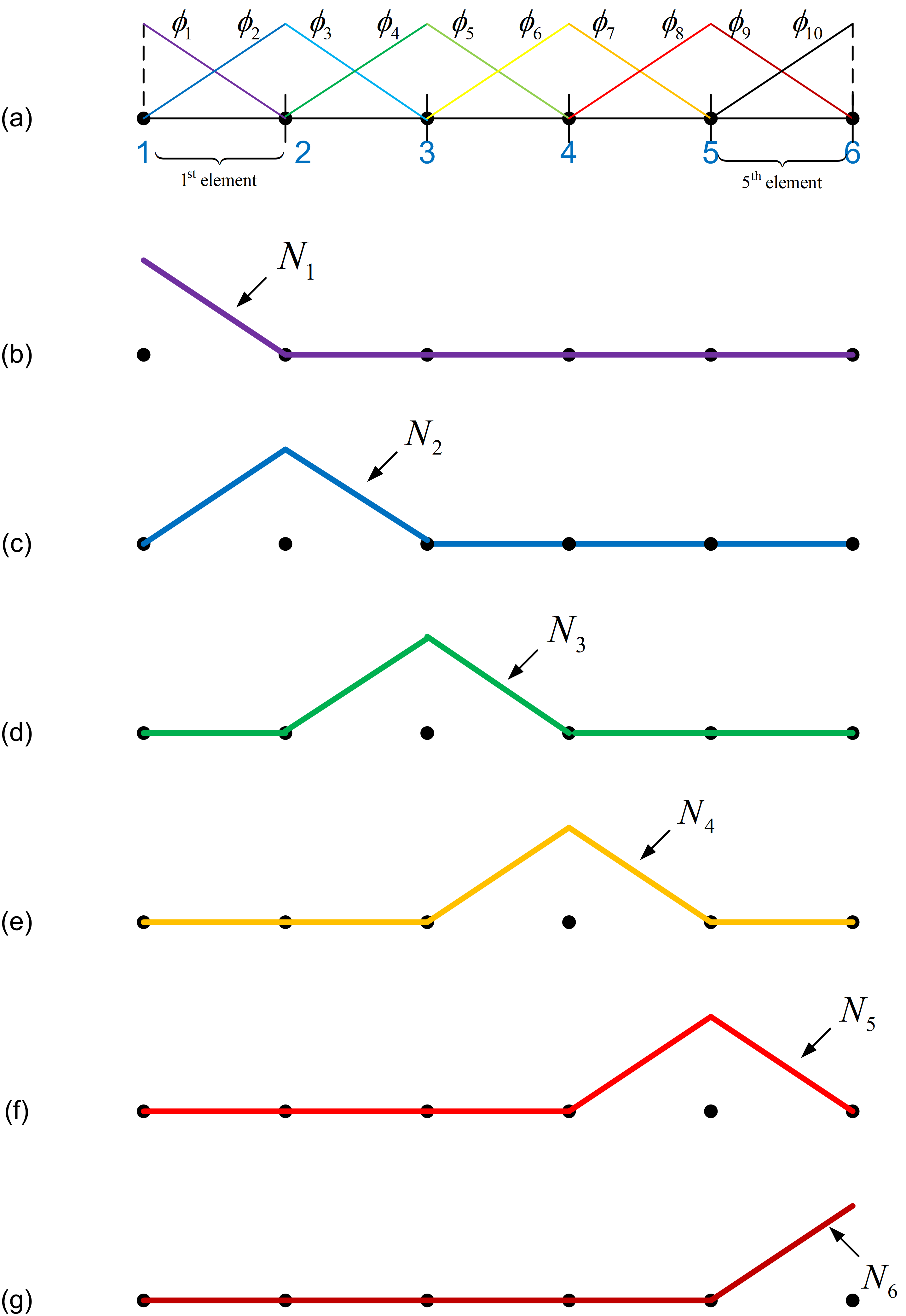} 
       \caption{(a) Ten piecewise-linear shape functions, and (b-g) the six associated global basis functions, using the $C^{0}$ null space.} 
    \label{fig1}
\end{figure}

In principle, the same philosophy can be applied to more difficult cases, where higher continuity ($C^1,C^2,\ldots$) is imposed. The main difference with the above example is that the selection of the primary DOFs is not that clear, because the spline coefficients are no more nodal values. Moreover, the $\mathrm{MATLAB}$ command \texttt{null(C\_{tot},'r')} is not always capable of producing basis functions which fulfill the partition of unity property and does not generally ensure positivity. 


\textbf{EXAMPLE-2: Null-Space Construction for Two Quadratic B\'ezier Elements}

\textbf{Problem formulation:} Without loss of generality, for the sake of brevity we consider only two quadratic B\'ezier elements sharing
the breakpoint $x=\frac12$. Initially, the two elements are completely
\textit{disconnected} and each possesses its own Bernstein basis,
\begin{equation}
\mathbf B^{(1)}
=
\begin{bmatrix}
B^{(1)}_0\\
B^{(1)}_1\\
B^{(1)}_2
\end{bmatrix},
\qquad
\mathbf B^{(2)}
=
\begin{bmatrix}
B^{(2)}_0\\
B^{(2)}_1\\
B^{(2)}_2
\end{bmatrix} \label{eq9} \,.
\end{equation}

Consequently, the disconnected approximation space consists of six basis
functions,
\begin{equation}
    \mathbf N_{\mathrm{disc}}
    =
    \begin{bmatrix}
    B^{(1)}_0\\
    B^{(1)}_1\\
    B^{(1)}_2\\
    B^{(2)}_0\\
    B^{(2)}_1\\
    B^{(2)}_2
    \end{bmatrix} \label{eq10} \,,
\end{equation}
associated with six independent degrees of freedom, i.e. $(a_0,a_1,a_2)$ and $(b_0,b_1,b_2)$ for the left and the right element, respectively.

Our objective is to recover the classical quadratic spline associated with the
knot vector
\begin{equation}
    \Xi=
    \left[
    0,0,0,\frac12,1,1,1
    \right] \label{eq11} \,,
\end{equation}
which possesses only four global basis functions after imposing $C^0$ and
$C^1$ continuity at the interface.

\bigskip

\noindent\textbf{Continuity constraints:} Since the two B\'ezier elements are initially disconnected, continuity must be enforced explicitly. The $C^0$ continuity condition requires the function values to coincide at the common endpoint (at $x=\frac{1}{2}$ where $B_2^{(1)}=B_0^{(2)}=1$ and the rest four Bernstein polynomials vanish),
\begin{equation}
    a_2 = b_0  \label{eq12} \,,
\end{equation}
whereas the $C^1$ continuity condition requires equality of the first
derivatives,
\begin{equation}
    -a_1+a_2+b_0-b_1=0 \label{eq13} \,.
\end{equation}

Collecting the above equations yields the continuity system

\begin{equation}
    \begin{bmatrix}
    0&0&1&-1&0&0\\
    0&-1&1&1&-1&0
    \end{bmatrix}
        \begin{bmatrix}
        a_0\\
        a_1\\
        a_2\\
        b_0\\
        b_1\\
        b_2
        \end{bmatrix}
    =
    \mathbf0 \label{eq14} \,.
\end{equation}

Denoting the above constraint matrix by $C$, the admissible coefficient vectors must satisfy
\begin{equation}
    C\mathbf a=\mathbf0 \label{eq15} \,.
\end{equation}
with
\begin{equation}
    \mathbf{a}
    =
       \begin{bmatrix}
        a_0\\
        a_1\\
        a_2\\
        b_0\\
        b_1\\
        b_2
        \end{bmatrix}  
    \label{eq16} \,.
\end{equation}

\bigskip

\noindent\textbf{Null-space basis:} The admissible solution space is obtained by computing the null space of the constraint matrix, which in MATLAB this is accomplished by the command \texttt{Z = null(C,'r')}.


The function \texttt{null} computes a basis of the kernel
\begin{equation}
    \ker(C)
    =
    \{\mathbf x\;|\;C\mathbf x=\mathbf0\} \label{eq18} \,,
\end{equation}
returning four linearly independent vectors spanning the admissible space.
Using the option \texttt{'r'} produces a rational basis whenever possible,
thereby avoiding floating-point round-off errors.

For the present example, the command \texttt{Z = null(C,'r')} gives
\begin{equation}
    Z=
    \begin{bmatrix}
    1&0&0&0\\
    0&2&-1&0\\
    0&1&0&0\\
    0&1&0&0\\
    0&0&1&0\\
    0&0&0&1
    \end{bmatrix} \label{eq19} \,,
\end{equation}
which, by definition, satisfies
\begin{equation}
    CZ=\mathbf0 \label{eq20} \,.
\end{equation}

Note that, similarly to Example~1, again in Example~2 the rows of matrix \texttt{Z=null(C\_{tot},'r')} in Eq.~\eqref{eq19} algebraically sum to 1. Nevertheless, the function \texttt{Z=null(C\_{tot},'r')} does \textit{not} always ensure the Partition of Unity property.

Each column of $Z$, given by Eq.~\eqref{eq19}, represents one admissible coefficient vector satisfying the prescribed continuity constraints. Consequently, the four coupled basis functions are obtained by applying Eq.~(8), in which the role of the transformation matrix $\mathbf{T}_{hybrid}$ is temporarily played by $Z$ itself.

Figure~\ref{fig2} illustrates the resulting basis functions, $\Phi_1$ to $\Phi_4$, which were found to satisfy the Partition of Unity (PoU) property. Nevertheless, they take negative values as well (obviously, because the matrix $Z$ in Eq.~\eqref{eq19} includes the negative entry $-1$ in its third column). This is exactly the disadvantage of the MATLAB function \texttt{null}, for which a remedy becomes imperative. The updated matrix $\mathbf{T}_{hybrid}$ will include linear combinations of rows and columns in $Z$, which not only will lead to non-negative basis functions, but also they must fulfill the PoU property. Next, a simplified procedure for the change of matrix $Z$ to the desired form $\mathbf{E}=\mathbf{T}_{hybrid}$ is demonstrated.
\begin{figure}[!ht]
\centering
\includegraphics[width=0.72\linewidth]{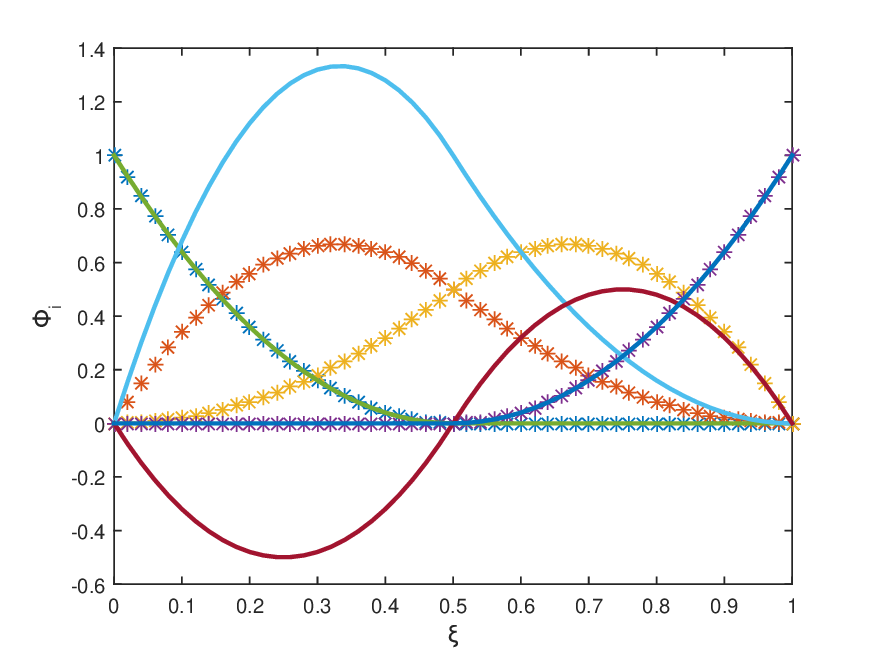} 
\caption{The four coupled basis functions obtained directly from the null-space matrix $Z=\mathrm{null}(C,\texttt{'r'})$ (solid lines), compared with the exact Cox-de Boor B-splines (starred lines).}
\label{fig2}
\end{figure}

In short-hand, we can write:
\begin{equation}
    \mathbf{Z} = [ \mathbf{z}_1, \mathbf{z}_2, \mathbf{z}_3,\mathbf{z}_4 ] \label{eq21} \,,
\end{equation}
where $\mathbf{z}_i,i=1,\ldots,4$ is the $i$th column of matrix $\mathbf{Z}$ (see, Eq.~\eqref{eq19}).

The entire procedure is performed in three steps as follows. In the first step we identify the column of matrix $\mathbf{Z}$ which include negative entries. In the second step we update these columns so that only positive entries appear, whereas in the third step we determine scaling factors that ensure the Partition of Unity property in all rows of the updated $\mathbf{Z}$ matrix, called $\mathbf{E}=\mathbf{T}_{hybrid}$.

Step-1: Equation~\eqref{eq19} shows that only the third column of matrix $\mathbf{Z}$ incudes negative entries, and thus
 \begin{equation}
     \mathbf{r}_1=\mathbf{z}_1, \quad \mathbf{r}_2=\mathbf{z}_2, \quad \mathbf{r}_4=\mathbf{z}_4 \label{eq22} \,.
 \end{equation}

 Step-2: To obtain $\mathbf{r}_3$, the third column of matrix $\mathbf{Z}$ is replaced by a linear combination of 2nd and 3rd column, which can be written as follows:
\begin{equation}
     \mathbf{x} = a \mathbf{z}_2 + b \mathbf{z}_3
       = a \begin{bmatrix}
           0 \\
           2 \\
           1 \\ 
           1 \\
           0 \\
           0
       \end{bmatrix}  
       + b \begin{bmatrix}
           0 \\
          -1 \\
           0 \\ 
           0 \\
           1 \\
           0
       \end{bmatrix}  
       = \begin{bmatrix}
           0 \\
           2a-b \\
           a \\ 
           a \\
           b \\
           0
       \end{bmatrix}  
     \label{eq23} \,.
\end{equation}

Considering the extreme ray, we have either $b=0$ or $2a-b=0$. Since the first condition repeats the already known $r_2$, we stick on the second condition, i.e., $2a-b=0$, whence $b=2a$. Setting, for example, $a=1$ we obtain the simple expression 

\begin{equation}
    \mathbf{r}_3=\begin{bmatrix}
           0 \\
           0 \\
           1 \\ 
           1 \\
           2 \\
           0
       \end{bmatrix}  \label{eq24} \,.
\end{equation}

Therefore, Eq.~\eqref{eq22} and Eq.~\eqref{eq24} suggest the following nonnegative vectors
\begin{equation}
    \mathbf{r}_1=\begin{bmatrix}
           1 \\
           0 \\
           0 \\ 
           0 \\
           0 \\
           0
       \end{bmatrix},   
    \mathbf{r}_2=\begin{bmatrix}
           0 \\
           2 \\
           1 \\ 
           1 \\
           0 \\
           0
       \end{bmatrix},  
    \mathbf{r}_3=\begin{bmatrix}
           0 \\
           0 \\
           1 \\ 
           1 \\
           2 \\
           0
       \end{bmatrix},  
       \mathbf{r}_4=\begin{bmatrix}
           0 \\
           0 \\
           0 \\ 
           0 \\
           0 \\
           1
       \end{bmatrix}   
       \label{eq25} \,.
\end{equation}

Step-3: Now we set the transformation matrix in the form
\begin{equation}
    \mathbf{E}=[\mathbf{e}_1,\mathbf{e}_2,\mathbf{e}_3,\mathbf{e}_4] \label{eq26} \,,
\end{equation}
with the columns $\mathbf{e}_i$'s being multiples of the above $\mathbf{r}_i$'s:
\begin{equation}
    \mathbf{e}_1=\alpha_1 \mathbf{r}_1, \quad  
    \mathbf{e}_2=\alpha_2 \mathbf{r}_2, \quad  
    \mathbf{e}_3=\alpha_3 \mathbf{r}_3, \quad  
    \mathbf{e}_4=\alpha_4 \mathbf{r}_4   
       \label{eq27} \,,
\end{equation}
where 
the scaling factors $(\alpha_1,\alpha_2,\alpha_3,\alpha_4)$ are to be determined.

The determination of the above-mentioned scaling factors is made by imposing the Partition of Unity condition to each of the six rows of matrix $\mathbf{E}$, i.e.:
\begin{equation}
    \sum_{j=1}^{4} (\mathbf{e}_j)_i = 1, \qquad i = 1, \ldots, 6
    \label{eq29} \,.
\end{equation}

The implementation of Eq.~\eqref{eq29} leads to the following overdetermined system:
\begin{equation}
    \begin{bmatrix}
       1 & 0 & 0 & 0 \\
       0 & 2 & 0 & 0 \\
       0 & 1 & 1 & 0 \\
       0 & 1 & 1 & 0 \\
       0 & 0 & 2 & 0 \\ 
       0 & 0 & 0 & 1
    \end{bmatrix}
    \begin{bmatrix}
       \alpha_1  \\
       \alpha_2 \\
       \alpha_3 \\
       \alpha_4
    \end{bmatrix}
    =
    \begin{bmatrix}
        1 \\
        1 \\
        1 \\
        1 \\
        1 \\
        1
    \end{bmatrix}
    \label{eq30} \,.
\end{equation}
The unique solution of Eq.~\eqref{eq30} is
\begin{equation}
    \alpha_1=1, \quad \alpha_2=\alpha_3=\frac{1}{2}, \quad \alpha_4=1 \label{eq31} \,.
\end{equation}
Obviously, in the general case a least-squares method is required, for example, using the QR-factorization algorithm.

Substituting Eq.~\eqref{eq31} into Eq.~\eqref{eq27}, we eventually obtain the transformation matrix:
\begin{equation}
\mathbf{E}
    =
    \begin{bmatrix}
        1 & 0 & 0 & 0 \\
        0 & 1 & 0 & 0 \\
        0 & \frac{1}{2} & \frac{1}{2} & 0 \\
        0 & \frac{1}{2} & \frac{1}{2} & 0 \\
        0 & 0 & 1 & 0 \\
        0 & 0 & 0 & 1
    \end{bmatrix}
    \label{eq32} \,.
\end{equation}
One may verify the following two facts:
\begin{itemize}
    \item The matrix $\mathbf{E}$ is a linear combination of the initial null space matrix $\mathbf{Z}$, i.e., $\mathbf{E}=\mathbf{Z}\mathbf{M}$, with transformation matrix:
    \begin{equation}
\mathbf{M} = 
\begin{bmatrix}
1 & 0 & 0 & 0\\
0 & 1/2 & 1/2 & 0\\
0 & 0 & 1 & 0\\
0 & 0 & 0 & 1
\end{bmatrix}.
\end{equation}
    \item In current case, the matrix $\mathbf{E}$ is identical with the B\'ezier extraction operator $\mathbf{C}_e$ associated with the knot vector of Eq.~\eqref{eq11}.
\end{itemize}
Therefore, the four B-spline functions can be accurately calculated applying  Eq.~\eqref{eq8}, in which the transformation matrix is given as $\mathbf{T}_{hybrid}=\mathbf{E}$. Then, the produced numerical values are nonnegative and coincide with those of the Cox-de Boor, i.e., the strarred lines in Fig.~\ref{fig2}.

This completes the manipulation of Example~2, from which we learn that the algebraic basis $Z=\operatorname{null}(C,\texttt{'r'})$ cannot be used directly as the transformation matrix. It is merely a starting point; it may contain negative values and does not generally enforce the partition of unity. The correct basis $E$ is obtained by extracting the extreme rays of the non-negative cone with prescribed supports, followed by a global scaling via the element-wise partition of unity. In this example, the procedure reproduces the standard operator exactly, illustrating that the desired physical properties (non-negativity and partition of unity) must be imposed algebraically rather than expected from the raw null-space computation.




\textbf{Interim Remarks on Example~2}

Although the proposed method utilizes unclamped local B-splines per element (and not Bernstein polynomials), the results of Example~2 were very instructive, and thus some additional remarks are provided below.

\textbf{Remark-1:} The determination of the scaling factors $\alpha$ leads to the linear system
\begin{equation}
A_\alpha \,\alpha = \mathbf{1}_{m(p+1)},
\end{equation}
which is generally overdetermined (cf. Eq.~\eqref{eq30}), as the number of rows $m(p+1)$ (one equation for each Bernstein coefficient in each element) typically exceeds the number of unknowns $r = m+p$. Nevertheless, the system is theoretically consistent, meaning that an exact solution exists due to the inherent structure of the B-spline basis and the consistency of the partition-of-unity constraints. In practice, the system is solved using the least-squares method, which minimises the residual $\|A_\alpha \alpha - \mathbf{1}\|_2$. This is conveniently performed in numerical computing environments using the backslash operator (e.g., \texttt{$\alpha = A_\alpha \setminus \mathbf{1}$} in MATLAB), which employs a stable QR decomposition with column pivoting. The least-squares solution is unique, provided $A_\alpha$ has full column rank, and it yields the correct scaling factors that enforce the element-wise partition of unity exactly (up to machine precision).

\bigskip
\textbf{Remark-2:} While the above procedure is generally applicable, it is under question whether MATLAB \texttt{null(C,'r')} function, i.e. the computation of the \textit{entire} null-space, is efficient in large-scale problems. For large‑scale applications, the algebraic extraction operator $E$ is computed by solving the sparse linear system (13) directly, rather than by constructing $Z$ and extracting rays. This approach avoids the computational cost of null‑space computation and ray enumeration, and it is readily parallelisable. The two‑element example presented here serves as a validation of the underlying algebraic principles, which are then implemented in the general sparse solver for practical use.

The overdetermined linear system (14) is solved in the least‑squares sense to determine the scaling factors $\alpha$. To ensure numerical robustness, the solution is computed via QR factorization with column pivoting, which avoids the explicit formation of $A_{\alpha}^{T} A_{\alpha}$ and preserves the conditioning of the problem. In MATLAB, this is conveniently performed by the backslash operator, which internally employs a QR algorithm for overdetermined systems.

In practice, for large‑scale problems, we do not compute the null‑space basis $Z$. Instead, we assemble the sparse linear system (Eq. X) directly from the continuity, support, and partition‑of‑unity constraints, and solve it for vec(E) using a sparse QR solver. This approach is numerically stable and avoids the computational cost of null‑space computation and ray enumeration."

\textbf{Column-wise algebraic construction of the extraction operator:} To avoid the computational cost and numerical complexity of computing the full null-space basis \(Z=\operatorname{null}(C,\texttt{'r'})\), we adopt a column-wise algebraic strategy that constructs the Bézier extraction operator \(E\) directly from the continuity constraints and the known supports of the B-spline basis functions.

For each B-spline basis function \(j=1,\dots,r\), let \(\mathcal{S}_j \subset \{1,\dots,n\}\) be the set of Bernstein coefficient indices that lie within its element support (known from the knot vector). We first build the restricted homogeneous system
\begin{equation}
C x = 0, \qquad x_i = 0 \quad \forall \, i \notin \mathcal{S}_j.
\tag{1}
\end{equation}
Let \(N_j \in \mathbb{R}^{n \times d_j}\) be a basis of the solution space of (1). The unnormalised column \(r_j\) is the unique extreme ray of the cone
\begin{equation}
\mathcal{K}_j = \left\{ x = N_j y \;\middle|\; y \in \mathbb{R}^{d_j},\; x \ge 0 \right\}.
\tag{2}
\end{equation}
This extreme ray corresponds to the minimal-support, non-negative solution and is found by enumerating the basic feasible solutions of (2). In practice, for small problems we set \(d_j-1\) variables to zero; for large-scale applications, this step is replaced by a call to a linear programming solver (e.g., \texttt{linprog} in MATLAB). Collecting these rays gives the matrix of unnormalised columns
\begin{equation}
R = [\,r_1,\; r_2,\; \dots,\; r_r\,] \in \mathbb{R}^{n \times r}.
\tag{3}
\end{equation}

The columns of \(R\) have the correct directions but arbitrary scales. To determine the unique scaling, we enforce the element-wise partition of unity. Let \(P_k \in \mathbb{R}^{(p+1)\times n}\) select the Bernstein coefficients of the \(k\)-th element. We seek scaling factors \(\alpha = [\alpha_1,\dots,\alpha_r]^T\) such that
\begin{equation}
P_k \, R \, \operatorname{diag}(\alpha) \, \mathbf{1}_r = \mathbf{1}_{p+1}, \qquad k=1,\dots,m,
\tag{4}
\end{equation}
where \(\mathbf{1}_r\) and \(\mathbf{1}_{p+1}\) are vectors of ones of appropriate dimensions. Equation (4) is a linear system for \(\alpha\):
\begin{equation}
A_\alpha \, \alpha = \mathbf{1}_{m(p+1)},
\tag{5}
\end{equation}
which is overdetermined but consistent. We solve it in the least-squares sense using QR factorisation with column pivoting (e.g., the backslash operator in MATLAB), which ensures numerical stability by avoiding the explicit formation of \(A_\alpha^T A_\alpha\).

Finally, the Bézier extraction operator is assembled as
\begin{equation}
E = R \, \operatorname{diag}(\alpha).
\tag{6}
\end{equation}
The resulting matrix \(E\) satisfies \(C E = 0\), is component-wise non-negative, and fulfills the element-wise partition of unity \(P_k E \mathbf{1}_r = \mathbf{1}_{p+1}\) for every element \(k\). By the uniqueness of the minimal-support non-negative basis, this \(E\) coincides with the standard Bézier extraction operator.

This column-wise procedure completely avoids the computation of the full null-space basis \(Z\), and its cost scales linearly with the number of basis functions \(r\). It is therefore suitable for both small illustrative examples and large-scale practical implementations.


\bigskip
\textbf{Remark-3:} Instead of calculating the entire null-space of matrix C (by applying MATLAB \texttt{null} command on the entire matrix C), we focus on each separate column of the temporal matrix $\mathbf{R}=[r_1,\ldots,r_4]$ imposing the local support and eventually computing a local (column-wise) null space. Next, after the matrix $R$ has been computed, we follow the same procedure as previously.

Next we produce the temporal matrix $\mathbf{R}$.

\textbf{Step 1:} We create the first column (left end)
We know that 1st B‑spline lives only inside 1st element. Therefore, the positions 4,5,6 (of 2nd element) must be zero. Solving the system $Cx=0$ with $x_4=x_5=x_6=0$, we receive the unique solution 
\begin{equation}
    r_1=[1, 0, 0, 0, 0, 0] \,.
\end{equation}

\textbf{Step 2:} We create the 4th column of $R$ (right end)
We know that the 4th B‑spline lives only inside the 2nd element. Therefore, the positions 1,2,3 (of 1st element) will be zero. Solving the system $Cx=0$ with $x_1=x_2=x_3=0$, we receive the unique solution 
\begin{equation}
    r_4=[0, 0, 0, 0, 0, 1] \,.
\end{equation}

\textbf{Step3:} We create the 2nd column (internal, left). 
We know that the 2nd B‑spline lives within both elements, but it does not touch neither the left end of the first element (position 1) nor the right end of the 2nd element (positions 5 and 6). Solving the system $Cx=0$ with $x_1=x_5=x_6=0$, we receive the unique solution 
\begin{equation}
    r_2=[0, 2, 1, 1, 0, 0] \label{eq37} \,.
\end{equation}
Interestingly, the same result may be obtained by constructing the restricted constraint matrix" ($A_{eq}^{(2)}$ of size $5\times6$), of which the first two rows correspond to the imposed continuity conditions whereas the rest three include the locality restrictions $x_1=x_5=x_6=0$, and thus:
\begin{equation}
    A_{eq}^{(2)}=
    \begin{bmatrix}
        0 &0 &1 &-1 &0 &0 \\
        0 &-1 &1 &1 &-1 &0 \\
        1 &0 &0 &0 &0 &0 \\
        0 &0 &0 &0 &1 &0 \\
        0 &0 &0 &0 &0 &1
    \end{bmatrix}
\end{equation}
In this case, the output of MATLAB command \texttt{null(Aeq2,'r')} is exactly that given by Eq.~\eqref{eq37}.

\paragraph{Step 4: Construction of the third column (interior, right).}
The third B-spline basis function has support over both elements, but it does not touch the left edge of the first element (positions 1 and 2) nor the right edge of the second element (position 6). We therefore impose
\[
x_1 = 0, \qquad x_2 = 0, \qquad x_6 = 0,
\]
and solve the homogeneous system \(C x = 0\). From the first row of \(C\), we obtain
\[
x_3 - x_4 = 0 \quad \Longrightarrow \quad x_3 = x_4 = t.
\]
Substituting into the second row of \(C\) (with \(x_2 = 0\)) gives
\[
4x_3 + 4x_4 - 4x_5 = 0 \quad \Longrightarrow \quad x_5 = x_3 + x_4 = 2t.
\]
Thus, for \(t=1\), the unique (up to scaling) direction is
\[
r_3 = [\,0,\; 0,\; 1,\; 1,\; 2,\; 0\,]^T,
\]
which is non-negative. No additional transformation is required to obtain this column.

\bigskip




Again, it should be made clear that the transformation from Bernstein polynomials to B-splines (of constant degree) is not of practical value for the proposed methodology, since this task can be performed easily and more efficiently through the B\'{e}zier extraction operator. Nevertheless, having validated the proposed methodology in Example~2, we are now ready to proceed with more realistic and practical examples, which are presented below.


\bigskip

\noindent\textbf{EXAMPLE-3: A multi‑patch, multi‑degree NURBS curve with local h‑refinements} 

This is an example in which the proposed method is thoroughly implemented. First the set of hybrid basis functions is constructed and its properties are verified. Second the ability of this functional set to handle weights and accurately representing (with machine accuracy) a circular arc is shown (see Result 5: Geometry reconstruction). 

The initial NURBS curve is of degree $p=2$ with the clamped knot vector
\[
\Xi_{\text{init}} = \{0,0,0,\;0.2,\;0.4,\;0.6,\;0.8,\;1,1,1\},
\]
which defines $7$ global basis functions over the interval $[0,1]$.

The curve is decomposed into $5$ unclamped patches, each corresponding to one of the initial knot spans:
\[
[0,0.2],\quad [0.2,0.4],\quad [0.4,0.6],\quad [0.6,0.8],\quad [0.8,1].
\]

Subsequently, we perform the following local operations:

\begin{enumerate}
\item \textbf{Local degree elevation:} Patches 2 and 3 are elevated from degree $2$ to degree $3$. Patches 1, 4, and 5 retain degree $2$.

\item \textbf{Local h-refinement (knot insertion):}
\begin{itemize}
\item \textbf{Patch 1} ($p=2$, interval $[0,0.2]$): one knot is inserted at its middle $\xi=0.1$ with multiplicity $1$. The local knot vector, initially consisting of $2p+2$ entries, becomes
\[
\Xi_1 = \{0,0,0,\;0.1,\;0.2,0.2,0.2\}.
\]

\item \textbf{Patch 2} ($p=3$, interval $[0.2,0.4]$): one knot is inserted at its middle $\xi=0.3$ with multiplicity $1$. The local knot vector becomes
\[
\Xi_2 = \{0.2,0.2,0.2,0.2,\;0.3,\;0.4,0.4,0.4,0.4\}.
\]

\item \textbf{Patch 3} ($p=3$, interval $[0.4,0.6]$): the patch is refined twice, which inserts three knots in total: first at $\xi=0.5$ (multiplicity $1$), then at $\xi=0.45$ and $\xi=0.55$ (each multiplicity $1$). The local knot vector becomes
\[
\Xi_3 = \{0.4,0.4,0.4,0.4,\;0.45,\;0.5,\;0.55,\;0.6,0.6,0.6,0.6\}.
\]

\item \textbf{Patch 4} ($p=2$, interval $[0.6,0.8]$): no refinement. The local knot vector is
\[
\Xi_4 = \{0.6,0.6,0.6,\;0.8,0.8,0.8\}.
\]

\item \textbf{Patch 5} ($p=2$, interval $[0.8,1]$): no refinement. The local knot vector is
\[
\Xi_5 = \{0.8,0.8,0.8,\;1,1,1\}.
\]
\end{itemize}
\end{enumerate}

The resulting local patch dimensions (number of basis functions per patch) are
\[
n = [4,\;5,\;7,\;3,\;3],
\]
with corresponding degrees
\[
p = [2,\;3,\;3,\;2,\;2].
\]
There are $N=5$ patches, hence $4$ interfaces:
\[
\xi = 0.2,\; 0.4,\; 0.6,\; 0.8.
\]
At each interface we impose $C^1$ continuity (i.e., both $C^0$ and $C^1$), which yields $2$ constraints per interface, or $8$ constraints in total. The total number of degrees of freedom for the final hybrid basis is therefore
\[
r = \sum n_i - 2\cdot(N-1) = (4+5+7+3+3) - 8 = 22 - 8 = 14.
\]

This data set completely defines the multi‑patch, multi‑degree spline space that serves as input for the algebraic merging algorithm.

\textbf{Further Results for Example~3:}

We first examine the state of the basis before the algebraic merging is applied. Figure~\ref{fig:decomp_basis} shows the 22 local basis functions obtained after decomposition and local refinements (patch dimensions $n=[4,5,7,3,3]$). At this stage, no continuity constraints have been enforced across the four interfaces. The discontinuities at $\xi = 0.2, 0.4, 0.6,$ and $0.8$ are clearly visible: the basis functions from adjacent patches do not connect smoothly. This $C^{-1}$ space is the direct output of the local patch construction and serves as the input to the merging algorithm.
\begin{figure}[ht!]
\centering
\includegraphics[width=1.00\linewidth]{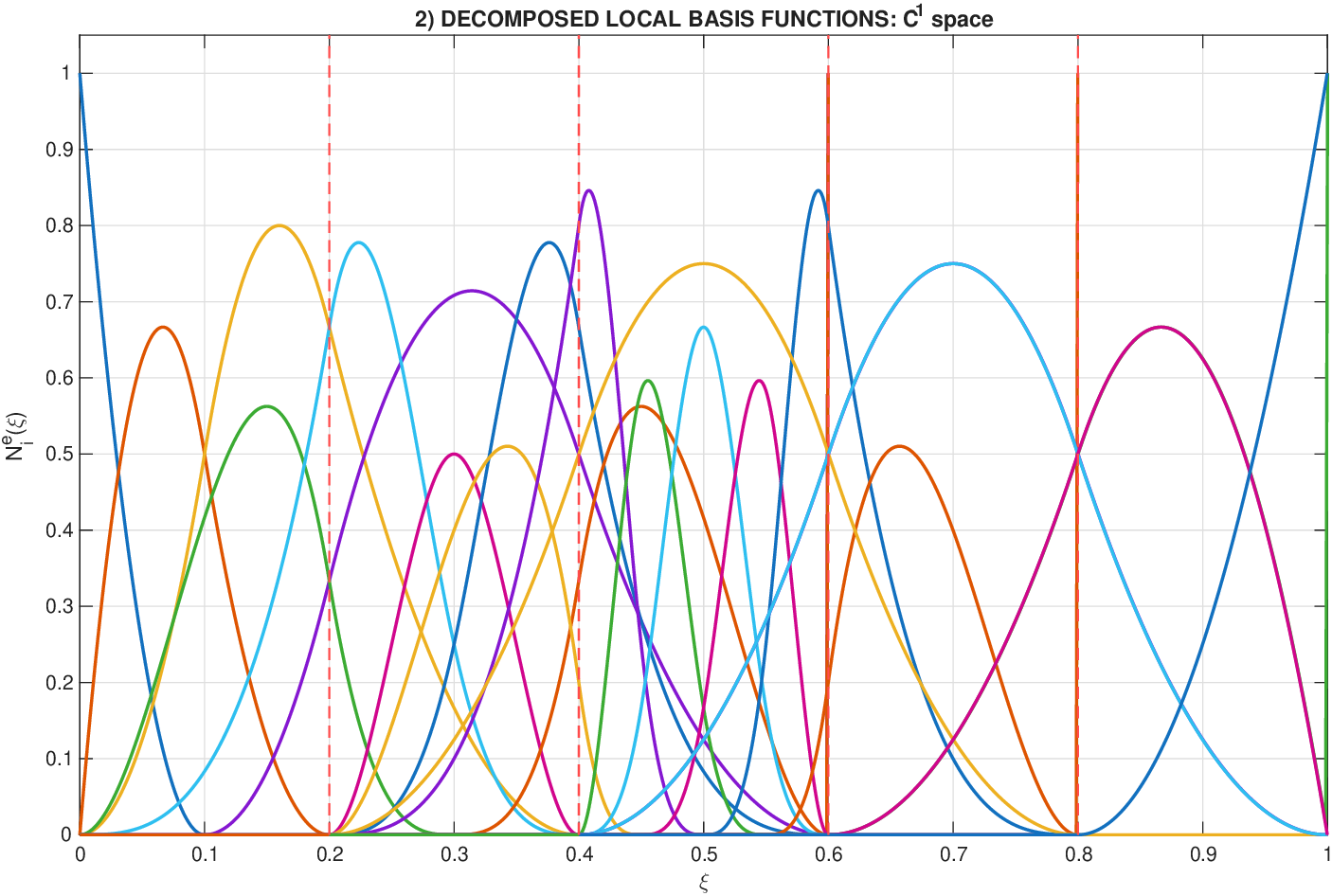} 
\caption{Decomposed local basis functions before merging ($C^{-1}$ space). The 22 local functions are discontinuous at the interfaces $\xi = 0.2, 0.4, 0.6,$ and $0.8$, as they come from independent unclamped patches with no continuity enforcement.}
\label{fig:decomp_basis}
\end{figure}

In contrast, Figure~\ref{fig:hybrid_basis} shows the final 14 hybrid basis functions after the algebraic merging. The discontinuities have been removed, and the functions now form a $C^1$-continuous, non‑negative, partition‑of‑unity basis over the entire domain. The reduction from 22 to 14 basis functions corresponds exactly to the $2$ constraints imposed per interface ($4$ interfaces $\times 2 = 8$ constraints), confirming the expected dimension of the global space.
\begin{figure}[ht!]
\centering
\includegraphics[width=1.00\linewidth]{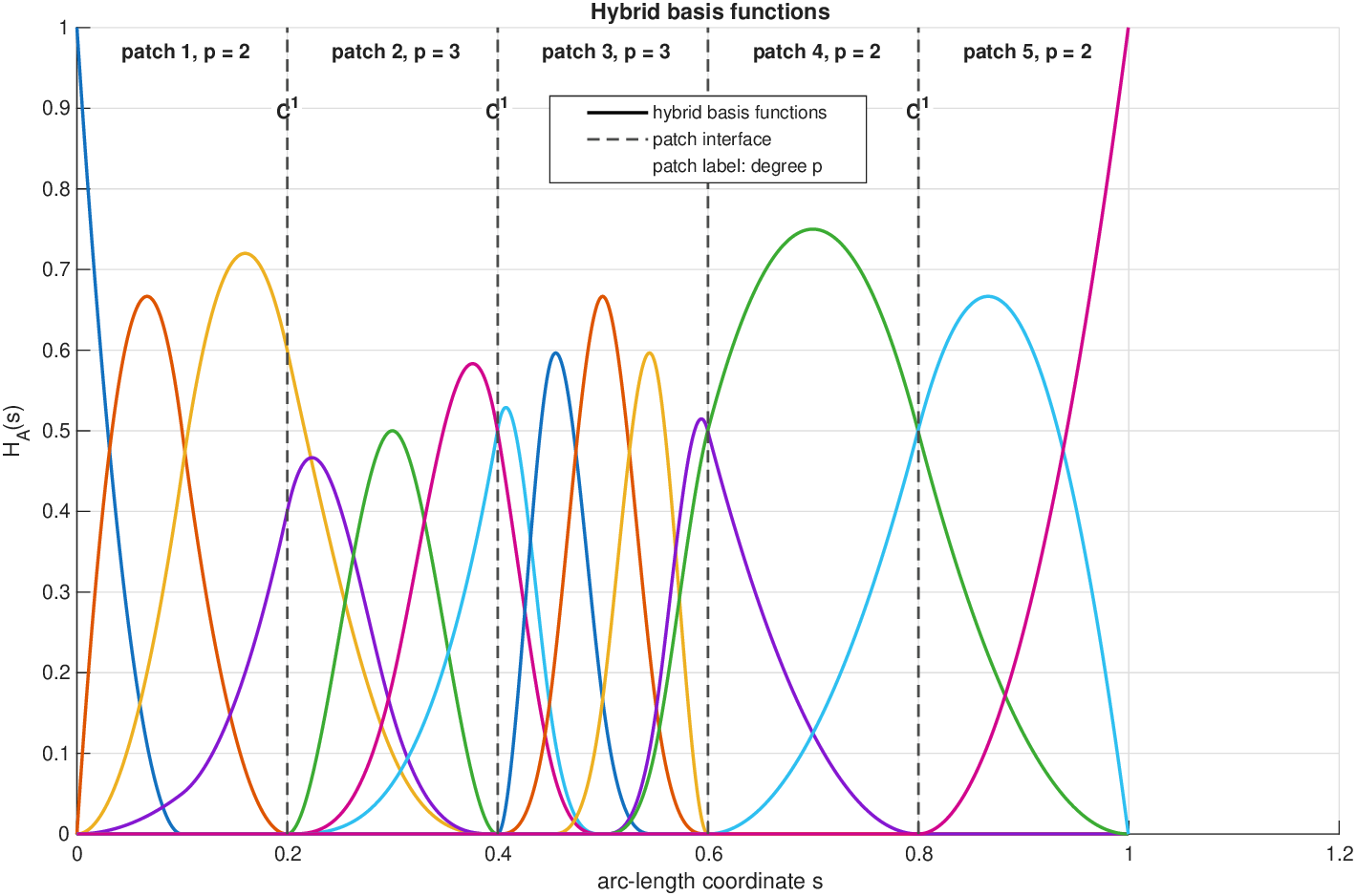} 
\caption{The fourteen basis functions.}
\label{fig:hybrid_basis}
\end{figure}

More details regarding the hybrid basis functions are provided below. 

\paragraph{Result 1: Hybrid basis functions.}
Figure~\ref{fig:hybrid_basis} shows the 14 basis functions of the final global basis using $T_{\text{hybrid}}$. All functions are non‑negative and satisfy the partition of unity. At each of the four interfaces ($\xi=0.2,0.4,0.6,0.8$), the functions from adjacent patches join with $C^1$ continuity, as confirmed by the continuity check below.

\paragraph{Local support of the hybrid basis.}
After constructing the 14 global basis functions, it is instructive to examine their local support on the refined mesh. Table~\ref{tab:active_basis} lists, for each of the 10 elements, which of the 14 basis functions are active (i.e., have at least one non‑zero entry in the corresponding rows of $T_{\text{hybrid}}$). The number of active functions varies from 2 to 5, reflecting both the varying polynomial degrees ($p=2$ for patches 1,4,5 and $p=3$ for patches 2,3) and the effect of the $C^1$ couplings introduced by the algebraic merging. This local support pattern confirms that the basis is sparse and inherently local, which is essential for computational efficiency.

\begin{table}[htbp]
\centering
\caption{Active global basis functions per element.}
\begin{tabular}{c|c|c}
Element & Active basis functions & Count \\ \hline
$E_1=[0,0.1]$   & $\phi_1,\phi_2,\phi_3,\phi_4$         & 4 \\
$E_2=[0.1,0.2]$  & $\phi_2,\phi_3,\phi_4$                & 3 \\
$E_3=[0.2,0.3]$  & $\phi_3,\phi_4,\phi_5,\phi_6,\phi_7$  & 5 \\
$E_4=[0.3,0.4]$  & $\phi_3,\phi_4,\phi_5,\phi_6,\phi_7$  & 5 \\
$E_5=[0.4,0.45]$ & $\phi_6,\phi_7,\phi_8,\phi_9$         & 4 \\
$E_6=[0.45,0.5]$ & $\phi_6,\phi_7,\phi_8,\phi_9,\phi_{10}$ & 5 \\
$E_7=[0.5,0.55]$ & $\phi_8,\phi_9,\phi_{10},\phi_{11},\phi_{12}$ & 5 \\
$E_8=[0.55,0.6]$ & $\phi_9,\phi_{10},\phi_{11},\phi_{12}$ & 4 \\
$E_9=[0.6,0.8]$  & $\phi_{11},\phi_{12},\phi_{13}$       & 3 \\
$E_{10}=[0.8,1]$ & $\phi_{13},\phi_{14}$                 & 2 \\
\end{tabular}
\label{tab:active_basis}
\end{table}

\paragraph{Result 2: Continuity verification.}
For each interface, we evaluate the jump in value and in first derivative 
for every basis function. The maximum jumps are reported in Table~\ref{tab:continuity}. 
All jumps are below machine precision, confirming that the basis fulfills 
the prescribed $C^1$ continuity, as illustrated in Fig.~\ref{fig6}.

\begin{table}[htbp]
\centering
\begin{tabular}{c|c|c}
Interface $\xi$ & $\max| \Delta u |$ & $\max| \Delta u' |$ \\
\hline
0.2 & $1.11\times 10^{-16}$ & $2.22\times 10^{-16}$ \\
0.4 & $1.11\times 10^{-16}$ & $1.11\times 10^{-16}$ \\
0.6 & $2.22\times 10^{-16}$ & $3.33\times 10^{-16}$ \\
0.8 & $1.11\times 10^{-16}$ & $1.11\times 10^{-16}$ \\
\end{tabular}
\caption{Maximum continuity jumps at interfaces.}
\label{tab:continuity}
\end{table}

\begin{figure}[ht!]
\centering
\includegraphics[width=1.1\linewidth]{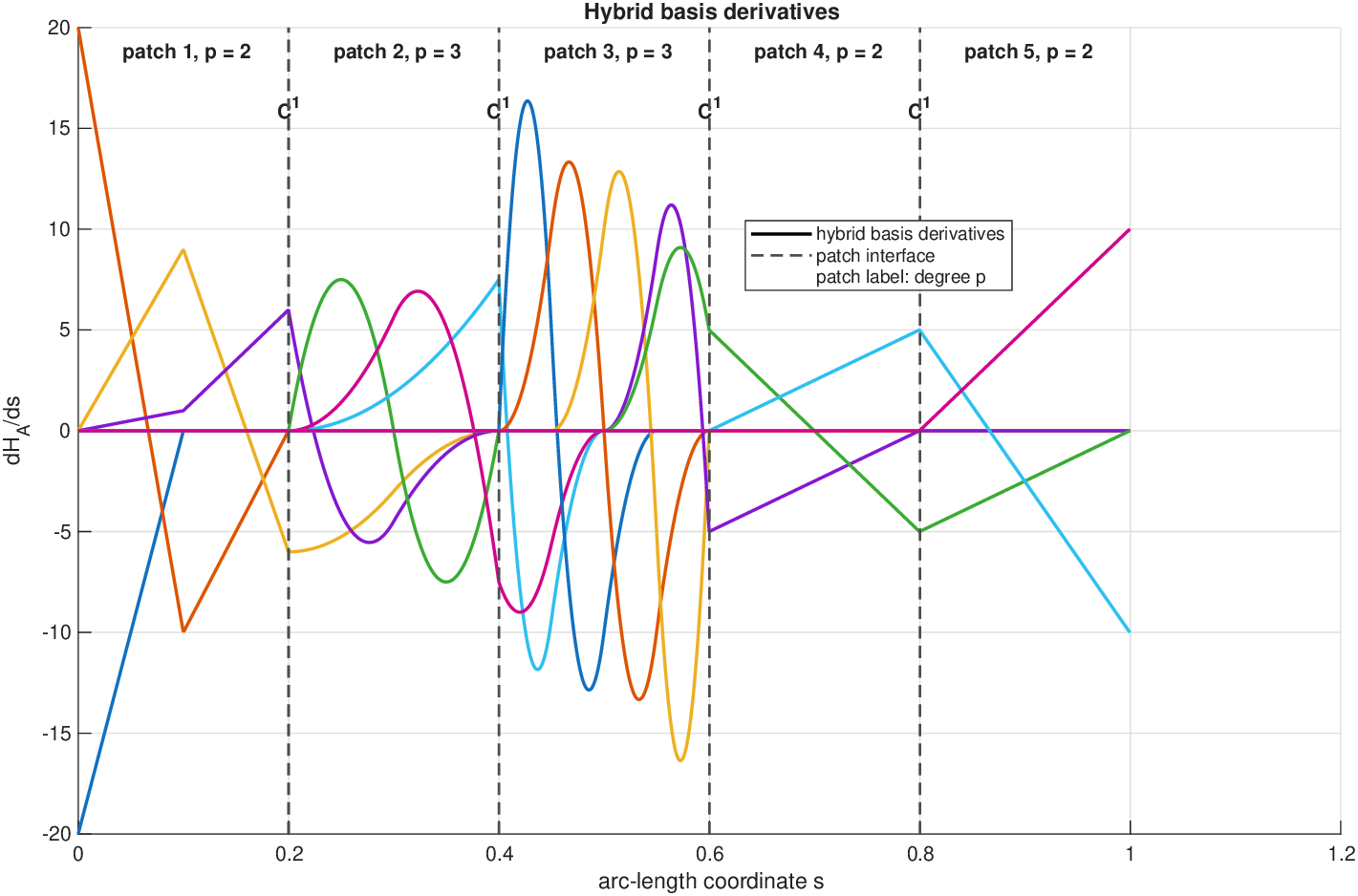} 
\caption{Continuity quality of derivatives (14 basis functions).}
\label{fig6}
\end{figure}

\paragraph{Result 3: Partition of unity.}
A plot of the sum of all basis functions over the domain (not shown) reveals that this sum is exactly $1$ everywhere (up to rounding error), confirming that the basis satisfies the partition of unity property. This is a necessary condition for invariance under constant fields and for proper interpolation.

\paragraph{Result 4: Conditioning of the basis.}
The hybrid basis matrix $T_{\text{hybrid}}$ has size $22 \times 14$ and full column rank ($\operatorname{rank}=14$). The condition number of $T_{\text{hybrid}}^T T_{\text{hybrid}}$ is $\kappa = 1.23\times 10^3$, which is moderate and indicates that the basis is well‑conditioned for numerical computations.

\paragraph{Result 5: Geometry reconstruction.}
In this paragraph we utilize the set of the 14 hybrid basis functions to investigate their capability of accurately representing a quarter-circle. The model starts from a rational quadratic B\'ezier patch (i.e., $p=2$) with the usual control points $P_0(1,0),P_1(1,1),P_2(0,1)$ and associated weights $w_0=1,w_1=\frac{\sqrt{2}}{2}$. Then, knot insertion and degree elevations according to the problem definition of Example~3 are performed, so that we eventually obtain the above-mentioned 14  hybrid basis, which now are directly related to the circular arc (see, Fig.~\ref{fig:circular_hybrid_basis}). In this procedure, the original curve is reconstructed by projecting its 14 control points onto the hybrid space. The location of the control points and the associated weights are shown in Table-3.
\begin{table}[htbp]
\centering
\caption{Cartesian coordinates and weights of 14 control points.}
\label{tab:control_points}
\begin{tabular}{c S[table-format=2.4] S[table-format=2.4] S[table-format=2.4]}
\toprule
{Point} & {\(x\)} & {\(y\)} & {\(w\)} \\
\midrule
     1  &  1.0000  &  0.0000  &  1.0000 \\
     2  &  1.0000  &  0.0728  &  0.9707 \\
     3  &  0.9077  &  0.4506  &  0.8708 \\
     4  &  0.9853  &  0.1980  &  0.9294 \\
     5  &  0.9010  &  0.4438  &  0.8750 \\
     6  &  0.7493  &  0.6715  &  0.8516 \\
     7  &  0.8768  &  0.4930  &  0.8672 \\
     8  &  0.7640  &  0.6470  &  0.8545 \\
     9  &  0.7079  &  0.7079  &  0.8531 \\
    10  &  0.6470  &  0.7640  &  0.8545 \\
    11  &  0.7176  &  0.7163  &  0.8476 \\
    12  &  0.4494  &  0.9078  &  0.8711 \\
    13  &  0.1502  &  1.0000  &  0.9414 \\
    14  &  0.0000  &  1.0000  &  1.0000 \\
\bottomrule
\end{tabular}
\end{table}

\begin{figure}[ht!]
\centering
\includegraphics[width=1.1\linewidth]{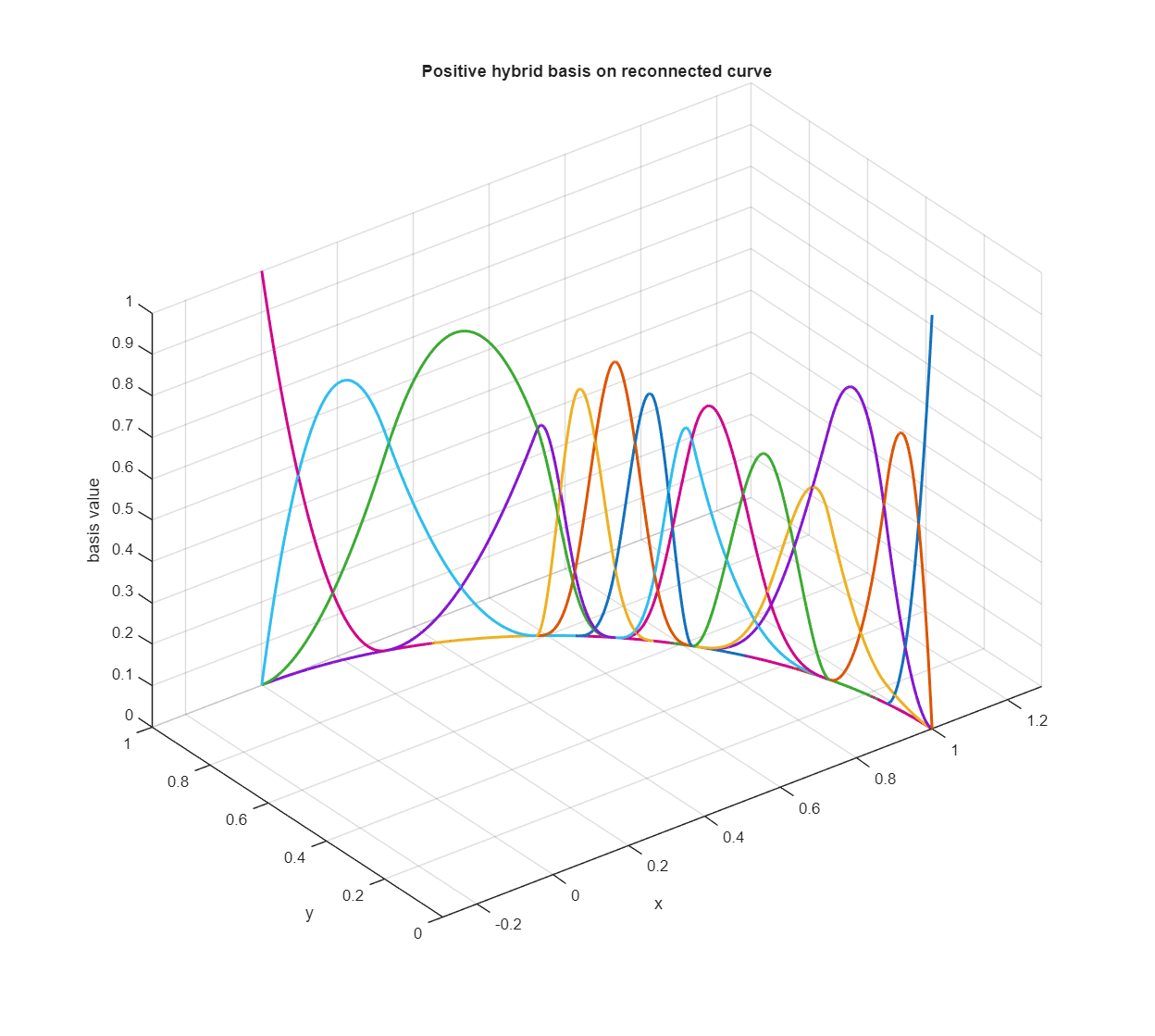} 
\caption{Hybrid basis functions on the circular arc (14 basis functions).}
\label{fig:circular_hybrid_basis}
\end{figure}

\bigskip

\bigskip

\noindent\textbf{EXAMPLE-4: A two‑patch, multi‑degree NURBS curve of a straight clamped beam}

This example shows the ability of the proposed hybrid basis to numerically solve a well-established boundary value problem of linear mechanics.

The hybrid basis is used to solve a horizontal cantilever beam of length ($L=1$) with a point load at $x=a=0.65$. The material properties are $EA=10^6$ and $EI=10^2$. The boundary conditions are clamped at the left end ($u_x=u_y=\theta=0$), and a vertical load $P=-10$ is applied at the interior point. The numerical solution is compared with the analytical Euler–Bernoulli solution for a straight beam. The governing equation is
\begin{equation}
    E I \frac{d^4 w}{dx} = P \delta(x-a) \,,
\end{equation}
where $\delta$ is Dirac delta function, and the weak IGA formulation leads to elements $k_{ij}$ of the stiffness matrix given --in terms of univariate NURBS $N_k$-- by
\begin{equation}
    k_{ij} = \int_0^{L} \frac{d^2 N_i}{dx^2} \frac{d^2 N_j}{dx^2} \, dx \,.
\end{equation}

The analytical solution is a piecewise-defined function, as follows:
\begin{equation}
u(x) =
\begin{cases}
u_{1}(x) = \dfrac{P x^{2}}{6EI}(3a - x), & 0 \le x \le a \quad\text{(before the load)},\\[1.2ex]
u_{2}(x) = \dfrac{P a^{2}}{6EI}(3x - a), & a \le x \le L \quad\text{(after the load)}.
\end{cases} \label{eq39}
\end{equation}

Equation~\eqref{eq39} suggests that the optimal hybrid model to approximate the above situation is an assemblage of two B-spline elements ($[0,a],[a,1]$) with $p_1=3,p_2=1$. The latter model includes four basis functions; these and their associated derivatives are shown in Fig.~\ref{fig8} and Fig.~\ref{fig9}, respectively.

\begin{figure}[ht!]
\centering
\includegraphics[width=1.0\linewidth]{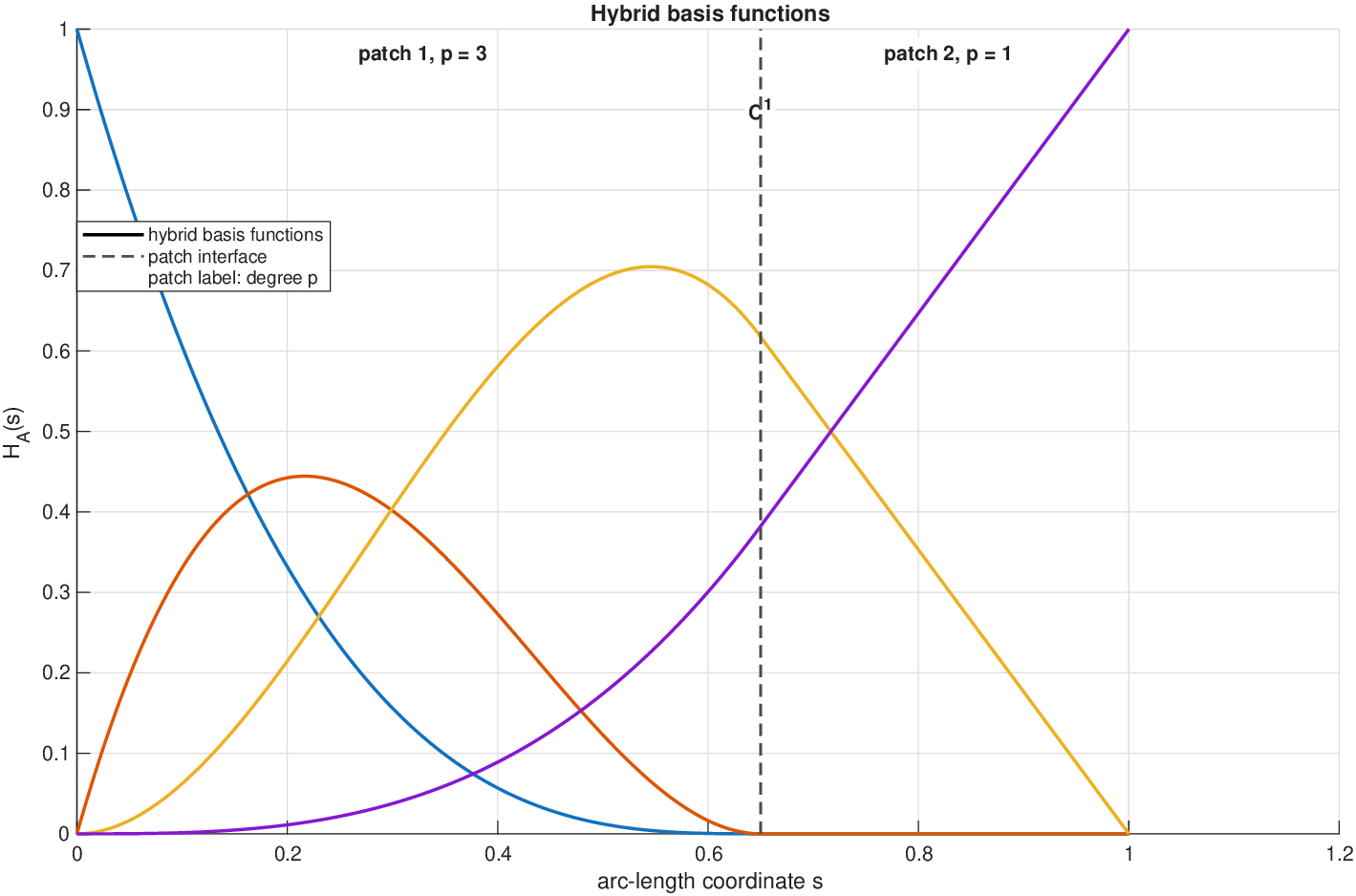} 
\caption{Hybrid basis functions on the cantilever beam (4 basis functions).}
\label{fig8}
\end{figure}
\begin{figure}[h!]
\centering
\includegraphics[width=1.0\linewidth]{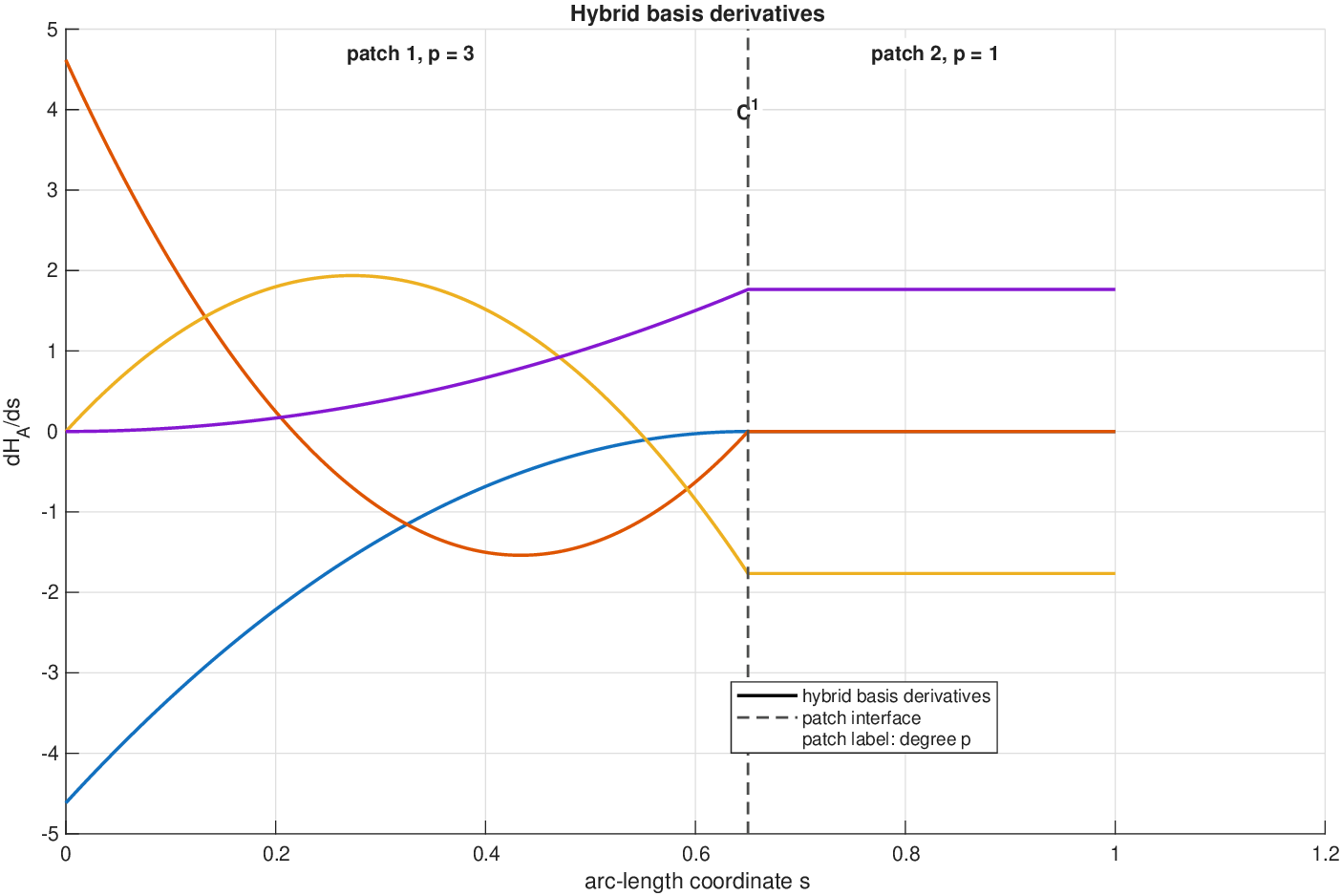} 
\caption{Derivatives of Hybrid basis functions on the cantilever beam (4 basis functions).}
\label{fig9}
\end{figure}   

Of course, it is possible to solve the same problem applying higher degrees as well. For example, the case $p_1=4, p_2=2$ leads to 14 basis functions, which are presented in Fig.~\ref{fig10} whereas their derivatives in Fig.~\ref{fig11}.
\begin{figure}[ht!]
\centering
\includegraphics[width=1.0\linewidth]{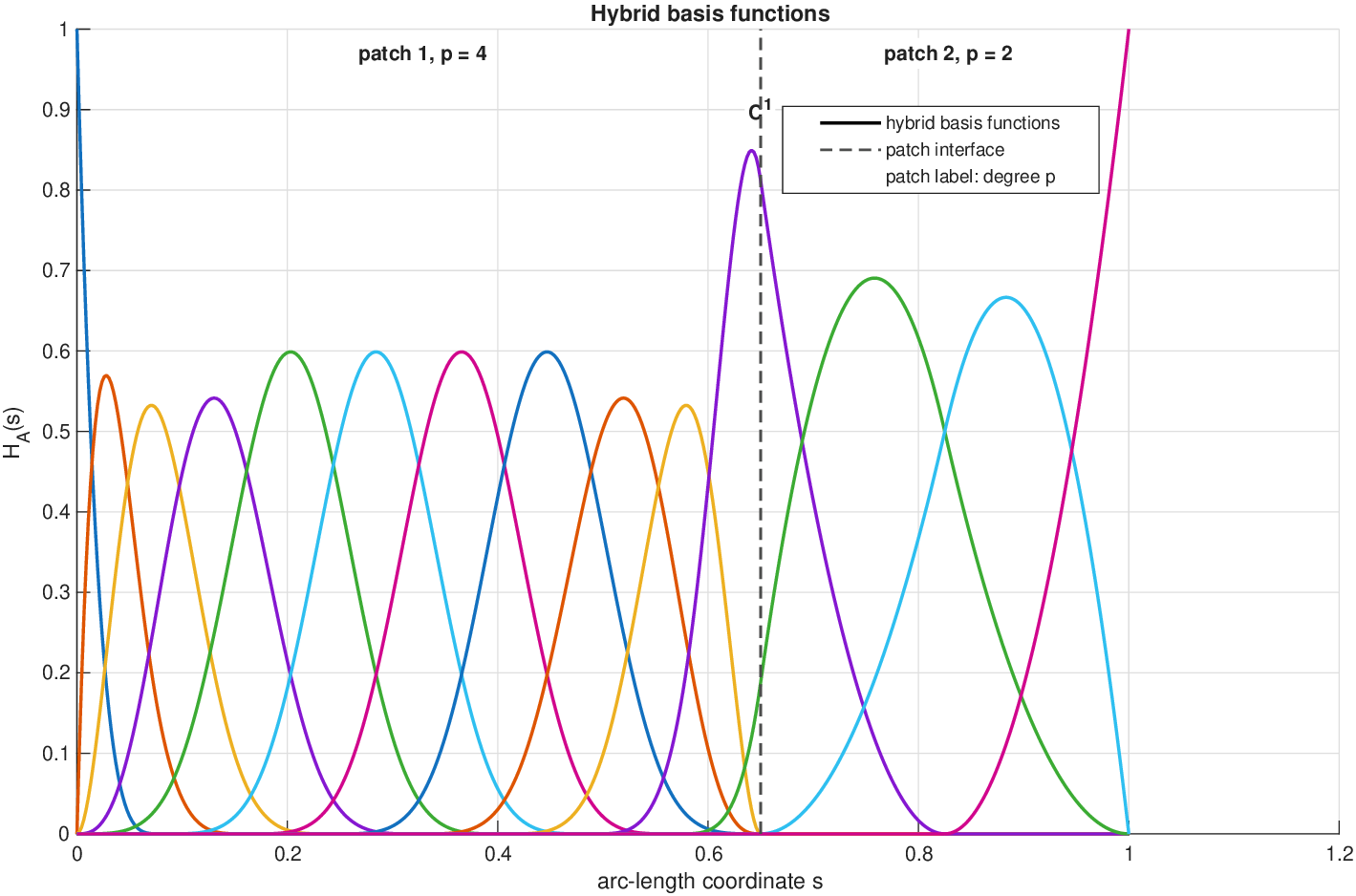} 
\caption{Hybrid basis functions on the cantilever beam (4 basis functions).}
\label{fig10}
\end{figure}
\begin{figure}[h!]
\centering
\includegraphics[width=1.0\linewidth]{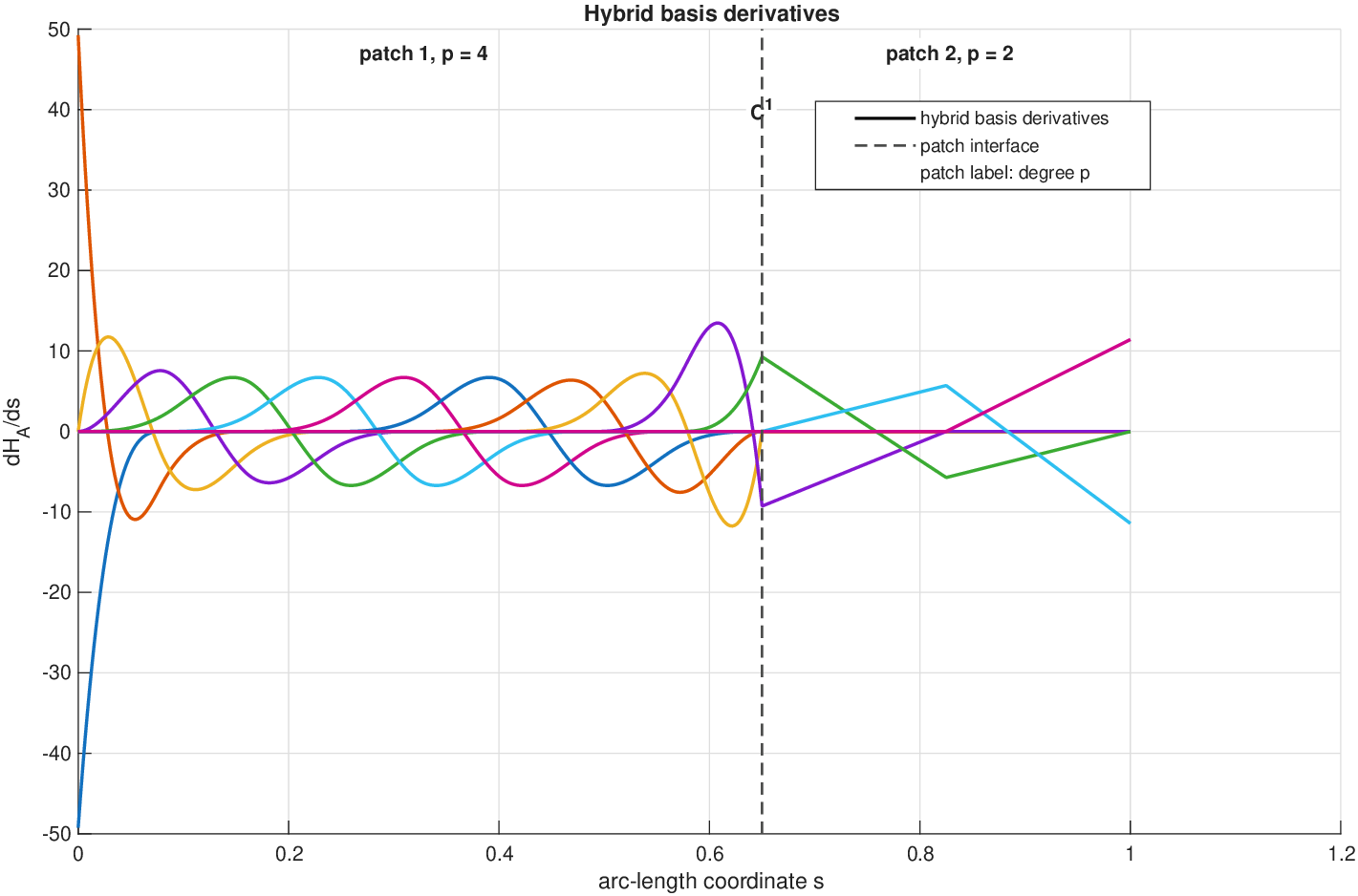} 
\caption{Derivatives of Hybrid basis functions on the cantilever beam (4 basis functions).}
\label{fig11}
\end{figure}   

The corresponding error norms are $L_2=4.82 \times 10^{-8}$ and $L_2=6.41 \times 10^{-8}$, which indicates convergence when tending to the accurate choice $p_1=3,p_2=1$. The nonzero value is attributed to truncation errors.


\bigskip

\noindent\textbf{EXAMPLE 5: Verification of the hybrid Isogeometric formulation under local $p$-refinement}

The objective of this example is to verify the accuracy and convergence
properties of the proposed hybrid Isogeometric formulation. Unlike a
classical convergence study based on uniform mesh refinement, the present
test investigates the capability of the hybrid basis to reproduce the
exact solution through local polynomial enrichment while maintaining a
fixed geometric discretization.

The computational domain consists of two NURBS patches joined through 
a hybrid coupling interface. The geometry is kept unchanged throughout 
the analysis; therefore no $h$-refinement is performed. 
Instead, the local approximation space is enriched by alternately 
increasing the polynomial degree of one patch at a time. 
Consequently, the sequence of approximation spaces is

\[
(2,2)
\rightarrow
(3,2)
\rightarrow
(3,3)
\rightarrow
(4,3)
\rightarrow
(4,4)
\rightarrow
(5,4),
\]

where each ordered pair denotes the polynomial degree of the left and
right patches, respectively.

The purpose of adopting local $p$-refinement instead of $h$-refinement
is twofold.

First, the exact solution is smooth inside each patch but possesses
different polynomial orders in the two subdomains. Therefore, increasing
the local approximation order provides the most direct way to investigate
whether the hybrid basis reproduces the exact polynomial space without
introducing unnecessary mesh refinement.

Second, keeping the mesh fixed isolates the influence of the hybrid
coupling itself. Since no additional elements are introduced during the
analysis, any reduction of the discretization error can be attributed
solely to the enrichment of the approximation space and to the exact
coupling enforced by the hybrid basis functions.

Uniform $h$-refinement is therefore intentionally omitted in this
verification example. A dedicated $h$- or $hp$-adaptivity study is
presented separately.

The nonlinear boundary value problem considered is

\begin{equation}
-
\frac{d}{dx}
\left(
u\frac{du}{dx}
\right)
=
f(x),
\qquad
x\in(0,1),
\end{equation}

subject to Dirichlet boundary conditions obtained from the exact
manufactured solution

\begin{equation}
u(x)=
\begin{cases}
1+x^5,
&
0\le x\le a,
\\[2mm]
1+\dfrac{5a}{4}x^4-\dfrac{a^5}{4},
&
a<x\le1,
\end{cases}
\end{equation}

where the interface is located at

\[
a=0.5.
\]

The forcing function $f(x)$ is obtained analytically by substituting the
exact solution into the governing differential equation, ensuring that
the exact solution satisfies the nonlinear problem identically.

This manufactured solution is particularly suitable for assessing local
polynomial adaptivity because the polynomial order differs between the two
patches. The left patch requires a fifth-order polynomial representation,
whereas the right patch requires only a fourth-order polynomial.
Consequently, the exact solution belongs to the discrete approximation
space as soon as the local polynomial degrees become

\[
(p_1,p_2)=(5,4).
\]

At convergence, the hybrid basis functions are shown in Fig.~\ref{fig12}. 
%
\begin{figure}[h!]
\centering
\includegraphics[width=1.0\linewidth]{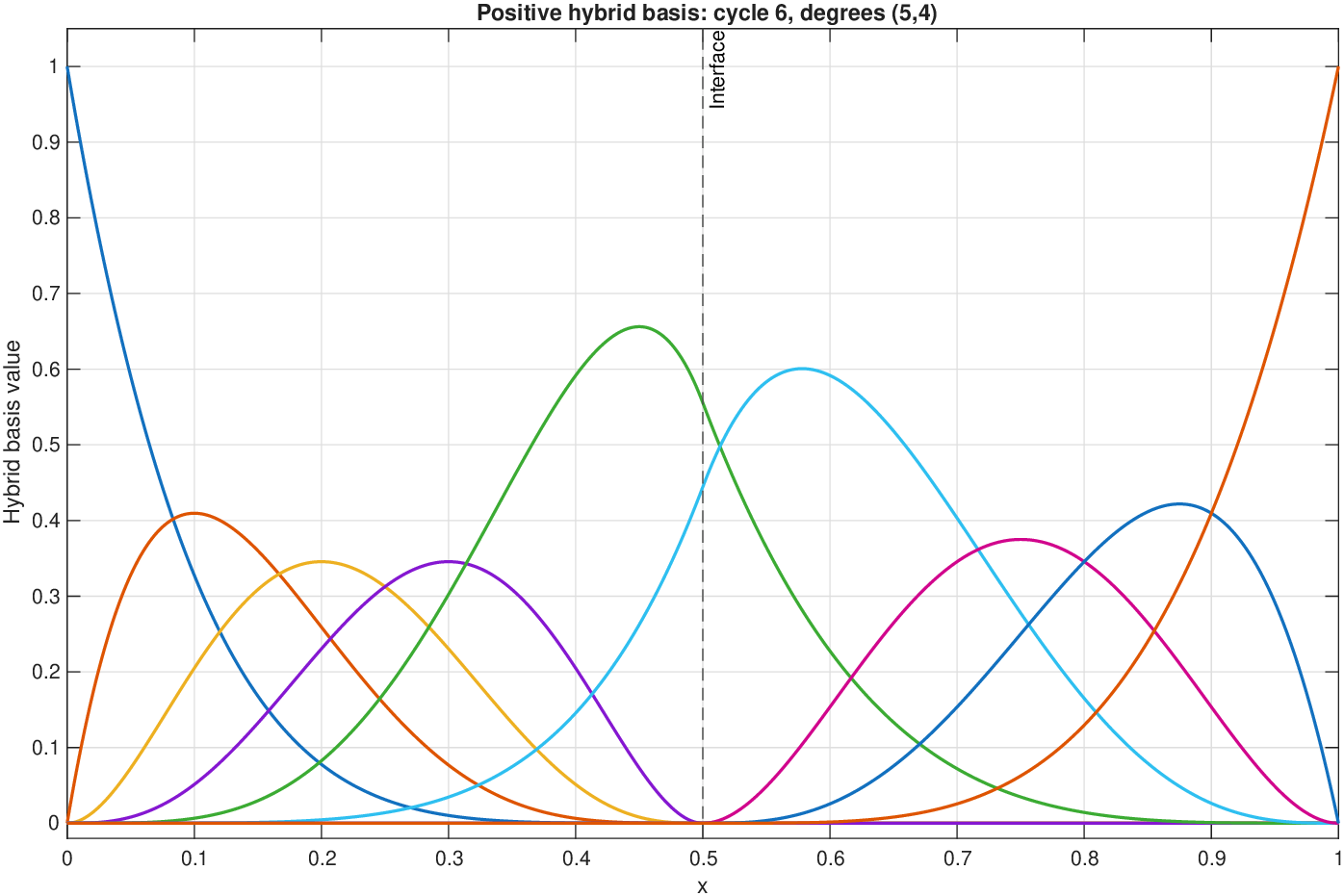} 
\caption{Hybrid basis functions at convergence $(p_1=5,p_2=4)$.}
\label{fig12}
\end{figure}   

From this point onward, the discretization error is expected to decrease
to machine precision, thereby providing a stringent verification of the
hybrid basis construction, the interface coupling, and the nonlinear
solver.

Actually, convergence quality of the proposed method versus the standard uniform degree elevation approach (the latter implemented into GeoPDEs), is shown in Fig.~\ref{fig13}, where one may observe that the proposed method performs well.
\begin{figure}[h!]
\centering
\includegraphics[width=1.0\linewidth]{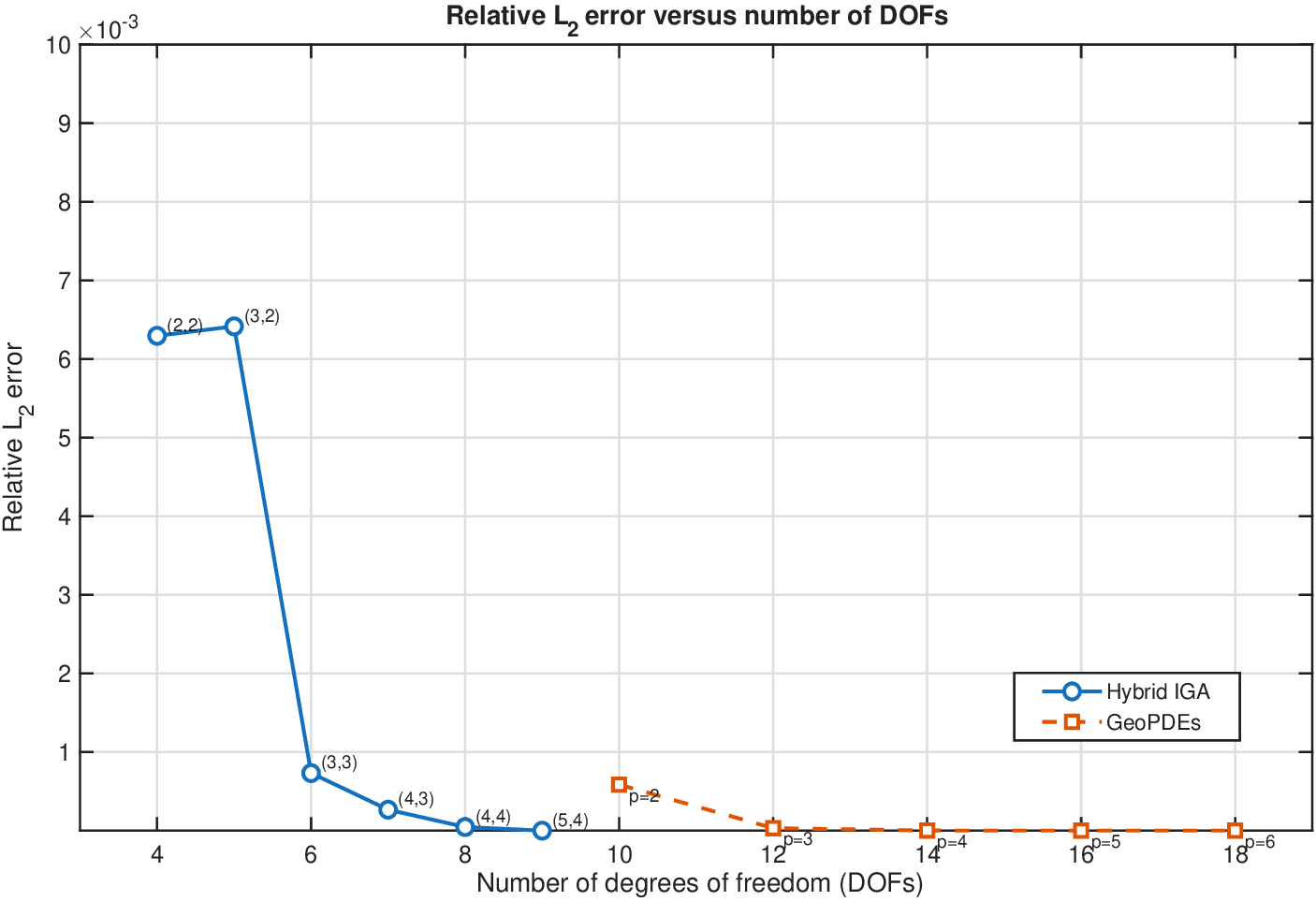} 
\caption{Convergence of the proposed Hybrid IGA method versus the uniform degree elevation method (GeoPDEs).}
\label{fig13}
\end{figure}   






\section{Discussion}

\textcolor{black}{The algebraic merging algorithm presented in this work is completely independent of the local basis type. In Example 2, the local patches were \textit{Bernstein} polynomials; in Example 3, they are \textit{unclamped} NURBS. In both cases, the same procedure is applied: build the continuity constraints, solve for a non‑negative basis of the null space, and enforce the partition of unity via non‑negative least squares. The only input required is the evaluation of the local basis functions and their derivatives at the interfaces. This makes the method \textit{universally applicable} to any spline or polynomial representation that admits a local basis.}

In Example~2 of Section~\ref{sec:validation}, we demonstrate that the proposed algebraic construction reproduces the classical B\'ezier extraction operator for a two-element quadratic spline.

Unlike the pedagogical Example~2, which used MATLAB's \texttt{null} and ray enumeration, the proposed general method employs a column-pivoted QR factorization followed by linear programming to generate non-negative null-space vectors, and non-negative least squares for the partition-of-unity scaling. This approach is computationally more efficient, avoids dense matrices, and scales linearly with the number of degrees of freedom.

The proposed methodology should not be interpreted as a new spline technology in the traditional sense. Rather, it provides a reconstruction framework that allows existing spline representations to be decomposed, manipulated locally, and subsequently reconstructed into admissible analysis spaces. The framework therefore focuses on the relationship between local spline operations and global admissibility rather than on the definition of a new basis family. In this section, we discuss several important aspects, implications, and limitations of the proposed approach.

\subsection{Adaptive Characteristics}

The decomposition procedure naturally supports local modifications since each Active Section possesses its own local spline description. As established in Section 4.2, an Active Section contains exactly \(2p+2\) knots and encapsulates all basis functions non-zero on a given knot span. Local knot insertion may be introduced in selected regions without directly modifying neighboring Active Sections, since the outermost \(p\) knots of each section act as buffers that isolate the internal spline data from adjacent regions. Similarly, degree elevation may be performed independently within individual sections. In this sense, the framework exhibits adaptive characteristics.

However, the present work does not introduce a complete adaptive refinement strategy driven by error indicators or solution-based refinement criteria. Instead, the focus is placed on the reconstruction mechanism required after local modifications have been performed. The development of fully adaptive refinement procedures, including error estimation and refinement indicators, remains a topic for future investigation. Nevertheless, the framework provides a flexible foundation for adaptive strategies: since local modifications are entirely independent, refinement criteria can be evaluated locally and applied directly to the corresponding Active Sections, with the reconstruction stage ensuring global admissibility.

\subsection{Relationship to Hierarchical Methods}

The proposed methodology is not inherently hierarchical. Modern hierarchical constructions such as THB-splines~\cite{giannelli2012thb} and more recent adaptive spline technologies achieve locality through hierarchical activation and refinement mechanisms, where admissibility is maintained throughout the refinement process. In contrast, the present framework operates through decomposition and reconstruction: local spline entities are treated independently and continuity is recovered through algebraic constraints.

Nevertheless, hierarchical refinement strategies could potentially be incorporated within individual Active Sections prior to reconstruction. For example, a hierarchical basis could be constructed within a selected Active Section using standard hierarchical refinement techniques, and the resulting locally refined section could then be integrated into the global reconstruction process. Consequently, the proposed framework should be viewed as complementary to hierarchical methodologies rather than as a replacement for them.

Unlike classical macro-element approaches, the local entities employed here are obtained directly from an exact NURBS decomposition. Each Active Section is an unclamped spline entity, meaning its endpoint knots do not have multiplicity \(p+1\). This choice, discussed in Section 4.2, is deliberate: it preserves the native spline character of the local representation and permits unrestricted local modifications within the interior of the section, while the buffer knots retain the information required for subsequent reconstruction.

\subsection{Compatibility with Existing Spline Technologies}

The construction of analysis-suitable spline spaces across multiple patches has received considerable attention in the isogeometric analysis literature. In particular, smooth multi-patch spline spaces, analysis-suitable parameterizations, and continuity-preserving constructions have been investigated extensively~\cite{collin2016analysis, hughessangallitakacstoshniwal2021}. The proposed reconstruction framework does not seek to replace these technologies. Instead, it provides an alternative viewpoint in which admissibility is recovered after local spline manipulation rather than maintained continuously during space construction.

The reconstruction procedure is largely independent of the specific spline representation employed within each Active Section. Although the present work focuses on NURBS-based descriptions, the underlying reconstruction philosophy is not restricted to a particular spline technology. Potential extensions may include hierarchical spline representations~\cite{giannelli2012thb}, truncated hierarchical splines (THB-splines)~\cite{giannelli2012thb}, locally refined splines (LR-splines)~\cite{dokken2013locally}, extraction-based spline technologies~\cite{borden2011isogeometric, scott2012analysis}, U-splines~\cite{herrema2018usplines}, and multi-patch spline descriptions~\cite{takacs2025multiresolution, collin2016analysis}. The primary requirement is the ability to construct continuity constraints between neighboring local entities. Once such constraints can be expressed algebraically, the reconstruction framework may be applied independently of the specific spline technology used within the individual Active Sections.

\subsection{Alternative Reconstruction Strategies}

The reconstruction procedure presented in this work relies on the null space of the global constraint operator. This choice was motivated by the ability of null-space operators to generate admissible spaces satisfying the prescribed continuity requirements while offering precise control over the properties of the resulting basis (positivity, locality, and partition of unity). However, the framework itself is not fundamentally tied to a null-space formulation.

Alternative reconstruction procedures may be considered, including direct constraint elimination, optimization-based reconstruction, penalty formulations, mortar-type approaches, and weak coupling methods. The null-space approach should therefore be interpreted as one possible realization of the broader reconstruction framework. The linear programming formulation employed in the present work offers a particular balance between computational cost and the ability to enforce strict non-negativity and locality. For problems with a large number of degrees of freedom, alternative strategies such as constraint elimination or reduced-order modeling may prove more efficient.

\subsection{Local Flexibility versus Global Admissibility}

A natural question concerns whether the global nature of the Hybrid Reconstruction Operator eliminates the locality introduced by the Active Section framework. At first sight, the reconstruction process appears to replace a collection of local spline entities by a global algebraic operator, \(\mathbf{T}_{\text{hybrid}}\). Such an interpretation would suggest that the advantages gained during decomposition are ultimately lost during reconstruction. However, this is not the case.

The proposed framework separates two distinct tasks that are traditionally coupled in spline-based analysis. The first task concerns local spline manipulation, including knot insertion, degree elevation, basis modification, and local refinement. These operations are performed independently within each Active Section—confined to the interior interval between the buffer knots—and do not require the immediate maintenance of global continuity constraints. The second task concerns the recovery of an admissible approximation space. This is achieved only after the local modifications have been completed, through the construction of the Hybrid Reconstruction Operator.

Consequently, locality and admissibility are not competing objectives but operate at different stages of the methodology. Locality is exploited during geometric and spline manipulations, whereas admissibility is recovered afterwards through algebraic reconstruction. The numerical experiments presented in this work support this viewpoint. As the number of Active Sections increases, the reconstructed spaces retain a substantial portion of the local approximation freedom while satisfying all continuity requirements. For the examples considered, approximately seventy percent of the disconnected local degrees of freedom remain available after admissibility reconstruction. Furthermore, the resulting Hybrid Reconstruction Operators remain well conditioned, while partition of unity, geometric exactness, and continuity are preserved to machine precision.

These observations suggest that the principal advantage of the proposed framework is not the elimination of global admissibility, but rather the decoupling of local spline manipulation from continuity enforcement. Local modifications may therefore be performed with a high degree of flexibility, while globally consistent approximation spaces are recovered only when required for analysis.

\subsection{Clamped versus Unclamped Reconstruction}

Two reconstruction strategies were investigated during the development of the present framework. The first strategy converts each Active Section into a clamped local representation prior to reconstruction. In this setting, interface degrees of freedom become explicitly identifiable and continuity constraints may be imposed in a relatively straightforward manner. The second strategy retains the original unclamped Active Sections and performs admissibility recovery directly within the resulting spline spaces. This formulation is considerably more challenging because the local spline entities do not possess the endpoint interpolation properties associated with clamped representations.

The present work adopts the unclamped formulation, as defined in Section 4.2. This choice is motivated by several considerations. First, it preserves the native spline character of the local representation throughout the decomposition and reconstruction processes. Second, unclamped Active Sections provide greater flexibility for local knot insertion and degree elevation, since the interior knots are not constrained by endpoint multiplicities. Third, the buffer knots—the outermost \(p\) knots on each side—retain all information required for subsequent reconstruction, ensuring that continuity constraints can still be imposed even though the representation is unclamped. Although the implementation is more involved, the unclamped formulation offers a cleaner separation between local modification and global reconstruction.

\subsection{Current Limitations}

The numerical developments presented in this work demonstrate the effectiveness of the methodology for one-dimensional spline representations and associated analysis problems. Extension to multidimensional configurations introduces additional challenges related to interface topology, continuity enforcement, and local refinement compatibility. Preliminary two-dimensional investigations indicate that the reconstruction philosophy remains applicable, although robust local \(h\)- and \(p\)-modification strategies require further development.

For this reason, the present work focuses primarily on the one-dimensional setting, where the reconstruction process can be examined in a clear and controlled manner. The extension to surfaces and volumes, including the treatment of T-junctions and unstructured multi-patch topologies, is left for future work. Additionally, the computational cost of the linear programming formulation may become significant for problems with a very large number of active constraints, suggesting the need for more efficient optimization strategies or reduced-order approximations in such cases.

Finally, while the framework guarantees the theoretical properties of the reconstructed basis (positivity, partition of unity, and admissibility), the conditioning of the resulting basis and the stability of the reconstruction procedure for very high degrees or heavily refined meshes warrant further investigation. These aspects will be addressed in future work.

\section{Conclusions}

We have presented a reconstruction-oriented framework that decouples local spline manipulation from global admissibility enforcement in the context of multipatch isogeometric analysis. By decomposing a NURBS representation into independent unclamped Active Sections—each defined by a minimal local knot window of \(2p+2\) knots that completely determines the \(p+1\) basis functions on a given span—we enable arbitrary local knot insertion, degree elevation, and basis modification without affecting neighboring regions or compromising exact CAD geometry. The subsequent reconstruction stage recovers global admissibility through the assembly of interface continuity constraints into a sparse global operator and the construction of a Hybrid Reconstruction Operator, which, via anchor selection, linear programming with distance-based regularization, and non-negative least-squares, yields a strictly non-negative, partition-of-unity basis that spans the exact constrained nullspace to machine precision. An efficient hierarchical assembly strategy, which freezes columns already satisfying new interface constraints, ensures that the optimization burden is confined exclusively to active degrees of freedom at each pairwise merge, making the framework computationally scalable. Numerical benchmarks on one-dimensional multipatch curves, including geometry preservation tests, continuity recovery, local degree variation, and a nonlinear diffusion problem with heterogeneous local polynomial degrees, confirm optimal convergence rates and demonstrate the methodology's ability to seamlessly combine local geometric flexibility with globally consistent approximation spaces. The proposed framework is not a new spline technology but rather a generic algebraic procedure that can be superimposed on any spline representation—hierarchical, T-spline, LR-spline, or U-spline—capable of expressing continuity constraints algebraically. Future work will focus on extension to multidimensional configurations, development of fully adaptive refinement strategies driven by error indicators, and investigation of the conditioning and stability of the reconstructed basis for very high degrees and heavily refined meshes.

\bibliographystyle{unsrt} 
\bibliography{references}

@article{hughes2005isogeometric,
  title={Isogeometric analysis: CAD, finite elements, NURBS, exact geometry and mesh refinement},
  author={Hughes, Thomas J.R. and Cottrell, John A. and Bazilevs, Yuri},
  journal={Computer Methods in Applied Mechanics and Engineering},
  volume={194},
  number={39--41},
  pages={4135--4195},
  year={2005}
}

@book{piegl1997nurbs,
  title={The NURBS Book},
  author={Piegl, Les and Tiller, Wayne},
  year={1997},
  publisher={Springer}
}

@article{vuong2011hierarchical,
  title={A hierarchical approach to adaptive local refinement in isogeometric analysis},
  author={Vuong, An Nguyen and Giannelli, Carlotta and Juettler, Bert and Simeon, Bernd},
  journal={Computer Methods in Applied Mechanics and Engineering},
  volume={200},
  pages={3554--3567},
  year={2011}
}

@article{bazilevs2006isogeometric,
  title={Isogeometric analysis using T-splines},
  author={Bazilevs, Yuri and Calo, Victor M. and Cottrell, John A. and Hughes, Thomas J.R.},
  journal={Computer Methods in Applied Mechanics and Engineering},
  volume={199},
  pages={229--263},
  year={2006}
}

@article{scott2011isogeometric,
  title={Isogeometric analysis using unstructured T-splines},
  author={Scott, Michael A. and Li, Xianfeng and Sederberg, Thomas W. and Hughes, Thomas J.R.},
  journal={Computer Methods in Applied Mechanics and Engineering},
  volume={200},
  pages={2297--2310},
  year={2011}
}

@book{cottrell2009isogeometric,
  title={Isogeometric analysis: Toward integration of CAD and FEA},
  author={Cottrell, John A. and Hughes, Thomas J.R. and Bazilevs, Yuri},
  year={2009},
  publisher={Wiley}
}

@article{herrema2018usplines,
  title={U-splines: Splines for unstructured meshes},
  author={Herrema, Andrew J. and Scott, Michael A. and Evans, John A.},
  journal={Computer Methods in Applied Mechanics and Engineering},
  volume={327},
  pages={325--354},
  year={2018}
}

@article{herrema2017adaptive,
  title={Adaptive isogeometric analysis with U-splines},
  author={Herrema, Andrew J. and Evans, John A.},
  journal={Computer Methods in Applied Mechanics and Engineering},
  volume={316},
  pages={966--1000},
  year={2017}
}

@article{wohlmuth2001mortar,
  title={A mortar finite element method using dual spaces for the Lagrange multiplier},
  author={Wohlmuth, Barbara},
  journal={SIAM Journal on Numerical Analysis},
  volume={38},
  pages={989--1012},
  year={2001}
}

@book{zienkiewicz2005finite,
  title={The Finite Element Method},
  author={Zienkiewicz, O.C. and Taylor, R.L.},
  year={2005},
  publisher={Butterworth-Heinemann}
}

@article{hansbo2005nitsche,
  title={Nitsche's method for interface problems in finite elements},
  author={Hansbo, Peter and Hansbo, Anita},
  journal={Computer Methods in Applied Mechanics and Engineering},
  volume={193},
  pages={4195--4207},
  year={2005}
}

@article{bernardi1993domain,
  title={A domain decomposition method for elliptic problems with nonmatching grids},
  author={Bernardi, Christine and Maday, Yvon and Patera, Anthony T.},
  journal={SIAM Journal on Numerical Analysis},
  volume={30},
  pages={152--169},
  year={1993}
}

@book{quarteroni1999domain,
  title={Domain Decomposition Methods for Partial Differential Equations},
  author={Quarteroni, Alfio and Valli, Alberto},
  year={1999},
  publisher={Oxford University Press}
}

@article{coox2017robust,
  title={A robust patch coupling method for NURBS-based isogeometric analysis of non-conforming multipatch surfaces},
  author={Coox, Laurens and Greco, Francesco and Atak, Onur and Vandepitte, Dirk and Desmet, Wim},
  journal={Computer Methods in Applied Mechanics and Engineering},
  volume={316},
  pages={986--1017},
  year={2017}
}

@article{evans2009nitsche,
  title={Nitsche’s method for two and three dimensional NURBS patch coupling},
  author={Evans, John A and Hughes, Thomas JR},
  journal={Computer Methods in Applied Mechanics and Engineering},
  volume={199},
  number={9-12},
  pages={671--684},
  year={2009}
}

@article{giannelli2012thb,
  title={THB-splines: The truncated basis for hierarchical splines},
  author={Giannelli, Carlotta and J{\"u}ttler, Bert and Speleers, Hendrik},
  journal={Computer Aided Geometric Design},
  volume={29},
  number={7},
  pages={485--498},
  year={2012}
}

@article{dokken2013locally,
  title={Locally refined splines},
  author={Dokken, Tor and Lyche, Tom and Pettersen, Knut M{\o}rken},
  journal={Computer Aided Geometric Design},
  volume={30},
  number={3},
  pages={331--356},
  year={2013}
}

@article{popp2012dual,
  title={Dual mortar methods for isogeometric analysis},
  author={Popp, Alexander and Gee, Michael W and Wall, Wolfgang A},
  journal={Computer Methods in Applied Mechanics and Engineering},
  volume={245},
  pages={273--290},
  year={2012}
}

@article{apostolatos2014nitsche,
  title={Nitsche’s method for coupling non-matching isogeometric shells},
  author={Apostolatos, Alexandros and Schmidt, Roland and W{\"u}chner, Roland and Bletzinger, Kai-Uwe},
  journal={Computer Methods in Applied Mechanics and Engineering},
  volume={284},
  pages={673--699},
  year={2014}
}

@article{ruess2014weak,
  title={Weak coupling for isogeometric analysis of non-matching and trimmed multi-patch geometries},
  author={Ruess, Martin and Schillinger, Dominik and Bazilevs, Yuri and Varduhn, Volker and Rank, Ernst},
  journal={Computer Methods in Applied Mechanics and Engineering},
  volume={269},
  pages={46--71},
  year={2014}
}

@article{takacs2025multiresolution,
  author     = {Takacs, Stefan and Tyoler, Stefan},
  title      = {Multi-resolution isogeometric analysis -- efficient adaptivity utilizing the multi-patch structure},
  journal    = {Computers \& Mathematics with Applications},
  volume     = {179},
  pages      = {103--125},
  year      = {2025},
  doi       = {10.1016/j.camwa.2024.12.005},
  publisher = {Elsevier}
}

@article{sederberg2003tsplines,
  title={T-splines and T-NURCCs},
  author={Sederberg, Thomas W and Zheng, Jianmin and Bakenov, Almaz and Nasri, Ahmad},
  journal={ACM Transactions on Graphics},
  volume={22},
  number={3},
  pages={477--484},
  year={2003},
  publisher={ACM}
}

@article{scott2012analysis,
  title={Isogeometric finite element data structures based on Bézier extraction of T-splines},
  author={Scott, Michael A and Evans, John A and Borden, Michael J and Hughes, Thomas JR},
  journal={International Journal for Numerical Methods in Engineering},
  volume={88},
  number={2},
  pages={126--156},
  year={2011},
  publisher={Wiley Online Library}
}

@article{borden2011isogeometric,
  title={Isogeometric finite element data structures based on Bézier extraction},
  author={Borden, Michael J and Scott, Michael A and Evans, John A and Hughes, Thomas JR},
  journal={International Journal for Numerical Methods in Engineering},
  volume={87},
  number={1-5},
  pages={15--47},
  year={2011},
  publisher={Wiley Online Library}
}

@incollection{HughesSangalliTakacsToshniwal2021,
  author    = {Thomas J. R. Hughes and Giancarlo Sangalli and Thomas Takacs and Deepesh Toshniwal},
  title     = {Smooth Multi-Patch Discretizations in Isogeometric Analysis},
  booktitle = {Geometric Partial Differential Equations -- Part II},
  series    = {Handbook of Numerical Analysis},
  volume     = {22},
  pages      = {467--543},
  year       = {2021},
  publisher  = {Elsevier},
  doi        = {10.1016/bs.hna.2020.09.002}
}

@article{collin2016analysis,
  author  = {Collin, Adrien and Sangalli, Giancarlo and Takacs, Thomas},
  title   = {Analysis-Suitable G1 Multi-Patch Parametrizations for C1 Isogeometric Spaces},
  journal = {Computer Aided Geometric Design},
  volume  = {47},
  pages   = {93--113},
  year    = {2016}
}

@article{kapl2017isogeometric,
  author    = {Kapl, Mario and Buchegger, Florian and Bercovier, Michel and J{\"u}ttler, Bert},
  title     = {Isogeometric analysis with geometrically continuous functions on planar multi-patch geometries},
  journal   = {Computer Methods in Applied Mechanics and Engineering},
  volume    = {316},
  pages     = {209--234},
  year      = {2017},
  publisher = {Elsevier}
}

\qquad
\qquad

\begin{center}
{\Large\bfseries
APPENDICES }
\end{center}

\appendix

\section{  Mathematical Foundations of the Hybrid Reconstruction    }
\label{app:A}

\addcontentsline{toc}{section}
{Appendix A. Mathematical Foundations of the Hybrid Reconstruction}

\setcounter{subsection}{0}
\renewcommand{\thesubsection}{A.\arabic{subsection}}

\setcounter{subsubsection}{0}
\renewcommand{\thesubsubsection}{\thesubsection.\arabic{subsubsection}}

\subsection{Local Unclamped Decomposition and Pairwise Hybrid Reconstruction}
\label{app:local_pairwise}

This appendix establishes the mathematical basis of the local
unclamped decomposition and the subsequent recursive pairwise
reconstruction. Particular attention is given to the distinction
between:

\begin{enumerate}[label=(\roman*)]
    \item the local knot vector associated with a single Active Section;
    \item the disconnected approximation space formed by two neighboring
    Active Sections before interface coupling.
\end{enumerate}

Although both constructions involve the quantity \(2p+2\), they describe
different mathematical objects.

\subsection{Active Sections and the \(2p+2\) local knot vector}
\label{app:active_section_2p2}

Let
\[
\Xi
=
\{\xi_0,\xi_1,\ldots,\xi_{n+p+1}\}
\]
be the knot vector of a univariate B-spline or NURBS representation of
degree \(p\). Consider a nonzero knot span
\[
I_e=[\xi_e,\xi_{e+1}],
\qquad
\xi_e<\xi_{e+1}.
\]

Exactly \(p+1\) degree-\(p\) B-spline basis functions are nonzero over the
interior of \(I_e\). Their indices are
\[
e-p,\ldots,e.
\]

\begin{definition}[Active Section]
\label{def:active_section}
The Active Section associated with the nonzero knot span
\(I_e=[\xi_e,\xi_{e+1}]\) is the local spline entity obtained by
restricting to \(I_e\) the \(p+1\) parent basis functions that are
nonzero on that span, together with their corresponding control points,
weights, and inherited local knot data.
\end{definition}

Thus, an Active Section is associated with one element, and not with the
union of two elements located on the two sides of an interface.

\begin{proposition}[Local knot-vector length of an Active Section]
\label{prop:active_section_knots}
The Active Section associated with \(I_e\) is characterized by the local
knot subsequence
\[
\Xi_e^{\mathrm{loc}}
=
\{
\xi_{e-p},
\xi_{e-p+1},
\ldots,
\xi_{e+p+1}
\}.
\]
This local knot vector contains exactly
\[
\boxed{2p+2}
\]
knot entries.
\end{proposition}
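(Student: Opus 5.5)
The plan is to show two things: that the subsequence \(\Xi_e^{\mathrm{loc}}\) suffices to determine the Active Section of Definition~\ref{def:active_section}, and that counting its indices gives \(2p+2\). We first record, from the indexing convention above, that the \(p+1\) basis functions nonzero on \(I_e\) are \(N_{e-p}^p,\ldots,N_e^p\). Next we recall the standard local-support fact: \(N_i^p\) is determined by the \(p+2\) knots \(\xi_i,\ldots,\xi_{i+p+1}\) alone. This follows by induction on \(p\) from the Cox--de Boor relation~\eqref{eq1}. For \(p=0\), \(N_i^0\) depends only on \(\xi_i,\xi_{i+1}\). The recursion then expresses \(N_i^p\) through \(N_i^{p-1}\) and \(N_{i+1}^{p-1}\), which by hypothesis depend on \(\xi_i,\ldots,\xi_{i+p}\) and \(\xi_{i+1},\ldots,\xi_{i+p+1}\), while the coefficients involve only \(\xi_i,\xi_{i+p},\xi_{i+1},\xi_{i+p+1}\). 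Zero denominators are handled with the usual \(0/0:=0\) convention.

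Taking the union of the knot windows over \(i=e-p,\ldots,e\) gives
\[
\bigcup_{i=e-p}^{e}\{\xi_i,\ldots,\xi_{i+p+1}\}=\{\xi_{e-p},\ldots,\xi_{e+p+1}\}.
\]
This is because the smallest index occurring is \(e-p\) and the largest is \(e+p+1\), and consecutive windows overlap, so no index in between is skipped. Hence the restrictions of these \(p+1\) functions to \(I_e\) are determined by \(\Xi_e^{\mathrm{loc}}\). The control points and weights with the same indices \(e-p,\ldots,e\) are the remaining data of the Active Section, and together these characterize it. The count is then immediate: the index range \(e-p,\ldots,e+p+1\) contains \((e+p+1)-(e-p)+1=2p+2\) entries. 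Equivalently, there are \(p\) entries to the left of \(\xi_e\), the two span endpoints \(\xi_e,\xi_{e+1}\), and \(p\) entries to the right of \(\xi_{e+1}\).

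The step requiring care is the word ``characterized'', meaning that no proper subsequence would do. For this we argue that the extreme knots genuinely matter. \(N_{e-p}^p\) restricted to \(I_e\) depends on \(\xi_{e-p}\), and \(N_e^p\) restricted to \(I_e\) depends on \(\xi_{e+p+1}\). We would check this through the explicit factor \((\xi_{e+p+1}-\xi)/(\xi_{e+p+1}-\xi_{e+1})\) at the top level of the recursion for \(N_e^p\), and symmetrically for the left end. Repeating the argument level by level shows that each intermediate knot also enters some coefficient. Degenerate repeated-knot cases, such as clamped ends, need a remark: the entries are then counted with multiplicity as positions in the sequence, so the length is still \(2p+2\). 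The full minimality argument is deferred to Appendix~\ref{app:B} as the paper indicates; here we need only the index count.
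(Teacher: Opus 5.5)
Your core argument is the paper's proof: the active functions on \(I_e\) are \(N_{e-p}^p,\ldots,N_e^p\), each is determined by its window \(\xi_i,\ldots,\xi_{i+p+1}\), the union of these windows is \(\xi_{e-p},\ldots,\xi_{e+p+1}\), and the index count gives \(2p+2\). Your inductive justification of the window property via Cox--de Boor is a useful addition, since the paper only asserts it.

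The step that fails is your minimality paragraph. It is not true that \(N_{e-p}^p|_{I_e}\) depends on \(\xi_{e-p}\), nor that \(N_e^p|_{I_e}\) depends on \(\xi_{e+p+1}\), and the check you propose breaks down at once. In
\[
N_e^p(\xi)=\frac{\xi-\xi_e}{\xi_{e+p}-\xi_e}\,N_e^{p-1}(\xi)+\frac{\xi_{e+p+1}-\xi}{\xi_{e+p+1}-\xi_{e+1}}\,N_{e+1}^{p-1}(\xi),
\]
the factor containing \(\xi_{e+p+1}\) multiplies \(N_{e+1}^{p-1}\). Its support \([\xi_{e+1},\xi_{e+p+1}]\) misses the interior of \(I_e\), so that term vanishes there. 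Unwinding the recursion gives
\[
N_e^p|_{I_e}=\prod_{j=1}^{p}\frac{\xi-\xi_e}{\xi_{e+j}-\xi_e},\qquad
N_{e-p}^p|_{I_e}=\prod_{j=1}^{p}\frac{\xi_{e+1}-\xi}{\xi_{e+1}-\xi_{e-p+j}}.
\]
More generally, at level \(k\) of the recursion only coefficients built from \(\xi_{e-k+1},\ldots,\xi_{e+k}\) multiply functions that are nonzero on \(I_e\). So the restrictions of all \(p+1\) active functions to \(I_e\) are fixed by the \(2p\) knots \(\xi_{e-p+1},\ldots,\xi_{e+p}\), which are exactly the knots de Boor's algorithm uses on that span. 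In the ``restriction to \(I_e\)'' sense, ``no proper subsequence would do'' is therefore false. Deferring to Appendix~\ref{app:B} does not rescue it: its claim that deleting any knot spoils evaluation over the selected span fails for the same two outer knots.

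The reading under which the statement holds is the one the paper's proof actually uses. There, \(\Xi_e^{\mathrm{loc}}\) is the smallest subsequence containing the complete defining windows of \(N_{e-p}^p,\ldots,N_e^p\). These are the inherited knot data that determine the functions on their full supports \([\xi_{e-p},\xi_{e+1}],\ldots,[\xi_e,\xi_{e+p+1}]\). Under that reading, minimality is a tautology, since every entry of \(\Xi_e^{\mathrm{loc}}\) lies in the window of at least one active function, and your count goes through unchanged. Either replace your restriction-based paragraph with this observation, or drop it, since the proposition needs only the union and the index count.
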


\begin{proof}
The basis functions active on \(I_e\) are
\[
N_{e-p,p},
N_{e-p+1,p},
\ldots,
N_{e,p}.
\]

A degree-\(p\) B-spline basis function \(N_{i,p}\) is defined by the knot
subsequence
\[
\{\xi_i,\xi_{i+1},\ldots,\xi_{i+p+1}\}.
\]

The first active basis function, \(N_{e-p,p}\), begins at
\(\xi_{e-p}\), whereas the last active basis function, \(N_{e,p}\),
ends at \(\xi_{e+p+1}\). Therefore, the smallest knot subsequence
containing the complete knot data required by all basis functions active
on \(I_e\) is
\[
\Xi_e^{\mathrm{loc}}
=
\{
\xi_{e-p},
\xi_{e-p+1},
\ldots,
\xi_{e+p+1}
\}.
\]

The number of entries in this sequence is
\[
(e+p+1)-(e-p)+1
=
2p+2.
\]

Hence,
\[
\#\Xi_e^{\mathrm{loc}}=2p+2.
\]
\end{proof}

\begin{remark}[Quadratic example]
\label{rem:quadratic_active_sections}
Consider the quadratic knot vector
\[
\Xi=[0,0,0,0.5,1,1,1],
\qquad p=2.
\]

The two nonzero knot spans are
\[
I_1=[0,0.5],
\qquad
I_2=[0.5,1].
\]

The first Active Section is characterized by
\[
\Xi_1^{\mathrm{loc}}
=
[0,0,0,0.5,1,1],
\]
whereas the second Active Section is characterized by
\[
\Xi_2^{\mathrm{loc}}
=
[0,0,0.5,1,1,1].
\]

Both local knot vectors contain
\[
2p+2=6
\]
entries. The common interface is located at
\[
\xi=0.5.
\]
\end{remark}

\begin{remark}[Distinction from the interface coupling space]
\label{rem:2p2_distinction}
The relation
\[
\#\Xi_e^{\mathrm{loc}}=2p+2
\]
describes the local knot vector of one Active Section.

Later, when two neighboring Active Sections are considered before
coupling, their disconnected approximation spaces contain
\(p+1\) basis functions each and therefore have combined dimension
\[
2(p+1)=2p+2.
\]

The two occurrences of \(2p+2\) are numerically identical but represent
different mathematical statements.
\end{remark}

\subsection{Exact geometry preservation under local unclamped decomposition}
\label{app:exact_local_decomposition}

Let the parent NURBS curve be
\begin{equation}
\label{eq:parent_nurbs}
\bm{x}(\xi)
=
\frac{
\displaystyle
\sum_{i=1}^{n}
N_{i,p}(\xi)w_i\bm{P}_i
}{
\displaystyle
\sum_{i=1}^{n}
N_{i,p}(\xi)w_i
},
\qquad
\xi\in[\xi_{\min},\xi_{\max}],
\end{equation}
where \(N_{i,p}\) are degree-\(p\) B-spline basis functions,
\(\bm{P}_i\in\mathbb{R}^{d}\), and \(w_i>0\).

Let
\[
I_e=[\widehat{\xi}_e,\widehat{\xi}_{e+1}],
\qquad
e=1,\ldots,m,
\]
denote the nonzero knot spans, and define the active index set
\[
\mathcal{I}_e
=
\left\{
i:
N_{i,p}|_{I_e}\not\equiv0
\right\}.
\]

For every nonzero knot span, construct a local unclamped NURBS
representation by extracting:

\begin{enumerate}[label=(\roman*)]
    \item the \(p+1\) parent basis functions active on the span;
    \item their corresponding homogeneous control points;
    \item the inherited local knot vector of length \(2p+2\);
    \item no additional artificial endpoint repetitions.
\end{enumerate}

\begin{theorem}[Exact preservation under local unclamped decomposition]
\label{thm:exact_local_decomposition}
Let \(\bm{x}^{(e)}\) denote the NURBS mapping of Active Section \(e\).
If the Active Section is constructed from the restrictions of the parent
basis functions, their corresponding homogeneous control points, and the
inherited local knot subsequence, then
\[
\boxed{
\bm{x}^{(e)}(\xi)=\bm{x}(\xi)
\qquad
\forall \xi\in I_e.
}
\]

Consequently, the complete piecewise decomposed mapping satisfies
\[
\boxed{
\bm{x}_{\mathrm{dec}}(\xi)=\bm{x}(\xi)
\qquad
\forall
\xi\in[\xi_{\min},\xi_{\max}].
}
\]
\end{theorem}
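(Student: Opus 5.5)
The plan is to reduce the statement to two facts: the local support of B-splines, and the evaluation of the local basis from the inherited knot subsequence of Proposition~\ref{prop:active_section_knots}.

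\textbf{Step 1: numerator and denominator on $I_e$.} For $\xi\in I_e$ every parent function $N_{i,p}$ with $i\notin\mathcal{I}_e$ vanishes identically. The sums in \eqref{eq:parent_nurbs} therefore collapse to sums over the $p+1$ active indices:
\[
\bm{x}(\xi)=\frac{\sum_{i\in\mathcal{I}_e}N_{i,p}(\xi)w_i\bm{P}_i}{\sum_{i\in\mathcal{I}_e}N_{i,p}(\xi)w_i},\qquad \xi\in I_e .
\]
The same identity holds for the homogeneous form $(w\bm{P},w)$.

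\textbf{Step 2: local functions equal parent functions.} Let $\{N^{(e)}_{j,p}\}_{j=0}^{p}$ be the basis built by Cox--de Boor from $\Xi_e^{\mathrm{loc}}$. I will show $N^{(e)}_{j,p}=N_{e-p+j,p}$ on $I_e$. By the proof of Proposition~\ref{prop:active_section_knots}, each $N_{e-p+j,p}$ depends only on the knots $\xi_{e-p+j},\dots,\xi_{e+j+1}$. All of these lie in $\Xi_e^{\mathrm{loc}}$, in the same order and with the same multiplicities. The recursion \eqref{eq1} is a deterministic function of these knots, using the usual $0/0:=0$ convention. Hence an induction on the degree $q=0,\dots,p$ gives that each local $N^{(e)}_{k,q}$ on the shifted indices coincides with the parent $N_{e-p+k,q}$.

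No clamping or endpoint repetition is added, by item (iv) of the construction. Consequently no knot datum differs, even though the local vector is unclamped. The local functions are then combined with the inherited control points $\bm{P}_{e-p+j}$ and weights $w_{e-p+j}$. The resulting rational expression is exactly the collapsed form of Step 1, so $\bm{x}^{(e)}=\bm{x}$ on $I_e$.

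\textbf{Step 3: the global decomposed mapping.} The nonzero spans $I_e$ cover $[\xi_{\min},\xi_{\max}]$ and overlap only at breakpoints. Define $\bm{x}_{\mathrm{dec}}$ piecewise by $\bm{x}^{(e)}$ on $I_e$. It agrees with $\bm{x}$ on each span, so the only issue is consistency at a shared breakpoint $\hat\xi_{e+1}$.

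\textbf{Main obstacle: the span endpoints.} Two points need care here.

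\emph{Half-open convention.} The half-open convention in the definition of $N_i^0$ makes the local degree-zero functions of section $e$ vanish at its right endpoint, unless $\hat\xi_{e+1}=\xi_{\max}$. I would handle this by treating $\bm{x}^{(e)}$ on the closed span as the continuous extension from the interior, that is, by evaluating the polynomial pieces, which are identical, by continuity. Both $\bm{x}^{(e)}$ and $\bm{x}^{(e+1)}$ extend the same rational function $\bm{x}$, and $\bm{x}$ is continuous whenever multiplicities are at most $p$. The two one-sided values at $\hat\xi_{e+1}$ therefore both equal $\bm{x}(\hat\xi_{e+1})$.

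\emph{Denominator positivity.} I would invoke non-negativity and partition of unity of the active functions on $I_e$, together with $w_i>0$. This makes the local denominator strictly positive on $I_e$.

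If a breakpoint has full multiplicity $p+1$ and $\bm{x}$ is discontinuous there, the claim is understood with the parent's own one-sided convention. It then holds verbatim, since both sides use identical knot data and identical conventions.
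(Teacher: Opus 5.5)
Your proposal is correct and follows the same route as the paper's proof. Both arguments collapse the parent sums on $I_e$ to the $p+1$ active indices, match the local basis functions and homogeneous control data one-to-one with the parent ones, divide, and then cover $[\xi_{\min},\xi_{\max}]$ by the nonzero spans. The difference is that the paper simply asserts $N^{(e)}_{a,p}=N_{i_a,p}$ on $I_e$ ``by construction'', whereas your Cox--de Boor induction on the inherited knot subsequence and your check of denominator positivity supply justification the paper omits. That induction in fact gives agreement on all of $\mathbb{R}$, so your continuity-extension detour is unnecessary at breakpoints of multiplicity at most $p$; it is needed only at $\xi_{\max}$ or at a breakpoint of full multiplicity $p+1$.
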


\begin{proof}
Introduce the homogeneous control points
\[
\bm{Q}_i
=
\begin{bmatrix}
w_i\bm{P}_i\\
w_i
\end{bmatrix}
\in\mathbb{R}^{d+1},
\]
and define the homogeneous parent mapping
\[
\widetilde{\bm{x}}(\xi)
=
\sum_{i=1}^{n}
N_{i,p}(\xi)\bm{Q}_i.
\]

For \(\xi\in I_e\), all basis functions whose indices do not belong to
\(\mathcal{I}_e\) vanish. Hence,
\[
\widetilde{\bm{x}}(\xi)
=
\sum_{i\in\mathcal{I}_e}
N_{i,p}(\xi)\bm{Q}_i.
\]

Let \(N^{(e)}_{a,p}\) and \(\bm{Q}^{(e)}_a\) denote the extracted local
basis functions and homogeneous control points. By construction, there
is a one-to-one correspondence
\[
a\longleftrightarrow i_a\in\mathcal{I}_e
\]
such that
\[
N^{(e)}_{a,p}(\xi)
=
N_{i_a,p}(\xi),
\qquad
\bm{Q}^{(e)}_a=\bm{Q}_{i_a},
\qquad
\xi\in I_e.
\]

Therefore,
\[
\begin{aligned}
\widetilde{\bm{x}}^{(e)}(\xi)
&=
\sum_a
N^{(e)}_{a,p}(\xi)\bm{Q}^{(e)}_a
\\
&=
\sum_{i\in\mathcal{I}_e}
N_{i,p}(\xi)\bm{Q}_i
\\
&=
\widetilde{\bm{x}}(\xi).
\end{aligned}
\]

Projective division by the last homogeneous coordinate gives
\[
\bm{x}^{(e)}(\xi)=\bm{x}(\xi)
\qquad
\forall\xi\in I_e.
\]

Since the nonzero knot spans cover the complete parametric domain, the
piecewise decomposed representation and the parent geometry coincide
everywhere.
\end{proof}

\begin{corollary}[Patchwise degree elevation before extraction]
\label{cor:degree_elevation_before_extraction}
Suppose that, for Active Section \(e\), the parent representation is
first degree-elevated exactly and the local Active Section is subsequently
extracted from the elevated parent representation. Then
\[
\bm{x}^{(e)}(\xi)=\bm{x}(\xi),
\qquad
\xi\in I_e.
\]
\end{corollary}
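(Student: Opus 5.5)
The plan is to reduce the corollary to Theorem~\ref{thm:exact_local_decomposition}, applied not to the original parent but to its exact degree-elevated version. Let the parent curve $\bm{x}$ of degree $p$ be elevated to degree $p^*=p+r$. By the standard degree elevation algorithm~\cite{piegl1997nurbs}, carried out on the homogeneous control points $\bm{Q}_i$, this produces a new knot vector $\Xi^*$, homogeneous control points $\bm{Q}^*_j$ and B-spline basis functions $N_{j,p^*}$. These satisfy
\[
\sum_j N_{j,p^*}(\xi)\bm{Q}^*_j=\sum_i N_{i,p}(\xi)\bm{Q}_i
\qquad \forall\xi .
\]
Projective division then gives a NURBS mapping $\bm{x}^*$ with $\bm{x}^*(\xi)=\bm{x}(\xi)$ everywhere. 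This identity is the exactness of degree elevation already stated in Section~\ref{sec:4.5.2}, and I take it as given.

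The second step is to check that $I_e$ is still a nonzero knot span of $\Xi^*$. Degree elevation raises the multiplicity of every existing breakpoint by $r$ and introduces no new distinct breakpoints. The distinct breakpoints of $\Xi^*$ are therefore those of $\Xi$, and $I_e$ remains a nonzero span; the example in Section~\ref{sec:4.5.2} illustrates this. This step matters because the Active Section of degree $p^*$ has to be attached to the same parametric interval.

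The third step applies Theorem~\ref{thm:exact_local_decomposition} to the elevated representation. Extract on $I_e$ the $p^*+1$ functions $N_{j,p^*}$ that are active there, together with their homogeneous control points $\bm{Q}^*_j$ and the inherited local knot subsequence of length $2p^*+2$ given by Proposition~\ref{prop:active_section_knots}. The theorem then gives $\bm{x}^{(e)}(\xi)=\bm{x}^*(\xi)$ for $\xi\in I_e$. Composing with the first step yields $\bm{x}^{(e)}=\bm{x}$ on $I_e$.

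Since the argument is purely a composition of two exactness statements, the step that needs the most care is the first. Exactness must hold at the level of the homogeneous (weighted) representation, not merely after projective division, because Theorem~\ref{thm:exact_local_decomposition} is proved in homogeneous coordinates. I would therefore state explicitly that elevation acts on the $\bm{Q}_i$, so that the denominators $\sum N w$ also coincide. I would also note a consequence: neighbouring sections of different degrees both reproduce $\bm{x}$ on their own spans, so the mixed collection still represents the original geometry exactly.
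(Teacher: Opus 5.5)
Your proposal is correct and follows the paper's proof. The paper also uses the invariance of the geometric mapping under exact degree elevation and then applies Theorem~\ref{thm:exact_local_decomposition} to the elevated parent. Your explicit check that $I_e$ remains a nonzero knot span of the elevated knot vector fills in a detail the paper leaves implicit.
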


\begin{proof}
Exact NURBS degree elevation changes the spline representation but leaves
the geometric mapping invariant. Theorem~\ref{thm:exact_local_decomposition}
can therefore be applied to the degree-elevated parent representation.
\end{proof}

\begin{remark}
The extracted local basis functions need not form a globally clamped
basis and need not coincide with the complete parent basis over the
entire parametric domain. It is sufficient that, on the associated
nonzero knot span, they coincide with the restrictions of the
corresponding parent basis functions.
\end{remark}

\subsection{Pairwise interface coupling space}
\label{app:pairwise_interface_space}

Consider two neighboring Active Sections associated with the nonzero
knot spans
\[
I_L=[\xi_{e-1},\xi_e],
\qquad
I_R=[\xi_e,\xi_{e+1}],
\]
which share the interface \(\xi=\xi_e\).

Let \(V_L\) and \(V_R\) denote their disconnected local approximation
spaces before continuity constraints are imposed.

\begin{definition}[Disconnected interface coupling space]
\label{def:interface_coupling_space}
The disconnected interface coupling space is defined as
\[
\mathcal{A}_{LR}
=
V_L\oplus V_R.
\]
It contains the complete local approximation spaces of the two
neighboring Active Sections before coupling.
\end{definition}

\begin{proposition}[Dimension of the disconnected interface space]
\label{prop:interface_space_dimension}
If both neighboring Active Sections have degree \(p\) and their local
coefficient sets are independent before coupling, then
\[
\boxed{
\dim(\mathcal{A}_{LR})=2p+2.
}
\]
\end{proposition}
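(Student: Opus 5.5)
The plan is to compute the dimension of each summand and then show that the sum is genuinely direct. By Definition~\ref{def:interface_coupling_space}, $\mathcal{A}_{LR}=V_L\oplus V_R$, and for a direct sum $\dim(V_L\oplus V_R)=\dim V_L+\dim V_R$. It therefore suffices to show that $\dim V_L=\dim V_R=p+1$ and that the sum is direct.

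\textbf{Step 1: dimension of each local space.} We identify $V_L$ with the span of the $p+1$ local basis functions of the left Active Section, restricted to $I_L$. Proposition~\ref{prop:active_section_knots} and Theorem~\ref{thm:exact_local_decomposition} identify these as the restrictions of $N_{e-1-p,p},\ldots,N_{e-1,p}$ to $I_L$; the same holds for $V_R$ on $I_R$. These $p+1$ restrictions are linearly independent, because on a nonzero knot span the active degree-$p$ B-splines form a basis of the polynomials of degree at most $p$ (standard local linear independence, e.g.\ via the de Boor--Fix dual functionals or Marsden's identity). Hence $\dim V_L=\dim V_R=p+1$.

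\textbf{Step 2: directness of the sum.} Before coupling, the two spaces are realised as functions supported on different elements: functions in $V_L$ are extended by zero outside $I_L$, and functions in $V_R$ by zero outside $I_R$. Alternatively, in coefficient form, they are simply the coordinate blocks of $\mathbb{R}^{p+1}\times\mathbb{R}^{p+1}$, which is exactly the hypothesis that the local coefficient sets are independent. Then $V_L\cap V_R=\{0\}$, since a function in both spaces vanishes on the interiors of both $I_L$ and $I_R$. The only point where the two overlap is the interface $\xi_e$, a set of measure zero, so no identification occurs there before constraints are imposed. We conclude that
\[
\dim\mathcal{A}_{LR}=(p+1)+(p+1)=2p+2,
\]
which matches Remark~\ref{rem:2p2_distinction}.

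\textbf{Main obstacle.} The delicate step is the linear independence in Step~1. The extracted basis functions are unclamped and may share parent indices with the neighbouring section; for instance, the same parent function $N_{e-1,p}$ appears in both sections. The independence must therefore be argued for restrictions to a single span, not for the global functions. I would first try the polynomial-space argument: the $p+1$ restrictions span $\mathbb{P}_p(I_L)$, which has dimension $p+1$, because every polynomial of degree at most $p$ is reproduced exactly (Marsden's identity) using only the functions active on $I_L$. Since $p+1$ functions span a $(p+1)$-dimensional space, they are independent. The duplicated parent indices are harmless because the disconnected space treats the left and right copies as separate degrees of freedom.
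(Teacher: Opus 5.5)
Your proposal is correct and follows essentially the same route as the paper. Both arguments set $\dim V_L=\dim V_R=p+1$, obtain $V_L\cap V_R=\{0\}$ from the independence of the two coefficient blocks before coupling, and add dimensions over the direct sum to get $2p+2$. The one difference is that you explicitly justify why the $p+1$ active functions give dimension $p+1$: you invoke local linear independence of the active B-splines on a nonzero span (via Marsden's identity), a step the paper simply takes for granted.
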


\begin{proof}
Each degree-\(p\) Active Section possesses \(p+1\) locally active basis
functions. Therefore,
\[
\dim(V_L)=p+1,
\qquad
\dim(V_R)=p+1.
\]

Before coupling, the local coefficient blocks are independent, so
\[
V_L\cap V_R=\{\bm{0}\}.
\]

Consequently,
\[
\begin{aligned}
\dim(\mathcal{A}_{LR})
&=
\dim(V_L\oplus V_R)
\\
&=
\dim(V_L)+\dim(V_R)
\\
&=
(p+1)+(p+1)
\\
&=
2p+2.
\end{aligned}
\]
\end{proof}

\begin{remark}
Proposition~\ref{prop:interface_space_dimension} does not define an
Active Section. It characterizes the disconnected pairwise space used
during the coupling of two already defined Active Sections.
\end{remark}

\subsection{Null-space criterion for exact coupling}
\label{app:nullspace_exact_coupling}

Collect all disconnected local rational basis functions in
\[
\bm{R}(\xi)
=
\begin{bmatrix}
\bm{R}^{(1)}(\xi)
&
\cdots
&
\bm{R}^{(m)}(\xi)
\end{bmatrix},
\]
and define the stacked local control-point matrix
\[
\bm{P}_{\mathrm{loc}}
=
\begin{bmatrix}
\bm{P}^{(1)}
\\
\vdots
\\
\bm{P}^{(m)}
\end{bmatrix}.
\]

The decomposed geometry is written as
\[
\bm{x}_{\mathrm{dec}}(\xi)
=
\bm{R}(\xi)\bm{P}_{\mathrm{loc}}.
\]

Let \(\bm{C}\) denote the global interface constraint matrix, and let the
full-column-rank matrix \(\bm{T}\) satisfy
\[
\operatorname{range}(\bm{T})
=
\ker(\bm{C}).
\]

Define the reconstructed hybrid basis by
\[
\bm{H}(\xi)
=
\bm{R}(\xi)\bm{T}.
\]

\begin{proposition}[Exact-coupling criterion]
\label{prop:exact_coupling}
Assume that
\[
\bm{C}\bm{P}_{\mathrm{loc}}=\bm{0}
\]
and
\[
\operatorname{range}(\bm{T})
=
\ker(\bm{C}).
\]

Then there exists a hybrid control-point matrix
\(\bm{P}_h\) such that
\[
\bm{P}_{\mathrm{loc}}
=
\bm{T}\bm{P}_h.
\]

Consequently,
\[
\boxed{
\bm{H}(\xi)\bm{P}_h
=
\bm{R}(\xi)\bm{P}_{\mathrm{loc}}
=
\bm{x}_{\mathrm{dec}}(\xi).
}
\]
\end{proposition}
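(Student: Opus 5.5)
The argument is a column-by-column linear-algebra argument. Write $\bm{P}_{\mathrm{loc}}=[\bm{p}_1\;\cdots\;\bm{p}_d]$, where each $\bm{p}_k\in\mathbb{R}^{m_{\mathrm{loc}}}$ collects the $k$-th coordinate of all local control points. In the rational setting, the stacked quantity should be read as the homogeneous control data $\bm{Q}^{(e)}_a$ of Theorem~\ref{thm:exact_local_decomposition}, including the weight coordinate. The reasoning below is identical column by column, so nothing changes in that case.

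The hypothesis $\bm{C}\bm{P}_{\mathrm{loc}}=\bm{0}$ says exactly that $\bm{C}\bm{p}_k=\bm{0}$ for every $k$, that is, $\bm{p}_k\in\ker(\bm{C})$. By the second hypothesis, $\ker(\bm{C})=\operatorname{range}(\bm{T})$. Hence for each $k$ there is a vector $\bm{y}_k\in\mathbb{R}^{n_{\mathrm{hyb}}}$ with $\bm{p}_k=\bm{T}\bm{y}_k$. Setting $\bm{P}_h=[\bm{y}_1\;\cdots\;\bm{y}_d]$ gives $\bm{P}_{\mathrm{loc}}=\bm{T}\bm{P}_h$.

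Since $\bm{T}$ has full column rank, each $\bm{y}_k$ is unique. Explicitly, $\bm{P}_h=(\bm{T}^T\bm{T})^{-1}\bm{T}^T\bm{P}_{\mathrm{loc}}$, which is the least-squares projection used in Result~5 of Example~3. In the present situation this projection has zero residual.

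The identity for the geometry then follows from associativity of matrix multiplication together with the definition $\bm{H}(\xi)=\bm{R}(\xi)\bm{T}$:
\[
\bm{H}(\xi)\bm{P}_h=\bm{R}(\xi)\bm{T}\bm{P}_h=\bm{R}(\xi)\bm{P}_{\mathrm{loc}}=\bm{x}_{\mathrm{dec}}(\xi).
\]
Theorem~\ref{thm:exact_local_decomposition} identifies $\bm{x}_{\mathrm{dec}}$ with the parent curve $\bm{x}$, so the reconstructed hybrid geometry reproduces the parent geometry exactly.

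The main point needing care is the hypothesis $\bm{C}\bm{P}_{\mathrm{loc}}=\bm{0}$ itself, together with its interaction with the rational structure; the proposition assumes it rather than proving it. To justify applying the criterion, I would argue as follows. The local homogeneous data come from the piecewise-polynomial homogeneous map $\widetilde{\bm{x}}$, which is $C^{p-\mu}$ at each interface, with $\mu$ the knot multiplicity. The constraints in $\bm{C}$ impose at most that order of continuity on the homogeneous (B-spline) representation. Consequently every column of the homogeneous data satisfies them.

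If the constraints were instead written for the rational basis functions, this inference would fail in general. I would therefore state explicitly that $\bm{C}$ acts on the homogeneous coefficients, so that rows and columns match dimensionally.
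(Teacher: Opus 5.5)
Your argument is correct and is essentially the paper's own proof: each column $\bm p_\alpha$ of $\bm P_{\mathrm{loc}}$ lies in $\ker(\bm C)=\operatorname{range}(\bm T)$, the preimages are collected into $\bm P_h$, and $\bm H(\xi)\bm P_h=\bm R(\xi)\bm T\bm P_h=\bm R(\xi)\bm P_{\mathrm{loc}}$ follows by associativity; your uniqueness formula $(\bm T^T\bm T)^{-1}\bm T^T\bm P_{\mathrm{loc}}$ matches the paper's later pseudoinverse corollary. The one slip is your side remark that ``nothing changes'' if $\bm P_{\mathrm{loc}}$ is reread as homogeneous data: the existence step is indeed unchanged, but the final identity must then use the B-spline basis, $\bm N(\xi)\bm T\bm P_h=\bm N(\xi)\bm Q_{\mathrm{loc}}=\widetilde{\bm x}_{\mathrm{dec}}(\xi)$ followed by projective division, rather than $\bm R(\xi)\bm P_{\mathrm{loc}}=\bm x_{\mathrm{dec}}(\xi)$, and the paper's own pairing (rational $\bm R$, Cartesian $\bm P_{\mathrm{loc}}$, constraints written for $\bm R$) is equally consistent, because each section reproduces the smooth parent geometry and therefore satisfies every imposed condition whose order does not exceed the parent's smoothness.
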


\begin{proof}
Write the local control-point matrix columnwise as
\[
\bm{P}_{\mathrm{loc}}
=
\begin{bmatrix}
\bm{p}_1
&
\cdots
&
\bm{p}_d
\end{bmatrix}.
\]

The relation
\[
\bm{C}\bm{P}_{\mathrm{loc}}=\bm{0}
\]
implies that
\[
\bm{C}\bm{p}_{\alpha}=\bm{0},
\qquad
\alpha=1,\ldots,d.
\]

Therefore,
\[
\bm{p}_{\alpha}\in\ker(\bm{C})
=
\operatorname{range}(\bm{T}).
\]

For each coordinate direction, there exists a vector
\(\bm{q}_{\alpha}\) such that
\[
\bm{p}_{\alpha}
=
\bm{T}\bm{q}_{\alpha}.
\]

Collecting these vectors as columns of
\[
\bm{P}_h
=
\begin{bmatrix}
\bm{q}_1
&
\cdots
&
\bm{q}_d
\end{bmatrix}
\]
gives
\[
\bm{P}_{\mathrm{loc}}
=
\bm{T}\bm{P}_h.
\]

It follows that
\[
\begin{aligned}
\bm{H}(\xi)\bm{P}_h
&=
\bm{R}(\xi)\bm{T}\bm{P}_h
\\
&=
\bm{R}(\xi)\bm{P}_{\mathrm{loc}}
\\
&=
\bm{x}_{\mathrm{dec}}(\xi).
\end{aligned}
\]
\end{proof}

\subsection{Recursive locality-preserving pairwise coupling}
\label{app:recursive_pairwise}

Let \(\bm{T}_k\) denote the transformation obtained after coupling Active
Sections \(1,\ldots,k\).

For the first interface, let \(\bm{C}_{12}\) denote the corresponding
constraint matrix, and choose \(\bm{T}_2\) such that
\[
\operatorname{range}(\bm{T}_2)
=
\ker(\bm{C}_{12}).
\]

For \(k\ge3\), define the augmented transformation
\[
\bm{T}_{\mathrm{aug},k}
=
\operatorname{blkdiag}
\left(
\bm{T}_{k-1},
\bm{I}_{n_k}
\right),
\]
where \(n_k\) is the number of local basis functions of Active Section
\(k\).

Let \(\bm{C}_k\) represent the new interface constraints between the
already reconstructed block \(1{:}k-1\) and Active Section \(k\).
Define the reduced constraint matrix
\[
\widehat{\bm{C}}_k
=
\bm{C}_k\bm{T}_{\mathrm{aug},k}.
\]

Let \(\bm{T}_{\mathrm{red},k}\) satisfy
\[
\operatorname{range}(\bm{T}_{\mathrm{red},k})
=
\ker(\widehat{\bm{C}}_k),
\]
and update
\[
\bm{T}_k
=
\bm{T}_{\mathrm{aug},k}
\bm{T}_{\mathrm{red},k}.
\]

\begin{theorem}[Exact geometry preservation under recursive pairwise coupling]
\label{thm:recursive_exact_geometry}
Assume that:

\begin{enumerate}[label=(\roman*)]
    \item
    \[
    \operatorname{range}(\bm{T}_2)
    =
    \ker(\bm{C}_{12});
    \]

    \item for every \(k\ge3\),
    \[
    \operatorname{range}(\bm{T}_{\mathrm{red},k})
    =
    \ker(\widehat{\bm{C}}_k);
    \]

    \item the decomposed geometry satisfies every imposed interface
    condition;

    \item no direction required to span any relevant null space is
    removed during the reconstruction.
\end{enumerate}

Then, for every \(k=2,\ldots,m\), there exists a hybrid control-point
matrix \(\bm{P}_{h,k}\) such that
\[
\bm{P}_{1:k}
=
\bm{T}_k\bm{P}_{h,k},
\]
where
\[
\bm{P}_{1:k}
=
\begin{bmatrix}
\bm{P}^{(1)}
\\
\vdots
\\
\bm{P}^{(k)}
\end{bmatrix}.
\]

In particular,
\[
\bm{P}_{\mathrm{loc}}
=
\bm{T}_m\bm{P}_{h,m},
\]
and therefore
\[
\boxed{
\bm{x}_h(\xi)
=
\bm{x}_{\mathrm{dec}}(\xi)
=
\bm{x}_{\mathrm{original}}(\xi).
}
\]
\end{theorem}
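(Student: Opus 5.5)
The plan is induction on $k$, using Proposition~\ref{prop:exact_coupling} as the base case and as the template for each induction step. Throughout, I read hypothesis (iii) as saying that each coordinate column of the stacked control matrix satisfies the interface constraints. For the block $1{:}k$ this means
\[
\bm{C}_{12}\bm{P}_{1:2}=\bm{0},
\qquad
\bm{C}_j\bm{P}_{1:j}=\bm{0}\quad (j\le k),
\]
where each $\bm{C}_j$ is understood as padded with zero columns so that it acts on the first $j$ blocks.

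\textbf{Base case $k=2$.} Since $\bm{C}_{12}\bm{P}_{1:2}=\bm{0}$, every column of $\bm{P}_{1:2}$ lies in $\ker(\bm{C}_{12})$. By hypothesis (i) this kernel equals $\operatorname{range}(\bm{T}_2)$. Arguing column by column exactly as in the proof of Proposition~\ref{prop:exact_coupling}, there is a matrix $\bm{P}_{h,2}$ with $\bm{P}_{1:2}=\bm{T}_2\bm{P}_{h,2}$.

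\textbf{Induction step.} Assume $\bm{P}_{1:k-1}=\bm{T}_{k-1}\bm{P}_{h,k-1}$. The first step is to stack the new block:
\[
\bm{P}_{1:k}
=
\begin{bmatrix}\bm{P}_{1:k-1}\\ \bm{P}^{(k)}\end{bmatrix}
=
\operatorname{blkdiag}(\bm{T}_{k-1},\bm{I}_{n_k})
\begin{bmatrix}\bm{P}_{h,k-1}\\ \bm{P}^{(k)}\end{bmatrix}
=
\bm{T}_{\mathrm{aug},k}\,\bm{Y}_k ,
\]
which defines $\bm{Y}_k$.

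The second step uses hypothesis (iii) for the new interface:
\[
\bm{0}=\bm{C}_k\bm{P}_{1:k}=\bm{C}_k\bm{T}_{\mathrm{aug},k}\bm{Y}_k=\widehat{\bm{C}}_k\bm{Y}_k .
\]
So every column of $\bm{Y}_k$ lies in $\ker(\widehat{\bm{C}}_k)$, which equals $\operatorname{range}(\bm{T}_{\mathrm{red},k})$ by hypothesis (ii).

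The third step writes $\bm{Y}_k=\bm{T}_{\mathrm{red},k}\bm{P}_{h,k}$ for some $\bm{P}_{h,k}$. This gives
\[
\bm{P}_{1:k}=\bm{T}_{\mathrm{aug},k}\bm{T}_{\mathrm{red},k}\bm{P}_{h,k}=\bm{T}_k\bm{P}_{h,k},
\]
which closes the induction.

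\textbf{Conclusion.} Setting $k=m$ gives $\bm{P}_{\mathrm{loc}}=\bm{T}_m\bm{P}_{h,m}$. Then
\[
\bm{x}_h=\bm{R}\,\bm{T}_m\bm{P}_{h,m}=\bm{R}\,\bm{P}_{\mathrm{loc}}=\bm{x}_{\mathrm{dec}} .
\]
Theorem~\ref{thm:exact_local_decomposition} (together with Corollary~\ref{cor:degree_elevation_before_extraction} for degree-elevated sections) identifies $\bm{x}_{\mathrm{dec}}$ with $\bm{x}_{\mathrm{original}}$. For NURBS, I would carry out the whole argument on homogeneous control points $\bm{Q}$, including the weight column, so that the identity holds before projective division and the rational mapping is recovered afterwards.

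\textbf{Main obstacle.} The difficult point is interpreting hypothesis (iii) and the padding of $\bm{C}_k$ consistently. The constraints must be linear in the stacked homogeneous coefficients, and $\bm{C}_k$ must act only on the first $k$ blocks. I also plan to check that the recursion reproduces the global null space, so that $\operatorname{range}(\bm{T}_m)=\ker(\bm{C}_{\mathrm{tot}})$; this follows by the same induction with the inclusion argued in both directions.

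A second caveat concerns the constructed $\bm{T}_{\mathrm{red},k}$. In the active/frozen implementation it is produced by LP plus NNLS scaling. If some $\gamma_j=0$, a direction is lost and the equality of range and kernel in (ii) would fail. Hypothesis (iv) is exactly what excludes this, so I will invoke it there rather than prove it.
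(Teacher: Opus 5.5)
Your proof is correct and matches the paper's argument step for step. It is the same induction, with your $\bm{Y}_k$ playing the role of the paper's $\bm{P}_{\mathrm{aug},k}$; it uses (iii) to get $\widehat{\bm{C}}_k\bm{Y}_k=\bm{0}$ and (ii) to factor through $\bm{T}_{\mathrm{red},k}$, and closes with the exact local decomposition theorem. Your side remarks (homogeneous coordinates, the identity $\operatorname{range}(\bm{T}_m)=\ker(\bm{C}_{\mathrm{tot}})$, and reading (iv) as what secures (ii) for the LP/NNLS construction) are sound but not needed, since the argument itself only uses (i)--(iii).
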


\begin{proof}
The proof proceeds by induction.

For the first pair of Active Sections, geometric consistency gives
\[
\bm{C}_{12}\bm{P}_{1:2}=\bm{0}.
\]

Thus,
\[
\bm{P}_{1:2}
\in
\ker(\bm{C}_{12})
=
\operatorname{range}(\bm{T}_2).
\]

Therefore, there exists a matrix \(\bm{P}_{h,2}\) such that
\[
\bm{P}_{1:2}
=
\bm{T}_2\bm{P}_{h,2}.
\]

Assume now that the statement holds after coupling the first \(k-1\)
Active Sections:
\[
\bm{P}_{1:k-1}
=
\bm{T}_{k-1}\bm{P}_{h,k-1}.
\]

Introduce the augmented coefficient matrix
\[
\bm{P}_{\mathrm{aug},k}
=
\begin{bmatrix}
\bm{P}_{h,k-1}
\\
\bm{P}^{(k)}
\end{bmatrix}.
\]

By the block-diagonal form of
\(\bm{T}_{\mathrm{aug},k}\),
\[
\bm{T}_{\mathrm{aug},k}
\bm{P}_{\mathrm{aug},k}
=
\begin{bmatrix}
\bm{T}_{k-1}\bm{P}_{h,k-1}
\\
\bm{P}^{(k)}
\end{bmatrix}
=
\bm{P}_{1:k}.
\]

Thus, augmentation preserves exact representability.

Since the decomposed geometry satisfies the new interface condition,
\[
\bm{C}_k\bm{P}_{1:k}=\bm{0}.
\]

Substituting the augmented representation yields
\[
\begin{aligned}
\widehat{\bm{C}}_k
\bm{P}_{\mathrm{aug},k}
&=
\bm{C}_k
\bm{T}_{\mathrm{aug},k}
\bm{P}_{\mathrm{aug},k}
\\
&=
\bm{C}_k\bm{P}_{1:k}
\\
&=
\bm{0}.
\end{aligned}
\]

Hence,
\[
\bm{P}_{\mathrm{aug},k}
\in
\ker(\widehat{\bm{C}}_k)
=
\operatorname{range}(\bm{T}_{\mathrm{red},k}).
\]

Therefore, there exists a matrix \(\bm{P}_{h,k}\) such that
\[
\bm{P}_{\mathrm{aug},k}
=
\bm{T}_{\mathrm{red},k}
\bm{P}_{h,k}.
\]

Consequently,
\[
\begin{aligned}
\bm{P}_{1:k}
&=
\bm{T}_{\mathrm{aug},k}
\bm{P}_{\mathrm{aug},k}
\\
&=
\bm{T}_{\mathrm{aug},k}
\bm{T}_{\mathrm{red},k}
\bm{P}_{h,k}
\\
&=
\bm{T}_k\bm{P}_{h,k}.
\end{aligned}
\]

The induction is complete. Combining this result with
Theorem~\ref{thm:exact_local_decomposition} gives
\[
\bm{x}_h
=
\bm{x}_{\mathrm{dec}}
=
\bm{x}_{\mathrm{original}}.
\]
\end{proof}

\subsection{Active and frozen coordinates}
\label{app:active_frozen}

During the coupling of the reconstructed block with a new Active
Section, only the coordinates participating in the newly introduced
interface constraints must be recombined.

After a suitable coordinate permutation, write
\[
\widehat{\bm{C}}_k
=
\begin{bmatrix}
\bm{0}
&
\bm{C}_{\mathrm{act}}
\end{bmatrix},
\]
where the zero block corresponds to frozen coordinates and
\(\bm{C}_{\mathrm{act}}\) acts only on active interface coordinates.

\begin{proposition}[Active--frozen null-space decomposition]
\label{prop:active_frozen}
If
\[
\operatorname{range}(\bm{T}_{\mathrm{act}})
=
\ker(\bm{C}_{\mathrm{act}}),
\]
then
\[
\boxed{
\ker(\widehat{\bm{C}}_k)
=
\mathbb{R}^{n_f}
\oplus
\ker(\bm{C}_{\mathrm{act}}),
}
\]
where \(n_f\) is the number of frozen coordinates.

Up to the coordinate permutation, a corresponding reconstruction matrix
is
\[
\bm{T}_{\mathrm{red},k}
=
\begin{bmatrix}
\bm{I}_{n_f} & \bm{0}
\\
\bm{0} & \bm{T}_{\mathrm{act}}
\end{bmatrix}.
\]

Thus, frozen coordinates pass unchanged, while only the coordinates
affected by the new interface constraints are reconstructed.
\end{proposition}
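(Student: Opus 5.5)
The proof is a direct linear-algebra verification in two steps: a double inclusion for the kernel identity, then a range computation for the block-diagonal reconstruction matrix. First fix the coordinate permutation $\Pi$ so that $\widehat{\bm{C}}_k\Pi = [\,\bm{0}\;\;\bm{C}_{\mathrm{act}}\,]$. Write a vector in permuted coordinates as $\bm{y}=[\bm{y}_f;\bm{y}_a]$, with $\bm{y}_f\in\mathbb{R}^{n_f}$ and $\bm{y}_a\in\mathbb{R}^{n_a}$. Since permutations are invertible, $\ker(\widehat{\bm{C}}_k)=\Pi\,\ker([\,\bm{0}\;\;\bm{C}_{\mathrm{act}}\,])$, so it suffices to work with the permuted matrix throughout.

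\emph{Kernel identity.} The key observation is that $[\,\bm{0}\;\;\bm{C}_{\mathrm{act}}\,]\bm{y}=\bm{0}\cdot\bm{y}_f+\bm{C}_{\mathrm{act}}\bm{y}_a=\bm{C}_{\mathrm{act}}\bm{y}_a$.
\begin{itemize}
\item For ``$\subseteq$'': if $\bm{y}$ is in the kernel, then $\bm{C}_{\mathrm{act}}\bm{y}_a=\bm{0}$, and $\bm{y}_f$ is arbitrary. Hence $\bm{y}\in\mathbb{R}^{n_f}\times\ker(\bm{C}_{\mathrm{act}})$.
\item For ``$\supseteq$'': any $[\bm{y}_f;\bm{y}_a]$ with $\bm{C}_{\mathrm{act}}\bm{y}_a=\bm{0}$ is annihilated by the same identity.
\end{itemize}
The sum is direct because the two summands live on disjoint coordinate blocks, so their intersection is $\{\bm{0}\}$. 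This establishes the boxed identity, interpreted as an internal direct sum of coordinate subspaces after permutation.

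\emph{Reconstruction matrix.} Set $\bm{T}_{\mathrm{red},k}=\Pi\,\operatorname{blkdiag}(\bm{I}_{n_f},\bm{T}_{\mathrm{act}})$. For the range:
\begin{itemize}
\item The range of a block-diagonal matrix is the product of the ranges of its blocks, so the range of $\operatorname{blkdiag}(\bm{I}_{n_f},\bm{T}_{\mathrm{act}})$ is $\mathbb{R}^{n_f}\times\operatorname{range}(\bm{T}_{\mathrm{act}})$.
\item By hypothesis $\operatorname{range}(\bm{T}_{\mathrm{act}})=\ker(\bm{C}_{\mathrm{act}})$, so this equals $\mathbb{R}^{n_f}\times\ker(\bm{C}_{\mathrm{act}})$, which by the first step is the permuted kernel.
\end{itemize}
Hence $\operatorname{range}(\bm{T}_{\mathrm{red},k})=\ker(\widehat{\bm{C}}_k)$, which is exactly hypothesis (ii) of Theorem~\ref{thm:recursive_exact_geometry}. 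If $\bm{T}_{\mathrm{act}}$ has full column rank, then so does $\bm{T}_{\mathrm{red},k}$, and the dimension count $n_f+\dim\ker(\bm{C}_{\mathrm{act}})$ matches.

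\emph{Frozen columns pass unchanged.} The block $\bm{I}_{n_f}$ in $\bm{T}_{\mathrm{red},k}$ means that $\bm{T}_{\mathrm{aug},k}\bm{T}_{\mathrm{red},k}$ keeps the frozen columns of $\bm{T}_{\mathrm{aug},k}$ verbatim. Only the active columns are recombined, through $\bm{T}_{\mathrm{aug},k}(:,\mathcal{A})\,\bm{T}_{\mathrm{act}}$.

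There is no genuine obstacle here. The one point needing care is bookkeeping: the frozen set must be defined by \emph{exactly} zero columns of $\widehat{\bm{C}}_k$. With the tolerance $\varepsilon$ used in Algorithm~\ref{alg:hybrid_reconstruction}, the identity holds only up to a perturbation of size $\varepsilon$, so I would state the proposition for the exact split and note the tolerance separately.
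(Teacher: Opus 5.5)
Your proposal is correct and follows essentially the same route as the paper: partition into frozen and active blocks, observe that $\widehat{\bm{C}}_k\bm{z}=\bm{C}_{\mathrm{act}}\bm{z}_a$, read off the kernel, and conclude from $\operatorname{range}(\bm{T}_{\mathrm{act}})=\ker(\bm{C}_{\mathrm{act}})$ that the block-diagonal matrix spans it. The explicit treatment of the permutation $\Pi$, the range computation for $\operatorname{blkdiag}(\bm{I}_{n_f},\bm{T}_{\mathrm{act}})$, and the caveat that the identity is exact only when the frozen columns are exactly zero (rather than below $\varepsilon$) are careful points that the paper's proof leaves implicit.
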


\begin{proof}
Let
\[
\bm{z}
=
\begin{bmatrix}
\bm{z}_f
\\
\bm{z}_a
\end{bmatrix}
\]
be partitioned into frozen and active components.

Then
\[
\widehat{\bm{C}}_k\bm{z}
=
\begin{bmatrix}
\bm{0}
&
\bm{C}_{\mathrm{act}}
\end{bmatrix}
\begin{bmatrix}
\bm{z}_f
\\
\bm{z}_a
\end{bmatrix}
=
\bm{C}_{\mathrm{act}}\bm{z}_a.
\]

Therefore,
\[
\widehat{\bm{C}}_k\bm{z}=\bm{0}
\]
if and only if
\[
\bm{C}_{\mathrm{act}}\bm{z}_a=\bm{0}.
\]

The frozen component \(\bm{z}_f\) is arbitrary, whereas
\[
\bm{z}_a\in\ker(\bm{C}_{\mathrm{act}}).
\]

Hence,
\[
\ker(\widehat{\bm{C}}_k)
=
\mathbb{R}^{n_f}
\oplus
\ker(\bm{C}_{\mathrm{act}}).
\]

Since
\[
\operatorname{range}(\bm{T}_{\mathrm{act}})
=
\ker(\bm{C}_{\mathrm{act}}),
\]
the stated block-diagonal transformation spans the complete reduced null
space.
\end{proof}

\subsection{Recovery of hybrid control variables}
\label{app:pseudoinverse_recovery}

\begin{corollary}[Recovery by the Moore--Penrose pseudoinverse]
\label{cor:pseudoinverse}
Once
\[
\bm{P}_{\mathrm{loc}}
\in
\operatorname{range}(\bm{T}_m)
\]
has been established, the hybrid control-point matrix may be computed as
\[
\bm{P}_h
=
\bm{T}_m^{+}\bm{P}_{\mathrm{loc}},
\]
where \(\bm{T}_m^{+}\) denotes the Moore--Penrose pseudoinverse.

Then
\[
\bm{T}_m\bm{T}_m^{+}\bm{P}_{\mathrm{loc}}
=
\bm{P}_{\mathrm{loc}},
\]
and the reconstructed geometry is exact.
\end{corollary}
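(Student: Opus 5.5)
The plan is to reduce the corollary to the standard fact that, for a matrix of full column rank, the Moore--Penrose pseudoinverse is a left inverse and $\bm{T}_m\bm{T}_m^{+}$ is the orthogonal projector onto $\operatorname{range}(\bm{T}_m)$.

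\textbf{Step 1: existence of a representation.} The hypothesis $\bm{P}_{\mathrm{loc}}\in\operatorname{range}(\bm{T}_m)$ is supplied by Theorem~\ref{thm:recursive_exact_geometry}. It gives a matrix $\bm{P}_{h,m}$ with $\bm{P}_{\mathrm{loc}}=\bm{T}_m\bm{P}_{h,m}$, understood columnwise for each coordinate direction, as in the proof of Proposition~\ref{prop:exact_coupling}.

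\textbf{Step 2: the projection identity.} I will recall that $\bm{T}_m\bm{T}_m^{+}$ is the orthogonal projector onto $\operatorname{range}(\bm{T}_m)$. This follows directly from the Penrose conditions $\bm{T}\bm{T}^{+}\bm{T}=\bm{T}$ and $(\bm{T}\bm{T}^{+})^{T}=\bm{T}\bm{T}^{+}$. Then
\[
\bm{T}_m\bm{T}_m^{+}\bm{P}_{\mathrm{loc}}=\bm{T}_m\bm{T}_m^{+}\bm{T}_m\bm{P}_{h,m}=\bm{T}_m\bm{P}_{h,m}=\bm{P}_{\mathrm{loc}}.
\]
This identity needs only the first Penrose condition and does not require full column rank.

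\textbf{Step 3: exact geometry.} With $\bm{P}_h=\bm{T}_m^{+}\bm{P}_{\mathrm{loc}}$, the hybrid geometry becomes
\[
\bm{H}(\xi)\bm{P}_h=\bm{R}(\xi)\bm{T}_m\bm{T}_m^{+}\bm{P}_{\mathrm{loc}}=\bm{R}(\xi)\bm{P}_{\mathrm{loc}}=\bm{x}_{\mathrm{dec}}(\xi).
\]
Combining this with Theorem~\ref{thm:exact_local_decomposition} gives equality with $\bm{x}(\xi)$.

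The only subtle point concerns rational geometry: the identity should be applied to the homogeneous control points $\bm{Q}$, so that the weights are reconstructed as well, and the projective division is carried out only afterwards. If $\bm{T}_m$ has full column rank, as assumed in Section~\ref{app:nullspace_exact_coupling}, I would also note that $\bm{T}_m^{+}=(\bm{T}_m^{T}\bm{T}_m)^{-1}\bm{T}_m^{T}$. In that case $\bm{P}_h$ is the unique solution, so the recovered hybrid control variables coincide with $\bm{P}_{h,m}$.

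I expect no real obstacle. The one thing to check is that the range hypothesis genuinely holds, and the preceding theorem supplies it.
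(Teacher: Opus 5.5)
Your proof is correct and follows the paper's argument: both rest on $\bm{T}_m\bm{T}_m^{+}$ being the orthogonal projector onto $\operatorname{range}(\bm{T}_m)$, which you make concrete through $\bm{T}\bm{T}^{+}\bm{T}=\bm{T}$ applied to $\bm{P}_{\mathrm{loc}}=\bm{T}_m\bm{P}_{h,m}$. Your explicit geometry step $\bm{H}(\xi)\bm{P}_h=\bm{R}(\xi)\bm{P}_{\mathrm{loc}}$ and your full-column-rank uniqueness remark fill in points that the paper leaves implicit by relying on Proposition~\ref{prop:exact_coupling}.
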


\begin{proof}
The matrix
\[
\bm{T}_m\bm{T}_m^{+}
\]
is the orthogonal projector onto
\(\operatorname{range}(\bm{T}_m)\).

Since
\[
\bm{P}_{\mathrm{loc}}
\in
\operatorname{range}(\bm{T}_m),
\]
projection leaves every column of
\(\bm{P}_{\mathrm{loc}}\) unchanged. Therefore,
\[
\bm{T}_m\bm{T}_m^{+}\bm{P}_{\mathrm{loc}}
=
\bm{P}_{\mathrm{loc}}.
\]
\end{proof}

\begin{remark}
The pseudoinverse does not establish exact geometry preservation by
itself. It recovers the exact hybrid coefficients only after the range
inclusion
\[
\bm{P}_{\mathrm{loc}}
\in
\operatorname{range}(\bm{T}_m)
\]
has been proved.
\end{remark}

\begin{remark}
Positivity, partition of unity, locality, linear independence, and
conditioning are desirable properties of the reconstructed basis.
Exact geometry preservation, however, follows specifically from the
null-space and range relations established above.
\end{remark}

\qquad


\section{  Proof of the Minimal Local Knot Vector Property    }
\label{app:B}


This appendix establishes the mathematical foundation of the local
\(2p+2\) knot vector associated with every Active Section.
The result is independent of the reconstruction procedure and follows
directly from the local support properties of B-spline basis functions.

\setcounter{subsection}{0}
\renewcommand{\thesubsection}{B.\arabic{subsection}}


\subsection{Active Basis over a Single Knot Span}
\label{app:B_active_basis}

Let
\[
\Xi=
\{\xi_0,\xi_1,\ldots,\xi_{n+p+1}\}
\]
be the knot vector of a univariate B-spline or NURBS representation of
degree \(p\).

Consider a nonzero knot span

\[
[\xi_k,\xi_{k+1}].
\]

It is well known that exactly \(p+1\) basis functions are nonzero over
this interval,

\[
N_{k-p,p},
N_{k-p+1,p},
\ldots,
N_{k,p}.
\]

Consequently, every approximation over the selected knot span depends
exclusively on these \(p+1\) basis functions.

\subsection{Local knot vector of an individual basis function}

Each degree-\(p\) B-spline basis function

\[
N_{i,p}
\]

is completely determined by its local knot vector

\[
[\xi_i,\xi_{i+1},\ldots,\xi_{i+p+1}],
\]

which contains exactly

\[
p+2
\]

knot entries.

\subsection{Minimal knot vector of a knot span}

To evaluate all basis functions active on the knot span
\([\xi_k,\xi_{k+1}]\),
one must retain the union of the local knot vectors associated with

\[
N_{k-p,p},
\ldots,
N_{k,p}.
\]

Therefore,

\[
\bigcup_{i=k-p}^{k}
[\xi_i,\ldots,\xi_{i+p+1}]
=
[\xi_{k-p},\ldots,\xi_{k+p+1}].
\]

The number of knot entries is

\[
(k+p+1)-(k-p)+1
=
2p+2.
\]

Hence the local spline description associated with a single knot span
requires exactly

\[
\boxed{2p+2}
\]

knot entries.

\subsection{Minimality}

The obtained knot vector is minimal.

Indeed, removing any knot from

\[
[\xi_{k-p},\ldots,\xi_{k+p+1}]
\]

eliminates one endpoint of the local knot vector of at least one active
basis function. Consequently, at least one of the \(p+1\) basis
functions can no longer be evaluated correctly over the selected knot
span.

Therefore no shorter knot vector contains all information required for
the exact local representation.

\subsection{Relation to the Active Section}

The proposed decomposition associates one Active Section with every
nonzero knot span.

Since each Active Section contains exactly the \(p+1\) basis functions
active on its knot span, its local spline description is completely
determined by the minimal local knot vector

\[
[\xi_{k-p},\ldots,\xi_{k+p+1}],
\]

which contains precisely

\[
2p+2
\]

knot entries.

This property is independent of the reconstruction algorithm and depends
only on the local support of B-spline basis functions.

\begin{remark}
The quantity \(2p+2\) characterizes the minimal local knot vector of a
single Active Section.
This should not be confused with the larger knot neighborhood obtained
when two adjacent Active Sections are considered simultaneously during a
pairwise reconstruction step. The latter is a different mathematical
object and is not used in the definition of an Active Section.
\end{remark}

\end{document}